\renewcommand\baselinestretch{1.4}
\numberwithin{equation}{section}
\newcommand{\bean}{\begin{eqnarray*}}
\newcommand{\eean}{\end{eqnarray*}}
\newcommand{\fref}[1]{Figure~\ref{#1}}
\newcommand{\capt}[3]{\parbox{#1}{\renewcommand{\baselinestretch}{1.2}
                                                           \caption{\label{#2}\small\it #3}}}
\newcommand{\ind}{\mathop{{\rm ind}}}
\newcommand{\IP}{\mathbb{P}}
\newcommand{\cO}{{\cal O}}
\newcommand{\cN}{{\cal N}}
\newcommand{\cM}{{\cal M}}
\newcommand{\cA}{{\cal A}}
\newcommand{\cB}{{\cal B}}
\newcommand{\cC}{{\cal C}}
\newcommand{\cV}{{\cal V}}
\def\cjn1{{\cA, \cC^*\otimes \wedge^j \cN^*}}
\def\bjn1{{\cA, \cB^*\otimes \wedge^j \cN^*}}
\def\vjn1{{\cA, \cV^*\otimes \wedge^j \cN^*}}
\def\cjn2{{\cA, \cC\otimes \wedge^j \cN^*}}
\def\bjn2{{\cA, \cB\otimes \wedge^j \cN^*}}
\def\vjn2{{\cA, \cV\otimes \wedge^j \cN^*}}
\newcolumntype{C}{>{$}c<{$}} 		% math-mode version of "c" column type
\newcolumntype{L}{>{$}l<{$}} 		% math-mode version of "c" column type
\newcommand{\floor}[1]{\left\lfloor{#1}\right\rfloor}
\newcommand{\ceil}[1]{\left\lceil{#1}\right\rceil}
\newcommand{\mc}{\mathcal}
\newcommand{\mbb}{\mathbb}
\newcommand{\dd}{\mathrm{d}}
\newcommand{\smlhdg}[1]{\bigskip\noindent\underline{\textbf{#1}}\medskip}
\newcommand{\smlhdgnogap}[1]{\noindent\underline{\textbf{#1}}\medskip}
\newcommand{\be}{\begin{equation}}
\newcommand{\ee}{\end{equation}}
\newcommand*{\nnbe}{\begin{equation}}
\newcommand*{\nnee}{\end{equation}}
\newcommand{\bea}{\begin{eqnarray}}
\newcommand{\eea}{\end{eqnarray}}
\newcommand{\ba}{\begin{align}}
\newcommand{\ea}{\end{align}}
\newcommand{\bi}{\begin{itemize}}
\newcommand{\ei}{\end{itemize}}
\newtheorem{thm}{Theorem}[section]
\newtheorem{defn}[thm]{Definition}
\newtheorem{prp}[thm]{Proposition}
\newtheorem{crl}[thm]{Corollary}
\newtheorem{lma}[thm]{Lemma}
\newtheorem{rmk}[thm]{Remark}
\newtheorem*{nndefn}{Definition}
\newtheorem*{nnthm}{Theorem}
\newtheorem*{nnprp}{Proposition}
\newsavebox{\overlongequation}
\newenvironment{longequation}
 {\begin{displaymath}\begin{lrbox}{\overlongequation}$\displaystyle}
 {$\end{lrbox}\makebox[0pt]{\usebox{\overlongequation}}\end{displaymath}}
\newcommand{\moricn}{\mathcal{M}}	% mori cone
\newcommand{\nefcn}{\mathcal{N}}		% nef cone
\newcommand{\dual}[1]{#1^*}			% dual for a line bundle
\newcommand{\divcls}[1]{\left[{#1}\right]}			% linear equivalence class of a divisor
\newcommand{\cls}[1]{\left|{#1}\right|}			% complete linear system of a divisor
\newcommand{\numeq}{\equiv_\mathrm{num.}}		% numerically equivalent (divisors)
\newcommand{\lineq}{\equiv}						% linearly equivalent (divisors)
\newcommand{\surf}{S}				% a surface
\newcommand{\indiv}{D^-}							% irreducible negative self-intersection divisor
\newcommand{\base}[1]{B_{#1}}						% the base of a fibration
\newcommand{\lb}{L} 								% line bundle on the three-fold
\newcommand{\lbb}{\mc{L}}						% line bundle on the base of a fibration
\newcommand{\fibprj}{\pi}							% projection of a fibration
\newcommand{\fibsec}{\sigma}						% section of a fibration
\newcommand{\fibcls}{F}							% class of the fiber
\newcommand{\divtf}{\mc{D}}						% divisor on the three-fold
\newcommand{\divb}{D}							% divisor on the base of a fibration
\newcommand{\crvtf}{\mc{C}}						% curve on the three-fold
\newcommand{\cy}[1]{X_{#1}}						% Calabi-Yau n-fold
\newcommand{\phifl}{\phi_Z^\downarrow}
\newcommand{\phicl}{\phi_Z^\uparrow}
\begin{document}

\title{{\Large \bf$~$\\[-21pt]
Cohomology Chambers on Complex Surfaces and \\ Elliptically Fibered Calabi-Yau Three-folds %\\ on Calabi-Yau Three-folds \\ %and Automated Learning\\ [9pt]
}}

\vspace{3cm}

\author{
Callum R.~Brodie${}^{1,2,4}$ and Andrei Constantin${}^{1,2,3,4}$ \\
}
\date{}
\maketitle
\thispagestyle{empty}
\begin{center} {        
       {\it 
       ${}^1$Rudolf Peierls Centre for Theoretical Physics, University of Oxford, \\ 
       Clarendon Laboratory, Parks Rd, Oxford OX1 3PU, United Kingdom\\[0.3cm]}
       }
      
         ${}^2$ {\it 
         Brasenose College, University of Oxford, \\ Radcliffe Square, OX1 4AJ, United Kingdom 
	} 
       
       ${}^3$ {\it 
       Mansfield College, University of Oxford,\\ Mansfield Road, OX1 3TF, United Kingdom
	}

	${}^4$ {\it 
	Pembroke College, University of Oxford,\\ St.~Aldates, OX1 1DW,  United Kingdom
	}
\end{center}

\vspace{12pt}
\abstract
\noindent
We determine several classes of smooth complex projective surfaces on which Zariski decomposition can be combined with vanishing theorems to yield cohomology formulae for all line bundles. The obtained formulae express cohomologies in terms of divisor class intersections, and are adapted to the decomposition of the effective cone into Zariski chambers. In particular, we show this occurs on generalised del Pezzo surfaces, toric surfaces, and K3 surfaces. In the second part we use these surface results to derive formulae for all line bundle cohomology on a simple class of elliptically fibered Calabi-Yau three-folds. Computing such quantities is a crucial step in deriving the massless spectrum in string compactifications.

\vspace{100pt}
\noindent\rule{4cm}{0.4pt}

\noindent callum.brodie@pmb.ox.ac.uk, andrei.constantin@physics.ox.ac.uk

\newpage

\tableofcontents

\newpage
\section{Introduction and Summary}

Vector bundle cohomology is an essential tool for string theory, being related to the degrees of freedom (particles) present in the low energy field theory limit. 
However, its computation is notoriously difficult and has been a major obstacle for progress in string phenomenology from its very beginning. In the last decade several computer implementations have been written to cope with this technical hurdle, automating laborious calculations that would otherwise be impossible to carry out in any practicable time \cite{cicypackage, cohomCalg:Implementation, cicytoolkit}. These codes primarily deal with holomorphic line bundle cohomology on complex manifolds, since line bundles feature in many important contexts in string theory and moreover can be used as building blocks for higher rank vector bundles. Though extremely useful for practical purposes, such implementations remain limited in two respects. First, the algorithms become increasingly slow and eventually unworkable for manifolds with a large Picard number (say, greater than $5$, for a rough estimate) as well as for line bundles with large first Chern class integers. For string model building this imposes a significant limitation in the exploration of the string landscape of solutions. Second, all algorithmic computations of cohomology give very little insight into the results and provide virtually no information about the cohomology of other line bundles, thus rendering the string model building effort unmanageable, ultimately a `trial and error' feat. 

A novel approach to the problem has recently emerged through the observation that for many classes of complex manifolds of interest in string theory, line bundle cohomology is described by simple, often locally polynomial, functions \cite{Buchbinder:2013dna, Constantin:2018otr, Constantin:2018hvl}. To date, this observation has been checked to hold true for the zeroth as well as all higher cohomologies on several classes of two and three-dimensional complex manifolds which include certain complete intersections in products of projective spaces, toric varieties and hypersurfaces therein, all del Pezzo and all Hirzebruch surfaces \cite{Constantin:2018hvl, Klaewer:2018sfl, Larfors:2019sie, Brodie:2019ozt, Brodie:2019dfx, Brodie:2019pnz}. The existence of simple closed-form expressions for cohomology is an interesting mathematical question in itself. For Physics, these provide an unexpected shortcut to incredibly hard computations needed for connecting String Theory to Particle Physics, making feasible the implementation of what is known in string model building as the `bottom-up approach'. This involves working out the topology and geometry of the compactification space by starting from physical data, such as the number of quark and lepton families, and the number of vector-like matter states, which get encoded in the compactification data as dimensions of certain vector bundle cohomologies. The context in which cohomology formulae are, perhaps, the most relevant for attempting a bottom-up string model building approach is that of heterotic string compactifications on smooth Calabi-Yau three-folds with abelian internal fluxes described by sums of line bundles (see for instance Refs.~\cite{Blumenhagen:2005ga, Blumenhagen:2006ux, Blumenhagen:2006wj, Anderson:2011ns, Anderson:2012yf, Anderson:2013xka, He:2013ofa, Anderson:2014hia}).  

The existence of cohomology formulae has been discovered through a combination of direct observation \cite{Constantin:2018otr, Buchbinder:2013dna, Constantin:2018hvl, Larfors:2019sie, Brodie:2019ozt} and machine learning \cite{Klaewer:2018sfl, Brodie:2019dfx} of line bundle cohomology dimensions computed algorithmically. A common feature of these formulae is that they involve a decomposition of the Picard group into disjoint regions, in each of which the cohomology function is polynomial or very close to polynomial. This pattern has been observed for the zeroth as well as all higher cohomologies, with a different region structure emerging for each type of cohomology. The number of regions often increases dramatically with the Picard number of the space. 
The origin of these formulae has been elucidated for certain complex surfaces in Refs.~\cite{Brodie:2019ozt, Brodie:2019pnz}.

\subsection{Simple example}
\label{sec:int_ex}

A central aim in the present paper is to give a general understanding of the appearance of functions describing the zeroth cohomology of line bundles on certain classes of non-singular complex projective surfaces. In dimension two it suffices to understand the zeroth cohomology function since this implies the existence of formulae for the first and second cohomologies by Serre duality and the Hirzebruch-Riemann-Roch theorem. 
We begin with a simple example. 

Consider a del Pezzo surface of degree $7$, obtained by blowing-up $\IP^2$ at two generic points, denoted as ${\rm dP}_2$ in the Physics literature. 
Within the cone of effective line bundles (divisor classes), one finds \cite{Brodie:2019pnz} that the zeroth cohomology is given by the value of a piecewise polynomial function. Outside of the effective cone the zeroth cohomology is trivial. \fref{dp2_piclat_numb_shifts} depicts the chambers, within each of which a single polynomial describes the zeroth cohomology. Region 0 corresponds to the nef cone, its interior being the K\"ahler cone.

\begin{figure}[ht]
  \begin{center}
  \includegraphics[scale=0.5]{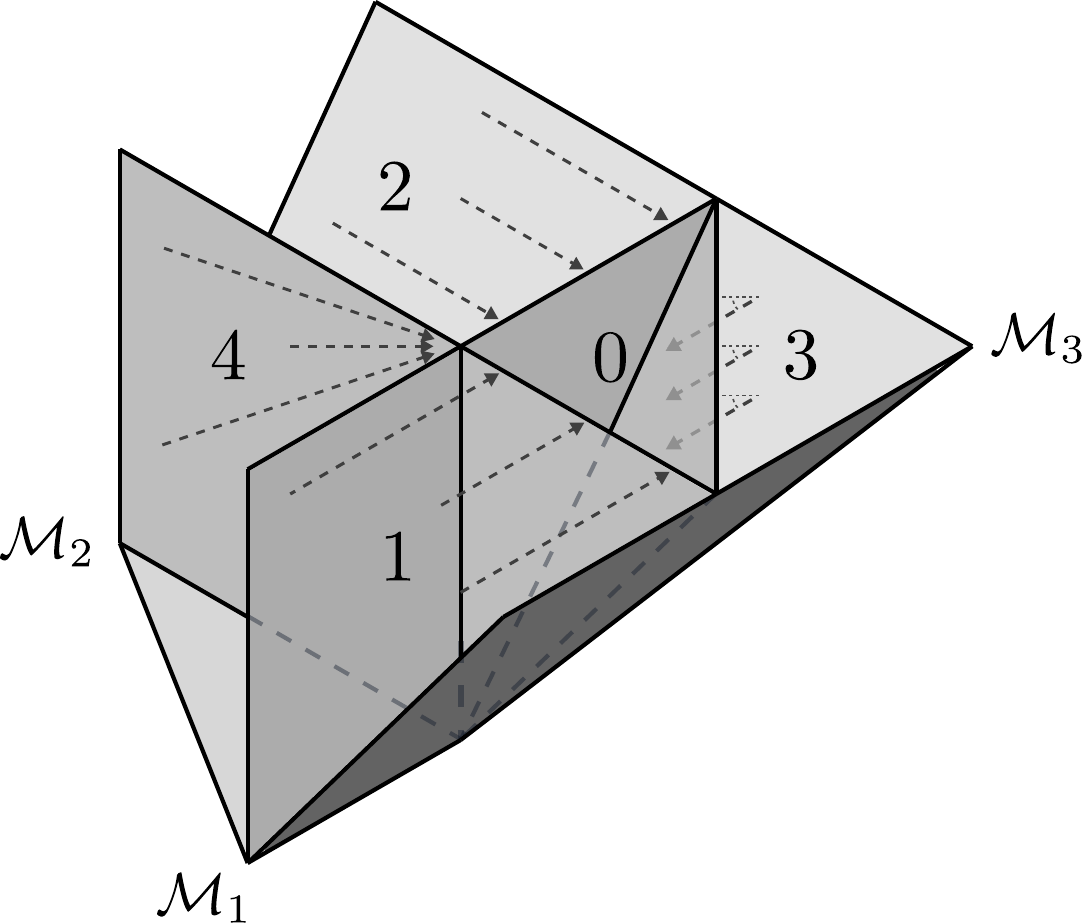}
  \end{center}
\caption{\itshape\small Zeroth cohomology chamber structure of the effective cone of dP$_2$.}
  \label{dp2_piclat_numb_shifts}
\end{figure}

The Picard lattice of dP$_2$ is spanned by the hyperplane class $H$ of $\IP^2$ and the two exceptional divisor classes $E_1$ and $E_2$ resulting from the two blow-ups. The effective cone (Mori cone) is generated by $\cM_1 = E_1$, $\cM_2 = E_2$, and $\cM_3 = H-E_1-E_2$. All three generators are rigid, satisfying $\cM_i^2=-1$.

In the nef cone, a vanishing theorem due to Kawamata and Viehweg implies that all higher cohomologies are trivial and hence the zeroth cohomology is given by the index (the Euler characteristic), which is a polynomial function of degree 2. In the other regions, it turns out that the zeroth cohomology is given by the index of a shifted divisor. More explicitly, for an effective line bundle associated with a divisor class $D$, one has the following locally polynomial formula.
\begin{equation}
\begin{tabular}{ L | L}
 		&~ h^0\big(\mathrm{dP}_2,\mc{O}_{\mathrm{dP}_2}(D)\big) \\
\hline
\textrm{Region 0} &~ \ind\big({\rm dP}_2, \cO_{\mathrm{dP}_2}(D)\big) \\
\textrm{Region 1} &~ \ind\big({\rm dP}_2, \cO_{\mathrm{dP}_2}(D-(D\cdot \cM_1)\,\cM_1)\big) \\
\textrm{Region 2} &~ \ind\big({\rm dP}_2, \cO_{\mathrm{dP}_2}(D-(D\cdot \cM_2)\,\cM_2)\big) \\
\textrm{Region 3} &~ \ind\big({\rm dP}_2, \cO_{\mathrm{dP}_2}(D-(D\cdot \cM_3)\,\cM_3)\big) \\
\textrm{Region 4} &~ \ind\big({\rm dP}_2, \cO_{\mathrm{dP}_2}(D-(D\cdot \cM_1)\,\cM_1-(D\cdot \cM_2)\,\cM_2)\big) \\
\end{tabular}
\label{eq:dp2_formulae}
\end{equation}
Equivalently, one can capture this locally polynomial function in the single expression,
\begin{equation*}
h^0\left(\mathrm{dP}_2,\mc{O}_{\mathrm{dP}_2}(D)\right) = \mathrm{ind}\Big(\mathrm{dP}_2, \mc{O}_{\mathrm{dP}_2}\big(D + \sum_{i=1}^3 \theta( - D \cdot \cM_i ) \, (D \cdot \cM_i)\cM_i\big) \Big) \,.
\label{eq:dp2_h0_alt}
\end{equation*}
where $\theta(\,\cdot\,)$ equals one for $x \geq 0$ and zero otherwise.

\subsection{Summary of results}

The appearance of formulae as in Equation~\eqref{eq:dp2_formulae}, which is a particularly simple example of a more general phenomenon, can be explained by combinining Zariski decomposition with vanishing theorems for cohomology, as we now briefly explain.

If $D$ is an effective divisor, a theorem due to Zariski ensures that it can be uniquely decomposed as $D=P+N$, where $P$ is nef and $N$ effective, and $P$ intersects no components in the curve decomposition of $N$. In general $P$ and $N$ are rational rather than integral divisors. In the case of an integral divisor $D$, which defines an effective line bundle, the importance of Zariski decomposition for cohomology arises from the relation
\begin{equation}\label{eq:dP2}
h^0\big(S, \mc{O}_S(D)\big) = h^0\big(S, \mc{O}_S(\floor{P})\big) \,,
\end{equation}
which holds for any smooth projective surface $S$. Here the round-down divisor $\floor{P}$ is the maximal integral subdivisor of $P$, which being integral defines an effective line bundle.

A line bundle $\mc{O}_\surf(D)$ depends up to isomorphism only on the divisor class $\divcls{D}$. Different representatives $D'$ in the class will have different positive parts $P'$. However, importantly, the classes $\divcls{P}$ and $\divcls{\floor{P}}$ depend only on the class $\divcls{D}$, and, crucially, can be computed purely from intersection properties of $D$. In particular this computation requires knowledge of the Mori cone and the intersection form.

If the cohomology on the right-hand side of Equation~\eqref{eq:dP2} can be computed more easily than the left, the relation becomes practically important. An obvious example is when a theorem ensures the vanishing of the higher cohomologies of $\mc{O}_\surf(\floor{P})$, so that the zeroth cohomology is computed by the index $h^0\big(S, \mc{O}_S(\floor{P})\big) = \ind\big(S, \mc{O}_S(\floor{P})\big)$. The latter, importantly, is straightforward to compute due to the Hirzebruch-Riemann-Roch theorem. The availability of such theorems depends on the surface in question.

When $D$ is nef, $N$ is trivial and $D=P$. When $D$ is outside the nef cone, the positive part $P$ always lies on the boundary of the nef cone. In the latter case, the prescription for Zariski decomposition implies that an effective divisor gets projected $D \to P$ to a face of the nef cone. Grouping divisors according to the face onto which they get projected gives rise to `Zariski chambers', which are locally polyhedral subcones of the effective cone. Within a Zariski chamber, the support of $N$ is fixed and the Zariski decomposition takes a fixed form. The chamber structure induced on the interior of the effective cone (the big cone) by Zariski decomposition is a fairly recent result established in Ref.~\cite{Bauer04}.

If the image of a Zariski chamber under the map $\divcls{D} \to \divcls{\floor{P}}$ is covered by a vanishing theorem, then the index function can be `pulled back' to give a single function for zeroth cohomology throughout the Zariski chamber. In this case the Zariski chamber becomes also a `cohomology chamber'.

Promoting Zariski chambers to cohomology chambers requires a vanishing theorem that interacts well with the flooring. While the positive part $P$ in a Zariski decomposition is nef, there is a round-down operation in the relation $h^0(S,\cO_\surf(D))=h^0(S,\cO_\surf(\floor{P}))$, so that it is not sufficient for a vanishing theorem to apply to the nef cone.  Additionally, most vanishing theorems involve a twist by the canonical bundle, which may push $\floor{P}$ even further away from the region covered by the vanishing theorems.

\smlhdg{Cohomology formulae for complex surfaces}

\noindent While Zariski chambers exist for every smooth complex projective surface, whether these become cohomology chambers depends on the presence of appropriate vanishing theorems. Hence in this paper we consider several classes of surfaces on which there exist such vanishing theorems.

\medskip

On all generalised del Pezzo surfaces and all projective toric surfaces, we prove that the zeroth line bundle cohomology is described throughout the Picard lattice by closed-form expressions. In the case of toric surfaces, it is possible to utilise the Demazure vanishing theorem.
\begin{nnthm}
Let $\surf$ be a smooth projective toric surface, and $D$ an effective $\mbb{Z}$-divisor with Zariski decomposition $D = P + N$. Then
\be
h^0 \big(\surf, \mc O_\surf(D) \big) = \ind\big(\surf,\mc{O}_\surf(\floor{P})\big) \,.
\ee
Hence every Zariski chamber is upgraded to a cohomology chamber.
\end{nnthm}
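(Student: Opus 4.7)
The plan is to combine the general identity $h^0(S, \mathcal{O}_S(D)) = h^0(S, \mathcal{O}_S(\lfloor P \rfloor))$, recalled in the excerpt, with Demazure's vanishing theorem, which states $H^i(S, \mathcal{O}_S(L)) = 0$ for $i > 0$ whenever $L$ is a nef line bundle on a complete toric variety. If $\lfloor P \rfloor$ can be shown to be nef, then Demazure vanishing combined with Hirzebruch--Riemann--Roch gives $h^0(S, \mathcal{O}_S(\lfloor P \rfloor)) = \ind(S, \mathcal{O}_S(\lfloor P \rfloor))$, and combining with the Zariski identity then proves the theorem. Thus the entire claim reduces to showing that $\lfloor P \rfloor$ is nef on $S$.

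To establish nefness of $\lfloor P \rfloor$, I would exploit the torus action. Choose a torus-invariant representative $D = \sum_i d_i D_i$ of the divisor class, where the $D_i$ are the torus-invariant prime divisors, arranged cyclically in the fan of $S$. Uniqueness of Zariski decomposition together with invariance under the torus action implies that $P$ and $N$ are themselves torus-invariant: $P = \sum_i p_i D_i$ and $N = \sum_i n_i D_i$, with $p_i + n_i = d_i \in \mathbb{Z}$ and $p_i, n_i \in \mathbb{Q}_{\geq 0}$. Since $d_i$ is an integer, the fractional part $\{P\} := P - \lfloor P \rfloor$ is supported entirely on $\mathrm{supp}(N)$. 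Nefness is then verified against each torus-invariant curve $D_j$, since these generate the Mori cone on a smooth projective toric surface, via the identity $\lfloor P \rfloor \cdot D_j = P \cdot D_j - \{P\} \cdot D_j$. Expanding $\{P\} \cdot D_j$ using the cyclic intersection combinatorics of the fan (distinct adjacent curves intersect in $1$, non-adjacent in $0$, with self-intersections determined by the fan), a short case analysis on whether $D_j$ and its two neighbours lie in $\mathrm{supp}(N)$ --- exploiting $P \cdot D_j \geq 0$ (nefness of $P$), the orthogonality $P \cdot D_j = 0$ for $D_j \in \mathrm{supp}(N)$, and the negative-definiteness of the intersection matrix restricted to $\mathrm{supp}(N)$ --- yields $\lfloor P \rfloor \cdot D_j \geq 0$.

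The main obstacle is precisely this last step: a priori, rounding down a nef rational divisor can destroy nefness, so the argument depends essentially on the combinatorial rigidity of smooth toric surfaces --- the cyclic fan structure and the integrality of intersections --- to ensure that the fractional contributions from $\{P\}$ are always dominated by the nef positive intersection $P \cdot D_j$ on which they act. Without this toric rigidity (for instance on a general smooth surface), there is no reason for the round-down to preserve nefness, and this is what singles out the toric setting as the natural home for promoting every Zariski chamber to a cohomology chamber.
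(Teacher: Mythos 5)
Your overall architecture (Zariski preservation of $h^0$, then a vanishing theorem, then Hirzebruch--Riemann--Roch) matches the paper's, but the step you reduce everything to --- proving that $\floor{P}$ is nef --- is false, and no case analysis on the fan combinatorics will rescue it. The paper's own running example is a counterexample: on the smooth projective toric surface $F_8$, the divisor $D = 2D_1 + D_2 + D_3$ has positive part $P = \tfrac{1}{2}D_1 + D_2 + \tfrac{1}{2}D_3$, so $\floor{P} = D_2$, and $\floor{P}\cdot D_2 = D_2^2 = -1 < 0$. Rounding down a nef $\mathbb{Q}$-divisor genuinely destroys nefness here, exactly the failure mode you flag as the "main obstacle"; the toric rigidity you invoke does not prevent it. (Your torus-invariance observations for $P$ and $N$ are fine as far as they go; the problem is purely the nefness claim for the round-down.)

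The fix, and the route the paper takes, is to use the stronger $\mathbb{Q}$-divisor form of Demazure vanishing: if $D$ is a nef $\mathbb{Q}$-divisor on a toric variety whose fan has convex support (automatic for a projective toric surface), then $H^q\big(X,\mathcal{O}_X(\floor{D})\big)=0$ for all $q>0$. The round-down is built into the hypothesis-free conclusion of the theorem, so one applies it directly to the nef $\mathbb{Q}$-divisor $P$ and obtains vanishing of the higher cohomology of $\mathcal{O}_S(\floor{P})$ \emph{without} $\floor{P}$ being nef. Combined with $h^0(S,\mathcal{O}_S(D)) = h^0(S,\mathcal{O}_S(\floor{P}))$ and Riemann--Roch this gives the claim immediately. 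In the $F_8$ example above this is visibly at work: $\mathcal{O}_{F_8}(D_2)$ has vanishing higher cohomology and $\mathrm{ind}(\mathcal{O}_{F_8}(D_2)) = 1 = h^0(\mathcal{O}_{F_8}(D))$ even though $D_2$ is not nef. So you need to replace your nef-line-bundle version of Demazure with the $\mathbb{Q}$-divisor version; with that substitution the proof closes in one line.
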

\noindent On generalised del Pezzo surfaces, we show one can use the Kawamata-Viehweg vanishing theorem. Here we find that one should instead use the round-up $\ceil{P}$ of the positive part, rather than the round-down $\floor{P}$.
\begin{nnthm}
Let $\surf$ be a smooth generalised del Pezzo surface, and $D$ an effective $\mbb{Z}$-divisor with Zariski decomposition $D = P + N$. Then
\be
h^0 \big(\surf, \mc O_\surf(D) \big) = \ind\big(\surf,\mc{O}_\surf(\ceil{P})\big) \,.
\ee
Hence every Zariski chamber is upgraded to a cohomology chamber.
\end{nnthm}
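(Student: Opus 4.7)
The strategy is to express $h^0(\surf,\mc{O}_\surf(D))$ as the index of $\mc{O}_\surf(\lceil P \rceil)$ by combining two applications of Zariski decomposition with the Kawamata--Viehweg vanishing theorem. First, the standard Zariski identity recalled in the preamble gives $h^0(\surf,\mc{O}_\surf(D)) = h^0(\surf,\mc{O}_\surf(\lfloor P \rfloor))$, valid on any smooth projective surface.

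The next step is the key observation that $\lceil P \rceil$ itself admits an explicit Zariski decomposition. Writing $\lceil P \rceil = P + N'$ with $N' := \lceil P \rceil - P$, one checks that this is a valid Zariski decomposition. Because $D = P + N$ is integral, any component on which $P$ has non-integral coefficient must also appear in $N$ with non-integral coefficient (so that the two fractional parts cancel in $D$); hence the support of $N'$ is contained in the support of $N$. Its coefficients lie in $(0,1)$ so $N'$ is effective, its support inherits negative-definite intersection matrix from that of $N$, $P$ is nef, and $P\cdot C = 0$ for every $C$ in the support of $N$ and a fortiori of $N'$. Uniqueness of Zariski decomposition then identifies this as \emph{the} decomposition of $\lceil P \rceil$, and a second application of the standard identity yields $h^0(\surf,\mc{O}_\surf(\lceil P \rceil)) = h^0(\surf,\mc{O}_\surf(\lfloor P \rfloor))$.

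Finally, one applies the Kawamata--Viehweg vanishing theorem in its $\mbb{Q}$-divisor form. Setting $\Delta := P - K_\surf$, one has $K_\surf + \lceil \Delta \rceil = \lceil P \rceil$, so it suffices to verify that $\Delta$ is nef and big with fractional part supported on a simple normal crossings divisor. Nefness and bigness follow from the fact that on a generalised del Pezzo surface $-K_\surf$ is nef and big, combined with nefness of $P$. Kawamata--Viehweg then gives $h^i(\surf, \mc{O}_\surf(\lceil P \rceil)) = 0$ for $i>0$, so $h^0(\surf,\mc{O}_\surf(\lceil P \rceil)) = \ind(\surf,\mc{O}_\surf(\lceil P \rceil))$. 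Chaining the three equalities together, $h^0(D) = h^0(\lfloor P \rfloor) = h^0(\lceil P \rceil) = \ind(\lceil P \rceil)$, which is the desired formula. Within a fixed Zariski chamber the Zariski decomposition of $D$ takes a fixed form, so the right-hand side is a single polynomial expression in the divisor class, upgrading the chamber to a cohomology chamber.

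The principal technical obstacle is verifying the simple normal crossings hypothesis on the fractional part of $\Delta$, whose support sits inside the configuration of negative curves on $\surf$. On a generalised del Pezzo surface these are all smooth rational $(-1)$- and $(-2)$-curves; the $(-2)$-curves assemble into ADE Dynkin configurations intersecting transversally at single points, but one must still rule out higher-order or multi-point intersections among arbitrary pairs of negative curves. This can be handled by adjunction-based bounds on $C\cdot C'$ for distinct irreducible negative curves on $\surf$, and it is precisely this SNC control that restricts the argument to the generalised del Pezzo setting rather than extending to every surface with $-K_\surf$ nef and big.
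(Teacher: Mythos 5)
Your overall route is the same as the paper's: you chain $h^0(D)=h^0(\floor{P})$, the observation that $\ceil{P}=P+(\ceil{P}-P)$ is itself a Zariski decomposition with the same positive part (so $h^0(\ceil{P})=h^0(\floor{P})$), and Kawamata--Viehweg applied to $B=\ceil{P}$ with $B\numeq P+(\ceil{P}-P)$, where $P-K_\surf$ is nef and big because $-K_\surf$ is. All of that matches the paper's Corollary on round-up cohomology, its reformulation of Kawamata--Viehweg, and the nef-and-big computation $(P-K_\surf)^2=P^2+2P\cdot(-K_\surf)+(-K_\surf)^2>0$.

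The gap is in the one step you defer, the simple normal crossing condition on $\ceil{P}-P$, and the method you propose for it would not work. There is no adjunction-based bound on $C\cdot C'$ for arbitrary distinct negative curves on a generalised del Pezzo surface: adjunction controls $C^2+C\cdot K_\surf$ for a single curve, and in fact two $(-1)$-curves on $\mathrm{dP}_8$, e.g.\ $E_1$ and $6H-3E_1-2(E_2+\cdots+E_8)$, meet with multiplicity $3$. The correct input is not the global configuration of negative curves but the fact that $\mathrm{Supp}(\ceil{P}-P)\subseteq\mathrm{Supp}(N)$, whose intersection matrix is \emph{negative definite} with diagonal entries $-1$ or $-2$. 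Negative definiteness of the $2\times 2$ principal minors forces $(C\cdot C')^2<C^2\,C'^2\le 4$, hence $C\cdot C'\le 1$ for distinct components, so pairwise intersections are single transverse points; negative definiteness of the $3\times 3$ minors forces at least one off-diagonal entry to vanish, so no three components pairwise meet and in particular no three pass through a common point. Combined with smoothness of all negative curves on a generalised del Pezzo surface, this gives simple normal crossings. This is exactly the paper's Proposition on normal crossing support, and it is the piece your write-up needs to replace the adjunction remark; note also that your framing in terms of ``ADE configurations of $(-2)$-curves'' omits the $(-1)$-curves that can equally appear in $\mathrm{Supp}(N)$.
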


On K3 surfaces, we find one can again use the Kawamata-Viehweg vanishing theorem. However, in this case, the combination of Zariski decomposition with a vanishing theorem gives cohomology formulae only in the interior of the effective cone. The cohomologies of those line bundles lying on the boundary are generally not determined by our methods, and require a separate treatment that we do not attempt.
\begin{nnthm}
Let $\surf$ be a smooth projective complex K3 surface, and $D$ an effective $\mbb{Z}$-divisor not on the boundary of the Mori cone with Zariski decomposition $D = P + N$. Then
\be
h^0 \big(\surf, \mc O_\surf(D) \big) = \ind\big(\surf,\mc{O}_\surf(\ceil{P})\big) \,.
\ee
Hence every Zariski chamber, excluding its intersection with the boundary of the Mori cone, is upgraded to a cohomology chamber.

On the boundary, one can at least say for the subset of integral divisors $D'$ whose support has negative definite intersection matrix that the positive part is trivial $P'=0$ so that $h^0\big(\surf,\mc{O}_\surf(D')\big) = h^0\big(\surf,\mc{O}_\surf\big) = 1$. In general this determines the cohomology on a number of faces of the Mori cone but not the entire boundary.
\end{nnthm}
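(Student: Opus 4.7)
The plan is to combine the Zariski identity $h^0(\surf,\mc{O}_\surf(D))=h^0(\surf,\mc{O}_\surf(\floor{P}))$ with a $\mbb{Q}$-divisor form of the Kawamata-Viehweg vanishing theorem, exploiting the triviality of the canonical class on a K3 surface. The two ingredients will together identify $h^0(\surf, \mc{O}_\surf(D))$ with $\ind(\surf, \mc{O}_\surf(\ceil{P}))$.

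First, since $D$ lies in the interior of the effective (Mori) cone, which on a smooth projective surface coincides with the big cone, the positive part $P$ of its Zariski decomposition is both nef and big. I will then apply the Kawamata-Viehweg vanishing theorem for $\mbb{Q}$-divisors on the smooth surface $\surf$ in the form $H^i(\surf, K_\surf + \ceil{P}) = 0$ for $i > 0$. Using $K_\surf = 0$ on a K3, this gives $H^i(\surf, \mc{O}_\surf(\ceil{P})) = 0$ for $i > 0$, and hence $h^0(\surf, \mc{O}_\surf(\ceil{P})) = \ind(\surf, \mc{O}_\surf(\ceil{P}))$.

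Next, I will show that $h^0(\surf, \mc{O}_\surf(D)) = h^0(\surf, \mc{O}_\surf(\ceil{P}))$ by recognising $\ceil{P} = P + (\ceil{P} - P)$ as the Zariski decomposition of the integral divisor $\ceil{P}$. Indeed, $P$ is nef, $\ceil{P}-P$ is effective and supported on a subset of $\mathrm{supp}(N)$ (precisely the components where $P$ acquires fractional coefficients), $P$ intersects these components trivially by definition of the Zariski decomposition of $D$, and the intersection form restricted to any subset of the components of $N$ remains negative definite as a principal submatrix. Applying the Zariski identity to $\ceil{P}$ yields $h^0(\surf,\mc{O}_\surf(\ceil{P})) = h^0(\surf, \mc{O}_\surf(\floor{P})) = h^0(\surf, \mc{O}_\surf(D))$, which together with the vanishing step gives the main formula.

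For the boundary statement, if $D'$ is an effective integral divisor whose support has negative definite intersection matrix, then $P'=0$, $N'=D'$ satisfies the defining axioms of a Zariski decomposition by inspection, so the Zariski identity gives $h^0(\surf, \mc{O}_\surf(D')) = h^0(\surf, \mc{O}_\surf) = 1$. The step I expect to be most delicate is the invocation of Kawamata-Viehweg: the standard $\mbb{Q}$-divisor formulation requires the fractional part of $P$ to be supported on a simple normal crossings divisor, which on a K3 amounts to the $(-2)$-curves in $\mathrm{supp}(N)$ (forced by adjunction and negative definiteness) meeting transversely. This need not hold in general, so I would either invoke the surface-specific version of the theorem due to Sakai or Miyaoka, which dispenses with the SNC hypothesis for nef and big $\mbb{Q}$-divisors on smooth surfaces, or pass to a log resolution and analyse the pushforward.
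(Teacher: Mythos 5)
Your proposal is correct and follows essentially the same route as the paper: combine the Zariski identity $h^0(D)=h^0(\floor{P})$ with Kawamata--Viehweg for $\mbb{Q}$-divisors (using $K_\surf=0$), pass to the round-up $\ceil{P}$ by observing that $\ceil{P}=P+(\ceil{P}-P)$ is itself a Zariski decomposition, and establish bigness of $P$ from bigness of $D$ (the paper proves the latter via $P^2=P\cdot D>0$; you assert it, but it is standard). The one point of divergence is the simple normal crossing hypothesis. You say it ``need not hold in general'' and propose to route around it via Sakai/Miyaoka or a log resolution; that is a legitimate fix, but the paper instead shows the hypothesis is in fact \emph{automatic} on a K3: all negative curves are smooth $(-2)$-curves, and a negative definite $3\times3$ intersection matrix with $-2$ on the diagonal cannot have all off-diagonal entries positive, so no three components of $\mathrm{Supp}(N)$ can meet at a point (and negative definiteness of $2\times2$ minors forces pairwise intersection numbers $\leq 1$, hence transversality). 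So your argument closes, but you missed that the delicate step you flagged can be discharged directly from the structure of negative curves on a K3 rather than by appealing to a stronger vanishing theorem.
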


The expressions for $P$, and hence $\floor{P}$ and $\ceil{P}$, can be made very explicit, given knowledge of the Mori cone and the intersection form, and in particular are determined purely from intersection properties. Since the index is also computed from intersections, this means the calculation of any zeroth cohomology involves only intersection computations.

Concretely, note the prescription for constructing Zariski chambers is that every face $F$ of the nef cone not contained in the boundary of the Mori cone gives rise to a Zariski chamber $\Sigma_F$, by translating the face $F$ along the Mori cone generators which have zero intersection with divisors on the face (with respect to the intersection form). Then one has the following.

\begin{nnprp}
Let $D$ be an effective divisor, within a Zariski chamber $\Sigma_{i_1,\ldots i_n}$ obtained by translating a codimension $n$ face of the nef cone along the set of  Mori cone generators $R = \{\cM_{i_1},\cM_{i_2},\ldots \cM_{i_n}\}$ orthogonal (with respect to the intersection form) to the face. The positive part $P$ in the Zariski decomposition of $D$ is given by 
\be
P = D - \sum_{k=1}^n\, (-D\cdot \cM_{i_k , R}^\vee) \,\cM_{i_k}  \,,
\ee
where the dual $\cM_{i_k,R}^\vee$ is an effective divisor with support $R$ defined such that $ \cM_{i_k}^\vee\cdot  \cM_{i_m} = -\delta_{km}$, $\forall \cM_{i_m} \in R$.
\end{nnprp}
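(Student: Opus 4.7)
My plan is to verify the proposed formula by checking that the pair $(P, N)$ with $N := D - P = \sum_k a_k \cM_{i_k}$ and $a_k := -D \cdot \cM_{i_k,R}^\vee$ satisfies the four defining properties of the Zariski decomposition: (i) $N$ is effective; (ii) the intersection matrix of the support of $N$ is negative definite; (iii) $P \cdot \cM_{i_m} = 0$ for every $\cM_{i_m} \in R$; and (iv) $P$ is nef. Uniqueness of the Zariski decomposition then forces the claimed identification of $P$.

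Property (iii) is obtained by a short linear algebra computation. Setting $M_{km} := \cM_{i_k}\cdot \cM_{i_m}$ and expanding $\cM_{i_k,R}^\vee = \sum_l c_{kl}\,\cM_{i_l}$, the defining relation $\cM_{i_k,R}^\vee \cdot \cM_{i_m} = -\delta_{km}$ forces $c = -M^{-1}$, using the symmetry of $M$. Substituting back gives $a_k = \sum_l (M^{-1})_{kl}\,(D \cdot \cM_{i_l})$, and therefore
\[
P \cdot \cM_{i_m} = D \cdot \cM_{i_m} - \sum_k a_k\, M_{km} = D \cdot \cM_{i_m} - \sum_l \delta_{lm}\, (D \cdot \cM_{i_l}) = 0 \,.
\]
Property (ii) is a standard structural feature of the nef cone boundary: the set of Mori cone generators orthogonal to a given face has negative definite intersection matrix (see Ref.~\cite{Bauer04}), which in particular guarantees that $M$ is invertible and the duals $\cM_{i_k,R}^\vee$ are well-defined.

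For (i) and (iv), I would appeal to the explicit description of the chamber itself. By construction, a divisor $D$ lies in $\Sigma_{i_1,\ldots,i_n}$ precisely when it admits a decomposition $D = F + \sum_k b_k\,\cM_{i_k}$ with $F$ on the chosen face of the nef cone and all $b_k \geq 0$. Intersecting this identity with $\cM_{i_m}$ and using $F \cdot \cM_{i_m} = 0$ recovers $b_k = \sum_l (M^{-1})_{kl}(D\cdot\cM_{i_l}) = a_k$. Hence $a_k \geq 0$, establishing (i), and $P = F$ lies on the face of the nef cone and is therefore nef, establishing (iv). Combined with (ii) and (iii), uniqueness of Zariski decomposition yields the claimed identification.

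The main technical input, and the only nontrivial step, is the appeal to Bauer's construction: that the Zariski chamber $\Sigma_{i_1,\ldots,i_n}$ really is exhausted by translates of the face along the curves in $R$ with non-negative coefficients, and that $R$ has negative definite intersection matrix. Once these structural facts are granted, the rest of the proof is the elementary linear algebra above combined with uniqueness of Zariski decomposition.
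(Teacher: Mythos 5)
Your proof is correct, but it runs in the opposite logical direction from the paper's. The paper first proves a lemma (Lemma~3.2 in the text) that starts from the \emph{a priori existing} Zariski decomposition $D=P+N$, uses $P\cdot \cM_{i_k,R}^\vee=0$ together with $\cM_{i_k,R}^\vee\cdot\cM_{i_m}=-\delta_{km}$ to read off $y_k=N\cdot(-\cM_{i_k,R}^\vee)=D\cdot(-\cM_{i_k,R}^\vee)$, and then invokes the fact that $\mathrm{Supp}(N)=R$ throughout the chamber (Bauer--K\"uronya--Szemberg) to conclude. You instead \emph{construct} a candidate pair from the geometric description of the chamber as translates of the face, check the defining axioms of Zariski decomposition, and appeal to uniqueness. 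The linear algebra at the heart of both arguments is identical ($c=-M^{-1}$, so $a_k=\sum_l (M^{-1})_{kl}(D\cdot\cM_{i_l})$), but the inputs differ slightly: the paper needs the nontrivial structural statement that the support of $N$ is constant on the chamber, whereas you need only the definition of the chamber as $F+\sum_k b_k\cM_{i_k}$ with $b_k\geq 0$, negative definiteness of $R$, and uniqueness of the decomposition --- arguably a more self-contained route, at the cost of some redundancy (your direct verification of $P\cdot\cM_{i_m}=0$ in step (iii) already follows from $P=F$ lying on the face, which you establish in step (iv)). Both arguments are sound.
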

\noindent Note that the dual divisor $\cM_{i_k,R}^\vee$ is computed with respect to the set $R = \{\cM_{i_1},\cM_{i_2},\ldots \cM_{i_n}\}$ and so can take different forms in different Zariski chambers. When $D$ is integral as in the case of line bundles, the round-up and round-down are then given by the following simple expressions
\be
\floor{P} = D - \sum_{k=1}^n\, \ceil{-D\cdot \cM_{i_k,R}^\vee} \,\cM_{i_k}  \quad \mathrm{and} \quad \ceil{P} = D - \sum_{k=1}^n\, \floor{-D\cdot \cM_{i_k,R}^\vee} \,\cM_{i_k}  \,.
\ee

We also show that, alternatively, one can write a single expression for $P$ throughout the effective cone. Let $\mc{I}(\surf)$ be the set of rigid curves on the surface $\surf$, which is a subset of the set of Mori cone generators. And let $\mc{R}(\surf)$ be the set of subsets of $\mc{I}(\surf)$ with negative definite intersection form. Every subset $R\in\mc{R}(\surf)$ corresponds to a set of generators of the Mori cone orthogonal to a face of the nef cone. In a given subset $R \in \mc{R}(\surf)$, for any element $\cM_i \in R$ one can define a unique effective dual divisor $\cM_{i,R}^\vee$ as above. Each element $R \in \mc{R}(\surf)$ with $\cM_i \in R$ determines a dual divisor $\cM_{i,R}^\vee$. Defining $\mc{G}_i(D) = \{-\cM_{i,R}^\vee \cdot D \, | \, R \in \mc{R}(\surf) \} \cup \{ 0 \}$, one then has the following.
\begin{nnprp}
Let $D$ be an effective divisor on $\surf$ with Zariski decomposition $D = P + N$ Then
\be
P = D - \sum_{\cM_i \in \mc{I}(S)} \mathrm{max}\, (\mc{G}_i(D)) \, \cM_i \,.
\ee
\end{nnprp}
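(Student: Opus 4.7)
The plan is to derive the global expression by pulling together the chamber-local expressions furnished by the previous proposition. Fix an effective $D$ and let $\Sigma_R$ be the Zariski chamber containing it, where $R \subseteq \mc{I}(\surf)$ is the support of the negative part $N$. By the defining properties of Zariski decomposition, the intersection form on $R$ is negative definite, so $R \in \mc{R}(\surf)$, and the previous proposition gives $N = \sum_{\cM_i \in R} n_i \cM_i$ with $n_i = -D \cdot \cM_{i, R}^\vee \geq 0$. The task reduces to showing that for every $\cM_i \in \mc{I}(\surf)$,
\[
\max\, \mc{G}_i(D) \;=\; \begin{cases} n_i & \text{if } \cM_i \in R, \\ 0 & \text{if } \cM_i \notin R. \end{cases}
\]

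The key identity is obtained by taking any $R' \in \mc{R}(\surf)$ containing $\cM_i$ and computing
\[
-D \cdot \cM_{i,R'}^\vee \;=\; -P \cdot \cM_{i,R'}^\vee \;-\; N \cdot \cM_{i,R'}^\vee.
\]
Since $\cM_{i,R'}^\vee$ is effective (by the hypothesis of the previous proposition) and $P$ is nef, the first term is $\leq 0$. The second term decomposes as $\sum_{\cM_k \in R} n_k (\cM_k \cdot \cM_{i,R'}^\vee)$. For $\cM_k \in R \cap R'$ the intersection is $-\delta_{ki}$ by definition of the dual divisor, while for $\cM_k \in R \setminus R'$ the divisor $\cM_{i,R'}^\vee$ is an effective combination supported on $R' \not\ni \cM_k$, so $\cM_k \cdot \cM_{i,R'}^\vee \geq 0$ as the intersection of an irreducible curve with an effective divisor not containing it.

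Combining these two observations yields the master inequality
\[
-D \cdot \cM_{i,R'}^\vee \;\leq\; n_i \,\delta_{[\cM_i \in R]} \;-\; \sum_{\cM_k \in R \setminus R'} n_k (\cM_k \cdot \cM_{i,R'}^\vee),
\]
where the last sum is nonnegative. In the case $\cM_i \notin R$ the right-hand side is already $\leq 0$, and combined with $0 \in \mc{G}_i(D)$ this gives $\max\,\mc{G}_i(D) = 0$. In the case $\cM_i \in R$, the right-hand side is bounded above by $n_i$, and equality is achieved at the choice $R' = R$ since then $R \setminus R' = \emptyset$ and the chamber proposition already tells us that the nef condition is saturated, i.e.\ $P \cdot \cM_{i,R}^\vee = 0$. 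Hence $\max\,\mc{G}_i(D) = n_i$. Substituting into the previous proposition's expression for $P$ produces the single global formula.

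I expect the main obstacle to be not the calculation itself but lining up the two inputs that make it go through cleanly: the effectivity of each dual divisor $\cM_{i,R'}^\vee$ (which rests on the M-matrix structure of the intersection form on $R'$) and the vanishing $P \cdot \cM_{i,R}^\vee = 0$ in the chamber containing $D$. Once both are in hand, the bounding argument above is essentially forced, and the verification that the supremum over $\mc{R}(\surf)$ is realized at the chamber's own label set $R$ is a straightforward consequence.
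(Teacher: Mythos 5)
Your proof is correct and takes essentially the same route as the paper: the paper reduces the proposition to the preceding lemma (that $\cM_i \in \mathrm{Supp}(N)$ if and only if some dual in $\mc{S}_i$ detects it, and that the maximum of $-\cM_{i,R'}^\vee\cdot D$ is attained at $R'=\mathrm{Supp}(N)$), declares that lemma ``straightforward,'' and your bounding argument is precisely the verification it omits. The two inputs you isolate --- effectivity of each $\cM_{i,R'}^\vee$ (from the negative-definite intersection matrix with non-negative off-diagonal entries) and the vanishing $P\cdot\cM_{i,R}^\vee=0$ when $R=\mathrm{Supp}(N)$ --- are exactly the right ones, and your master inequality correctly yields $\max\mc{G}_i(D)=0$ for $\cM_i\notin\mathrm{Supp}(N)$ and $\max\mc{G}_i(D)=n_i$ otherwise, including the degenerate case $N=0$.
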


At a practical level, to determine Zariski decompositions one requires knowledge of the subsets of the Mori cone generators on which the intersection form restricts to a negative definite matrix. These are the subsets of generators orthogonal to those faces of the nef cone that intersect the interior of the Mori cone, and hence directly determine the Zariski chambers. The subsets are straightforward to compute given knowledge of the Mori cone and the intersection form.

While for generalised del Pezzo surfaces and toric surfaces the Mori cone data can be computed algorithmically, in general this is not an easy matter. In the cases where the Mori cone data is not easily available, one can attempt to use the cohomology formulae described above `backwards'. The proposal is that one would start with some partial knowledge of the zeroth cohomology, as determined from algorithmic methods, and then attempt to fit these results to the formulae in order to infer the Mori cone data.   

\medskip

We note again that, while the above framework applies only to the zeroth cohomology, formulae for the first and the second cohomology follow immediately via the index formula and Serre duality. 

\smlhdg{Cohomology formulae for elliptically fibered Calabi-Yau three-folds}

\noindent Elliptically fibered Calabi-Yau three-folds are of particular significance in string theory, especially in the study of heterotic/F-theory duality (see Ref.~\cite{Braun:2017feb, Braun:2018ovc} for some recent work on this duality involving line bundles). Thus, in the second part of the paper we consider smooth elliptic Calabi-Yau three-folds realised as generic Weierstrass models with a single section over smooth compact two dimensional bases. The aim is to lift the cohomology formulae obtained for surfaces to the corresponding three-folds.

On such a three-fold $\cy{3}$, the cohomology of any line bundle $\lb$ can be computed in terms of the cohomology of the pushforward bundle $\fibprj_\ast\lb$ and the higher direct image $R^1\fibprj_\ast\lb$ under the projection map $\fibprj: \cy{3} \to \base{2}$ to the base $\base{2}$, by use of the Leray spectral sequence. We show that this sequence degenerates in our context, so that the lift of cohomology on the base to the three-fold is simply
\be
\begin{aligned}
h^0(\cy{3},\lb) &= h^0(\base{2},\fibprj_* \lb) \,, \\
h^1(\cy{3},\lb) &= h^1(\base{2},\fibprj_* \lb) + h^0(\base{2},R^1\fibprj_* \lb) \,, \\
h^2(\cy{3},\lb) &= h^2(\base{2},\fibprj_* \lb) + h^1(\base{2},R^1\fibprj_* \lb) \,, \\
h^3(\cy{3},\lb) &= h^2(\base{2},R^1\fibprj_* \lb) \,.
\end{aligned}
\label{eq:3fold_coh}
\ee
The pushforward and higher direct image are simple sums of line bundles, written explicitly in Equation~\eqref{eq:hidirim_list}.

From these formulae, one can expect that the cohomology chambers of the base give rise on the three-fold to regions in which the cohomology function has a closed form. We study this phenomenon in detail for an elliptic fibration over the simplest base, $\IP^2$.
On the one hand, we show that it is indeed possible to determine regions and corresponding formulae describing all line bundle cohomologies on the Calabi-Yau three-fold. On the other hand, we make the point that this procedure is intricate, and not immediately transparent. Nevertheless, this provides the first proofs of cohomology formulae for three-folds of this kind.

\newpage
\section{Zariski decomposition}

In this section we give a pedagogical introduction to Zariski decomposition. The reader familiar with the terminology and the basic ideas can safely skip to the following section. 

\label{sec:zar_dec}

\subsection{Basic notions}
\label{sec:bas_defs}

\smlhdgnogap{Divisors}

\noindent We start by reviewing some definitions involving divisors. Since we are dealing only with smooth projective surfaces, we will not distinguish between Weil and Cartier divisors. 
The group of divisors on a surface $S$ is denoted by ${\rm Div}(S)$. A divisor $D\in {\rm Div}(S)$ is a $\mathbb Z$-linear combination of irreducible codimension one subvarieties (irreducible curves), that is a finite sum $D=\sum_i n_i C_i$ with $n_i \in \mbb{Z}$, and the group operation is addition. The set of curves $\{C_i\}$ is called the support of $D$, which we denote by ${\rm Supp}(D)$. $D$~is said to be effective if $n_i\geq 0$ for all $i$. A subdivisor $P$ of $D$ is a divisor such that $D-P$ is effective.

Two divisors $D$ and $D'$ are said to be linearly equivalent $D \lineq D'$ if they differ by the divisor of a meromorphic function $f$, i.e.\ $D-D' = \mathrm{div}\,f = \sum_i \mathrm{ord}_i C_i$ where $\mathrm{ord}_i$ is the vanishing order (positive) or the pole order (negative) of $f$ on the curve $C_i$. Note that the divisor of a product of meromorphic functions is $\mathrm{div}(fg) = \mathrm{div}(f) + \mathrm{div}(g)$. The class of a divisor $D$ modulo linear equivalence is denoted by~$\divcls{D}$. A linear equivalence class is said to be effective if it contains effective representatives.

There is also the related notion of a complete linear system of a divisor, denoted $\cls{D}$, which is the set of all effective divisors linearly equivalent to $D$, which can of course be empty. If $D$ is effective, one can think of $\cls{D}$ as the family of deformations of $D$, and its dimension $\mathrm{dim}\cls{D}$ as the number of parameters of the family. If $D$ is effective but the only element in its complete linear system, then $\mathrm{dim}\cls{D}=0$ and $D$ is called `rigid'. The set of points common to every element of the complete linear system is called the base locus.

The group ${\rm Div}(S)$ can be extended to ${\rm Div}(S)\otimes \mathbb Q$, whose elements are called $\mathbb Q$-divisors. These are rational linear combinations of curves. Two $\mbb{Q}$-divisors $D_1$ and $D_2$ are said to be linearly equivalent if there exists an integer $n$ such that $nD_1$ and $nD_2$ are integral and linearly equivalent. Elements of ${\rm Div}(S)$ will be referred to as integral or $\mathbb Z$-divisors. An $\mathbb R$-divisor is a $\mathbb Q$-divisor multiplied by some real number.

\smlhdg{Divisors and line bundles}

\noindent A divisor $D\in {\rm Div}(S)$ determines a line bundle $\mathcal O_S(D)$ such that $D$ is a rational section of $\mathcal O_S(D)$. If two divisors are linearly equivalent, their associated line bundles are isomorphic. Hence the group of divisors modulo linear equivalence is isomorphic to the group of line bundles up to bundle isomorphisms, which is called the Picard group $\mathrm{Pic}(S)$. Note in particular that the operation of adding divisors corresponds to taking the tensor product of the line bundles, $\mc{O}_\surf(D_1 + D_2) = \mc{O}_\surf(D_1) \otimes \mc{O}_\surf(D_2)$. Below we will be interested only in line bundles up to isomorphism, so we will simply refer to a `line bundle' when we mean a line bundle up to isomorphism, and we will write $\mc{O}_S(D)$ rather than $\mc{O}_S(\divcls{D})$.

Particularly important for our purposes is the simple relationship between the zeroth cohomology of the line bundle $\mc{O}_\surf(D)$ and the complete linear system of the divisor $D$, specifically $\cls{D} \cong \mbb{P}\left(H^0\big(\surf,\mc{O}_\surf(D)\big)\right)$, where $\mbb{P}(\,\cdot\,)$ denotes the projectivisation.

\smlhdg{Intersections}

\noindent If two curves $C$ and $C'$ intersect transversely, there is a natural intersection product given by the number $\# \{ C \cap C' \}$ of intersection points of the two subvarieties. More generally, if two curves do not share connected components, the geometric interpretation is still valid if the intersection points are weighted by the local intersection multiplicities (greater than 1 for non-transversal intersections).  The product can be extended to include curves sharing connected components, by requesting that the following conditions are met:
\begin{enumerate}
\item Consistency with the natural case: $C \cdot C' = \# \{C \cap C'\}$ if $C$ and $C'$ intersect transversely.
\item Symmetry: $C \cdot C' = C' \cdot C$.
\item Linearity: $C \cdot (C'+C'') = C \cdot C' + C \cdot C''$.
\item Invariance under linear equivalence: $C \cdot C' = C \cdot C''$ if $C' \lineq C''$.
\end{enumerate}
These conditions give a unique intersection product $C \cdot C' \in\mathbb Z$. In particular, the intersection of two curves $C$ and $C'$ sharing a connected component is understood by replacing $C'$ by a linearly equivalent sum of curves that share no connected components with $C$. In this way, negative intersections naturally occur. Suppose a curve $C$ is linearly equivalent to a distinct curve or an effective sum $D$ of curves that shares no connected components with $C$. Then its self-intersection is nonnegative, $C^2 = C \cdot D \geq 0$. Hence, conversely, if $C^2 < 0$, then there must be no distinct effective divisor linearly equivalent to $C$. So one can conclude that there are no other elements in the complete linear system $\cls{C}$, i.e.\ $C$ is rigid. For this reason we also refer to a rigid curve as a `negative' curve.

\medskip

As divisors are linear combinations of curves, the above defines intersections between $\mbb{R}$-divisors. This gives rise to an important equivalence relation on $ {\rm Div}(S)\otimes \mathbb R$: two divisors $D_1$ and $D_2$ are called `numerically equivalent' if $D_1\cdot C = D_2\cdot C$ for every curve $C$ in $S$. Note that by the third condition above, linearly equivalent divisors are also numerically equivalent, so numerical equivalence is in general a weaker condition. In particular, note there is an intersection pairing $\mathrm{Pic}(S) \times \mathrm{Pic}(S) \to \mbb{Z}$.

On many common spaces, linear equivalence and numerical equivalence coincide. For instance, this is true on all compact toric varieties (see Proposition~6.3.15 in Ref.~\cite{cox2011toric}), on all generalised del Pezzo surfaces, and on all projective Calabi-Yau manifolds of dimension greater than one (where $X$ being Calabi-Yau is understood in the strict sense of having no holomorphic $k$-forms for $0 < k < \mathrm{dim}(X)$), and hence on all spaces we discuss explicitly below. Counter-examples to this include the elliptic curve, and products of curves of large genus.

\medskip

An $\mbb{R}$-divisor $D \in {\rm Div}(S)\otimes \mathbb R$ is said to be nef if $D\cdot C\geq 0$ for every curve $C$ in $S$. It follows that $D$ is nef if and only if $D\cdot C_j\geq 0$ for every $C_j\in {\rm Supp}(D)$ since the intersection of distinct curves is non-negative. For a divisor $D=\sum x_i C_i$, with $C_i$ irreducible components, its intersection matrix $I(D)$ is defined as the symmetric matrix with $(i,j)$ entry $(C_i\cdot C_j)$, hence $D\cdot D = {\bf x}^T I(D){\bf x}$. 

\smlhdg{Cones}

\noindent The natural arena for defining several important objects is the space of divisors modulo numerical equivalence. This is called the N\'{e}ron-Severi group $\mathrm{NS}(\surf)$, and we can define the corresponding real vector space $\mathrm{NS}(\surf)_\mathbb{R} = \mathrm{NS}(\surf) \otimes \mathbb{R}$. Within this vector space, the set of nef divisors naturally forms a cone $\mathrm{Nef}(\surf)$. To any cone one can associate a dual cone, which is the set of points having non-negative intersection with every element in the cone. The dual of the nef cone is the closure of the cone of effective divisors, and is called the Mori cone or the cone of pseudo-effective divisors, and is denoted by $\overline{\mathrm{NE}}(\surf)$. The interior of the Mori cone is the big cone, whose elements are big divisors. 

We note it is easy to see that a rigid curve must be a generator of the Mori cone as follows. Let $C$ be a rigid curve and consider the hyperplane in the N\'{e}ron-Severi group corresponding to zero intersection with $C$. Any other Mori cone generator, being a distinct curve, must have non-negative intersection with $C$, and hence must lie on the hyperplane or be on the positive side of it. But $C$ is on the negative side since $C^2<0$. Since $C$ is effective, this is impossible unless $C$ is also a generator.

Since linear equivalence and numerical equivalence coincide on the spaces we will discuss, the N\'{e}ron-Severi group and the Picard group are isomorphic. Hence integral points in the above cones can be identified with line bundles up to isomorphism.

\smlhdg{Detection of rigid divisors}

\noindent An important idea in relation to Zariski decomposition is that of detecting via intersections rigid parts of a complete linear system. Suppose an effective divisor $D$ has negative intersection $D \cdot C_j < 0$ with an irreducible curve $C_j$. In the intersection
\be
D \cdot C_j = \bigg( \sum_i x_i C_i \bigg) \cdot C_j \,,
\ee
the only possible negative contribution is from the self-intersection term $x_j (C_j \cdot C_j)$. Hence $C_j$, which must be a rigid curve, must be in the divisor expansion of $D$, i.e.\ $x_j > 0$. More strongly, there is clearly a lower bound on the coefficient $x_j$ of $C_j$,
\be
x_j \geq \frac{D \cdot C_j}{C_j \cdot C_j} \,.
\ee
Any linearly equivalent divisor has the same intersection with $C_j$, and hence for any effective $D' \lineq D$, i.e.\ any element of the linear complete system $\cls{D}$, the same lower bound applies. In particular, removing this much of $C_j$ from every divisor in the complete linear system $|D|$ leads to a linear system of equal size
\be
{\rm dim}\cls{D} ~=~ {\rm dim} \cls{D- \frac{D \cdot C_j }{C_j \cdot C_j}C_j} \,.
\ee

More generally, if $C_j$ is an irreducible negative divisor and $\tilde{D}_j$ an effective divisor that (1) intersects $C_j$ negatively and (2) has non-negative intersection with all other irreducible curves, then $\tilde{D}_j$ can be used in order to detect the presence of $C_j$ in the expansion of $D$, provided that $D \cdot \tilde{D}_j < 0$. As before, it follows that
\be
{\rm dim}\cls{D} ~=~ {\rm dim} \cls{D- \frac{D \cdot \tilde{D}_j }{C_j \cdot \tilde{D}_j}C_j} \,.
\ee

\subsection{Zariski decomposition}
\label{sec:zar_dec_subsec}

\noindent In Ref.~\cite{Zariski}, Zariski established the following result.

\vspace{5pt}
\begin{thm}[Zariski decomposition]
\label{thm:Zariski}
Let $D$ be an effective $\mathbb Q$-divisor on a smooth projective surface $S$. Then $D$ has a unique decomposition $D = P + N$, where $P$ and $N$ are $\mathbb Q$-divisors such that 
\begin{itemize}
\item[1.] $P$ is nef. 
\item[2.] $N$ is effective and if $N \neq 0$ then it has negative definite intersection matrix
\item[3.] $P \cdot C_i = 0$ for every irreducible component $C_i$ of $N$.
\vspace{-4pt}
\end{itemize}
\end{thm}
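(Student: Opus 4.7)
My plan is to prove \emph{uniqueness} by direct algebraic manipulation and \emph{existence} by an iterative construction, with the main difficulty concentrated in an auxiliary negative-definiteness lemma.

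For uniqueness, suppose $D = P + N = P' + N'$ are two decompositions satisfying conditions (1)--(3), and set $P - P' = N' - N$. Expanding $(P - P')^2$ and using the orthogonality relations $P \cdot C = 0$ for $C \in \mathrm{Supp}(N)$ (and the analogue for $P'$), together with the nefness of $P, P'$ and effectivity of $N, N'$, gives $(P - P')^2 = P \cdot N' + P' \cdot N \geq 0$. On the other hand, a case analysis separating shared from non-shared components in $\mathrm{Supp}(N) \cup \mathrm{Supp}(N')$ shows that $N' - N$ lies in a subspace on which the intersection form is negative (semi)definite, forcing $(P - P')^2 \leq 0$. Equality then pins down $N = N'$ and hence $P = P'$, with the Hodge index theorem handling the semi-definite edge case.

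For existence, I would proceed constructively. If $D$ is already nef, take $P = D$ and $N = 0$. Otherwise, as observed in the preceding subsection, each irreducible curve $C_i$ with $D \cdot C_i < 0$ is a rigid negative curve appearing with positive coefficient in $D$, yielding a finite collection $\mathcal{C}_0 = \{C_1, \ldots, C_r\}$. The essential ingredient is that the intersection matrix $M = (C_i \cdot C_j)$ is negative definite. Combined with the fact that distinct irreducible curves have non-negative intersection (so $-M$ is a symmetric M-matrix), this allows me to solve the linear system $\sum_i n_i (C_i \cdot C_j) = D \cdot C_j$ for strictly positive coefficients $n_i$. Setting $N^{(0)} = \sum_i n_i C_i$ and $P^{(0)} = D - N^{(0)}$, orthogonality $P^{(0)} \cdot C_j = 0$ holds by construction. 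If $P^{(0)}$ fails to be nef, I enlarge $\mathcal{C}_0$ by any new negative curve detected through $P^{(0)}$ and repeat; termination is guaranteed because the accumulated curves span a subspace of bounded rank in $\mathrm{NS}(\surf)_\mathbb{R}$ on which the intersection form remains negative definite at each step.

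The main obstacle is the negative-definiteness claim for $M$, which I would establish via the Hodge index theorem. The strategy is to show that the subspace $V \subset \mathrm{NS}(\surf)_\mathbb{R}$ spanned by $\{C_i\}$ contains no nonzero nef class: any strictly positive combination $E = \sum a_i C_i$ with $a_i > 0$ must intersect some $C_j$ negatively (the index whose negative self-intersection contribution dominates $E \cdot C_j$), hence cannot be nef, and in particular cannot lie in the closure of the positive cone. Since the intersection form on $\mathrm{NS}(\surf)_\mathbb{R}$ has signature $(1, \rho - 1)$, a subspace free of classes of positive self-intersection must be negative semi-definite; distinctness and irreducibility of the $C_i$ then rule out a null direction (a nontrivial null $\mathbb{R}$-combination cannot be proportional to any effective divisor), upgrading negative semi-definite to negative definite.
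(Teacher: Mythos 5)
Your overall architecture is the classical one (the paper itself does not prove this theorem; it cites Zariski and the treatment in B\u{a}descu, Theorem~14.14, and its ``pedagogical algorithm'' in Section~2.2 is precisely your existence construction). Your uniqueness argument is essentially correct once the ``case analysis'' is made precise: writing $N'-N = A-B$ with $A,B$ effective and sharing no components, one has $\mathrm{Supp}(A)\subseteq\mathrm{Supp}(N')$ and $\mathrm{Supp}(B)\subseteq\mathrm{Supp}(N)$, hence $(P-P')^2=(A-B)^2=A^2+B^2-2A\cdot B\le 0$, and equality with the lower bound $P\cdot N'+P'\cdot N\ge 0$ forces $A^2=B^2=0$, whence $A=B=0$ by negative definiteness on each support. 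Note this is not a statement that the span of $\mathrm{Supp}(N)\cup\mathrm{Supp}(N')$ is negative semidefinite (it need not be), only that the particular vector $A-B$ has non-positive square; and no appeal to the Hodge index theorem is needed for the edge case.

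The genuine gap is in the negative-definiteness lemma, which you correctly identify as the crux. Two steps fail as stated. First, exhibiting \emph{one} index $j$ with $E\cdot C_j<0$ for a positive combination $E=\sum a_iC_i$ shows $E$ is not nef, but it neither bounds $E^2=\sum_i a_i(E\cdot C_i)$ from above (the other terms may be positive) nor excludes $E$ from the closure of the positive cone --- big non-nef classes lie there --- so the signature argument does not close. Second, the upgrade from semidefinite to definite via ``a nontrivial null combination cannot be proportional to an effective divisor'' is false for general collections of negative curves: the components of a reducible fiber of a fibration are negative curves whose effective sum $F$ satisfies $F^2=0$ and $F\cdot C_i=0$. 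What actually saves the lemma is the quantitative hypothesis $D\cdot C_i<0$: let $E=\sum_i e_iC_i$ be the part of $D$ supported on the $C_i$ (so $e_i>0$ and $E\cdot C_i\le D\cdot C_i<0$, since the remainder of $D$ meets each $C_i$ non-negatively); then for $Z=\sum_i y_ie_iC_i$ one has
\[
Z^2=\sum_i y_i^2\,e_i\,(E\cdot C_i)\;-\;\tfrac{1}{2}\sum_{i\neq j}e_ie_j\,(y_i-y_j)^2\,C_i\cdot C_j\,,
\]
in which every term is $\le 0$ and vanishing forces all $y_i=0$. This identity (not the Hodge index theorem) is the engine of the classical proof; it is also what you need, applied to the successive effective divisors $P^{(k)}$, to justify that the enlarged curve sets in your iteration remain negative definite, and it underlies the M-matrix positivity you invoke for the coefficients $n_i$.
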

\vspace{8pt}

\noindent Zariski decomposition was extended to pseudo-effective divisors by Fujita in Ref.~\cite{fujita1979}. While $N$ is effective, $P$ is only pseudo-effective in general. Moreover, recalling that an $\mathbb R$-divisor $D_{\mathbb R}$ is a $\mathbb Q$-divisor $D_{\mathbb Q}$ multiplied by some real number $a$, the Zariski decomposition of $D_{\mathbb R}$ can be defined by $D_{\mathbb R} = a\, P_{D_{\mathbb Q}} +a\, N_{D_{\mathbb Q}}$.

\begin{defn}
The subdivisors $P$ and $N$ in Theorem~\ref{thm:Zariski} are called the `positive' and `negative' parts of the divisor $D$, respectively. 
\end{defn}

\begin{prp}\label{prop:Z}
The following properties hold in Zariski decomposition.
\bi
\item[(Z1)] If $D$ and $D'$ are numerically equivalent, $D \numeq D'$, then $N = N'$. 
\item[(Z2)] If $D$ and $D'$ are linearly equivalent, $D \lineq D'$, then $P \lineq P'$.
\ei
\end{prp}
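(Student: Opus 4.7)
The plan is to derive (Z1) directly from the uniqueness clause in Theorem~\ref{thm:Zariski}, and then to obtain (Z2) as an immediate formal consequence. The core observation driving the whole proof is that all three conditions in the definition of Zariski decomposition (nefness, effectiveness with negative definite intersection matrix, and the orthogonality $P\cdot C_i = 0$) depend on the divisor only through its intersection numbers, so they are invariant under adding a numerically trivial divisor.

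For (Z1), suppose $D \numeq D'$ with Zariski decomposition $D = P + N$. My strategy is to exhibit a Zariski decomposition of $D'$ whose negative part is again $N$, so that uniqueness forces $N' = N$. Setting $E = D' - D$, which is numerically trivial, I would write $D' = (P + E) + N$ and verify the three defining properties in turn: (i) $P + E$ is nef, since nefness depends only on intersection numbers with curves and $E\cdot C = 0$ for every curve $C$; (ii) $N$ is effective with negative definite intersection matrix, inherited directly from the decomposition of $D$; and (iii) for every irreducible component $C_i$ of $N$, one has $(P+E)\cdot C_i = P\cdot C_i + E\cdot C_i = 0 + 0 = 0$. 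By the uniqueness part of Theorem~\ref{thm:Zariski} applied to $D'$, this candidate must be the Zariski decomposition of $D'$, and hence $N' = N$ (with, as a by-product, $P' = P + E$).

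For (Z2), linear equivalence implies numerical equivalence, so (Z1) already gives $N = N'$. Then $P - P' = (D - N) - (D' - N') = D - D'$, which by hypothesis is the divisor of a meromorphic function, so $P \lineq P'$.

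There is no serious obstacle: the only point requiring care is to confirm that the candidate decomposition produced in (Z1) really falls within the scope of the uniqueness statement, but this is automatic since $D'$ is assumed to admit a Zariski decomposition and the candidate satisfies all three axioms. Essentially the entire content of the proposition is the observation that the defining conditions of Zariski decomposition are numerical, combined with the uniqueness already built into Theorem~\ref{thm:Zariski}.
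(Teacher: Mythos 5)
Your proof is correct and is essentially the paper's own argument: the candidate decomposition $D' = (P+E) + N$ with $E = D'-D$ numerically trivial is exactly the paper's $D' = (D'-N) + N$, verified against the three axioms and then matched to the unique decomposition of $D'$, with (Z2) following since $P - P' = D - D'$ is principal. No gaps.
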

\begin{proof}
Property~(Z1) is clear as follows. Since $D - N = P$ and $D \numeq D'$, we have $D' - N \numeq P$. But then $D' = (D' - N) + N$ satisfies the requirements to be a Zariski decomposition of $D'$. Since the decomposition is unique, $N = N'$. 
Property~(Z2) follows in a similar way: since linear equivalence implies numerical equivalence, by the same argument $D \lineq D'$ implies $N=N'$ and further $P \lineq P'$.
\end{proof}

\noindent The latter Property~(Z2) implies that Zariski decomposition determines a map between linear equivalence classes of effective divisors, $\divcls{D} \to \divcls{P}$. The following table summarises the extent to which the linear equivalence class $\divcls{D}$ and numerical equivalence class $\divcls{D}_\mathrm{num.}$ of the divisor $D$ determine the negative and positive parts in its Zariski decomposition.
\be
\begin{tabular}{L |C C C C}
						& N 			& \divcls{P}_\mathrm{num.} 	& \divcls{P}		 	& ~~~P~~~ \\
\hline
\divcls{D}_\mathrm{num.} 	& \checkmark & \checkmark 				& \text{\sffamily x} 	& \text{\sffamily x} \\
\divcls{D}			 	& \checkmark & \checkmark 				& \checkmark 			& \text{\sffamily x} \\
D			 			& \checkmark & \checkmark 				& \checkmark 			& \checkmark \\
\end{tabular}
\ee
While in general the numerical equivalence class $\divcls{D}_\mathrm{num.}$ of $D$ does not determine the linear equivalence class $\divcls{P}$ of $P$, on the classes of surfaces that we consider linear and numerical equivalence coincide, so $\divcls{D}_\mathrm{num.}$ does determine $\divcls{P}$.

\smlhdg{A pedagogical algorithm}

\noindent In Section~\ref{sec:zar_ch} we will present a way to implement Zariski decomposition with a simple formula, which will be the basis of the cohomology discussion in Section~\ref{sec:bun_coh}. In the present section, we present a pedagogical iterative algorithm, based on a classical proof for Zariski's theorem - see for example Theorem~14.14 in Ref.~\cite{badescu2001algebraic}.

The algorithm begins with a naive guess of the support for the negative part $N$, as detected by negative intersections. A candidate Zariski decomposition is then constructed. However there then appear new negative intersections, so the process is iterated.

The following steps lead to the unique Zariski decomposition of an effective $\mbb{Q}$-divisor $D$. Let $\mathcal I(S)$ denote the set of all irreducible negative divisors on $S$. And set $I = \varnothing$.
\begin{enumerate}
\item Determine the set of curves $\{ C\in \mathcal I(S) \, | \, C \cdot D < 0 \}$. This set is non-empty, unless $D$ is nef, in which case its Zariski decomposition is trivial. Incorporate these into the set $\tilde{I} = I \cup \{ C\in \mathcal I(S) \, | \, C \cdot D < 0 \}$.
\item Construct the unique, effective $\mbb{Q}$-divisor $\tilde{N}$ with support $\tilde{I}$ such that $\tilde N \cdot C_i = D \cdot C_i$ for all $C_i \in \tilde{I}$.
\item Define $\tilde{P} := D-\tilde{N}$. If this is nef, take $P = \tilde{P}$ and $N = \tilde{N}$. Otherwise, repeat the first two steps with $I = \tilde{I}$ and $D = \tilde{P}$. 
\end{enumerate}

\noindent The algorithm must terminate because each iteration increases the size of the set $\tilde{I}$, while $\tilde{I}$ is finite since $\mathrm{Supp}(D)$ is finite and $\tilde{I} \subset \mathrm{Supp}(N) \subset \mathrm{Supp}(D)$. The uniqueness and effectiveness of $\tilde{N}$ at each stage follow respectively from Lemmas~14.12 and 14.9 of Ref.~\cite{badescu2001algebraic}. 

\smlhdg{Example}

\noindent Consider the Gorenstein Fano toric surface $F_8$, whose ray diagram is depicted in \fref{fig:F8}. 

\begin{figure}[h]
\begin{center}
\raisebox{0in}{\includegraphics[width=3cm]{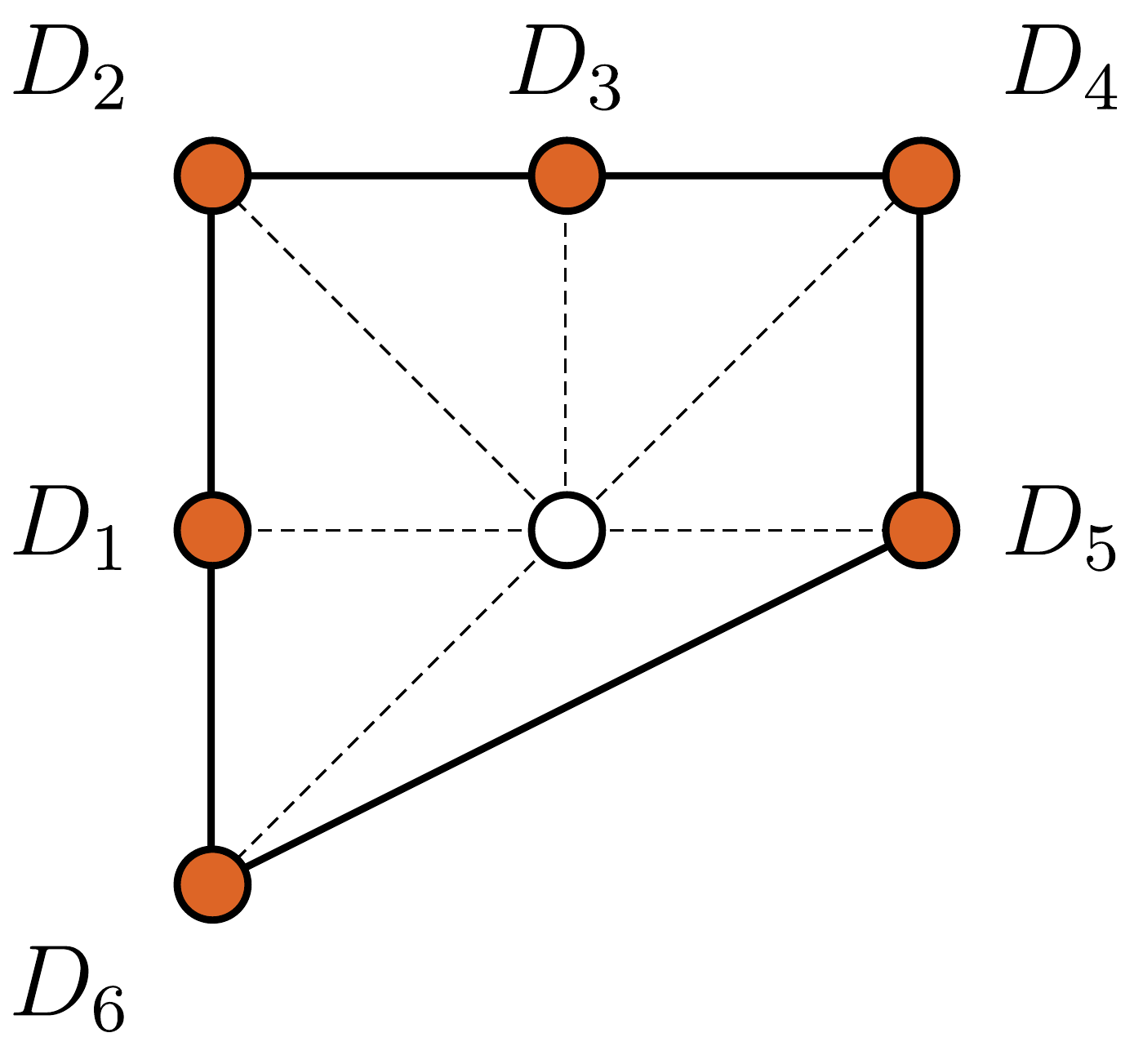}}
\caption{\itshape\small The ray diagram for the Gorenstein Fano toric surface $F_8$.}
\label{fig:F8}
\end{center}
\end{figure}

In Appendix~\ref{app:tor_surf} we provide a reminder on how to compute important properties of a toric surface from its ray diagram, which we apply here for the case of $F_8$. The diagram shows $6$ rays ${\bf v}_1,\dots, {\bf v}_6$, corresponding to toric divisors $D_1,\ldots, D_6$ labelled in the  order
\be
D_1: (-1, 0) \,, ~~ D_2: (-1, 1) \,, ~~ D_3: (0, 1) \,, ~~ D_4: (1, 1) \,, ~~ D_5: (1, 0) \,, ~~ D_6: (-1, -1) \,.
\ee
There are four linear relations between the six rays leading to the following weight system: 
\begin{equation}
\begin{array}{ccc}
\begin{array}{l}
\\[2pt]
{\bf v}_1 + {\bf v}_5 = {\bf 0} \\[0pt]
{\bf v}_2+ 2{\bf v}_5+{\bf v}_6 = {\bf 0} \\[0pt]
{\bf v}_3+{\bf v}_5+{\bf v}_6 = {\bf 0} \\[0pt]
{\bf v}_4+{\bf v}_6 = {\bf 0}
\end{array} & 
~~~~~~~~~~&
\begin{array}{cccccc}
D_1~&~D_2~&~D_3~&~D_4~&~D_5~&~D_6\\[2pt]
\hline
1& 0 &0&0&1&0\\[0pt]
0& 1 &0&0&2&1\\[0pt]
0& 0 &1&0&1&1\\[0pt]
0& 0 &0&1&0&1
\end{array}
\end{array}
\end{equation}
The toric divisors $D_1, D_2,D_3,D_4$ can be used as a basis for the Picard lattice. In terms of these, the expressions for $D_5$ and $D_6$, read off from the weight system, are $D_5 = D_1 + 2D_2 + D_3$ and $D_6 = D_2 + D_3 + D_4$. The self-intersections are given by:
\be
D_1^2 = -2 \,, \quad D_2^2 = -1 \,, \quad D_3^2 = -2 \,, \quad D_4^2 = -1 \,, \quad D_5^2 = 0 \,, \quad D_6^2 = 0 \,.
\ee
In this basis we will write divisors as $D = (\,\cdot\,,\,\cdot\,,\,\cdot\,,\,\cdot\,)$. The intersection form (see Appendix~\ref{app:tor_surf} for details about how to infer intersections from the toric data) is
\be
G := (D_i \cdot D_j) = 
\begin{pmatrix}
-2 &  ~~1 &  ~~0 &  ~~0~ \\
 ~~1 & -1 &  ~~1 &  ~~0~ \\
~~ 0 &  ~~1 & -2 &  ~~1~ \\
~~ 0 &  ~~0 &  ~~1 & -1~
\end{pmatrix}
\,.
\ee
The Mori cone generators $\moricn_i$ and the dual nef cone generators $\nefcn_j$ are given by
\be
\begin{aligned}
\{\moricn_i\} &= \{ (1,0,0,0) \,, ~ (0,1,0,0) \,, ~ (0,0,1,0) \,, ~ (0,0,0,1) \} \,, \\[4pt]
\{\nefcn_i\} &= \{(0, 1, 1, 1)\,, ~(1, 2, 2, 2)\,, ~(1, 2, 1, 1)\,, ~(1, 2, 1, 0)\} \,.
\end{aligned}
\ee
For applying the algorithm above, we need the set of rigid irreducible curves. In the present case these are simply the Mori cone generators.

With $D = 2D_1+D_2+D_3$, let us apply the above algorithm to find its Zariski decomposition. Applying the steps of the algorithm presented above we have the following.
\begin{enumerate}
\item We note $D \cdot \{ D_1 \,, \ldots \,, D_4 \} = \{ -3, 2, -1, 1 \}$, so that $\mathrm{Supp}(\tilde{N}) = \tilde{I} = \{D_1 \,, D_3 \}$.
\item The conditions $\tilde{N} \cdot D_i = D \cdot D_i$ for $D_i \in \mathrm{Supp}(\tilde{N})$ then uniquely fix $\tilde{N} = \frac{3}{2}D_1 + \frac{1}{2}D_3$.
\item Define $\tilde{D} := D - \tilde{N} = \frac{1}{2}D_1 + D_2 + \frac{1}{2}D_3$. Noting $\tilde{D} \cdot \{ D_1 \,, \ldots \,, D_4 \} = \{ 0, 0, 0, \frac{1}{2} \}$, we are done as $\tilde{D}$ is nef.
\end{enumerate}
The Zariski decomposition of $D$ is hence
\be\label{eq:DF8example}
D = 2D_1 + D_2 + D_3 = \left(\frac{1}{2}D_1 + D_2 + \frac{1}{2}D_3\right) + \left(\frac{3}{2}D_1 + \frac{1}{2}D_3\right) \equiv P + N \,.
\ee

\subsection{Zariski decomposition for line bundles}
\label{sec:zar_lb}

\smlhdgnogap{Map on line bundles}

\noindent We know from property (Z2) in Proposition~\ref{prop:Z} that Zariski decomposition determines a map between effective divisor classes, $\divcls{D} \to \divcls{P}$. However, while for a line bundle the corresponding class $\divcls{D}$ is integral,  the class $ \divcls{P}$ is in general not integral. In view of the cohomology result~\eqref{eq:zar_coh_rel} below, we define the round-down version of the $\mbb{Q}$-divisor $P$ as the $\mathbb Z$-divisor $\floor{P}$, obtained by rounding down each coefficient in the divisor expansion of $P$. The round-up $\ceil{P}$ of a divisor is defined analogously. We prove the following result.

\begin{prp}
Let $\surf$ be a smooth projective surface. If $D$ and $D'$ are two linearly equivalent integral divisors on $\surf$, the round-down versions of their positive parts, $\floor P$ and $\floor {P'}$ are also linearly equivalent. The same is true for the round-up versions $\ceil P$ and $\ceil {P'}$
\end{prp}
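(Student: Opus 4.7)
The plan is to reduce the claim to a statement about coefficients via a clean use of Properties~(Z1) and (Z2) from Proposition~\ref{prop:Z}. The key observation is that although $P$ and $P'$ are in general only $\mathbb{Q}$-divisors, their difference $P-P'$ is actually an \emph{integral} divisor, which forces their fractional parts (on each irreducible component) to agree.

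First I would assume $D \lineq D'$, so that $D - D' = \mathrm{div}(f)$ for some meromorphic function $f$. By Property~(Z1), linear equivalence implies numerical equivalence implies $N = N'$. Hence
\be
P - P' \;=\; (D - N) - (D' - N') \;=\; D - D' \;=\; \mathrm{div}(f),
\ee
which is an integral divisor, even though $P$ and $P'$ themselves need not be.

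Next I would expand both $\mathbb{Q}$-divisors over the (finite) union of their supports, writing $P = \sum_i p_i\, C_i$ and $P' = \sum_i p'_i\, C_i$ with $p_i, p'_i \in \mathbb{Q}$ and distinct irreducible $C_i$. Integrality of $P - P'$ then forces $p_i - p'_i \in \mathbb{Z}$ for every $i$. This is the whole content of the proof: when two rational numbers differ by an integer, they have the same fractional part, so
\be
\floor{p_i} - \floor{p'_i} \;=\; p_i - p'_i \quad \text{and} \quad \ceil{p_i} - \ceil{p'_i} \;=\; p_i - p'_i.
\ee

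Summing over $i$ gives $\floor{P} - \floor{P'} = P - P' = \mathrm{div}(f)$ and likewise $\ceil{P} - \ceil{P'} = \mathrm{div}(f)$, proving that $\floor{P} \lineq \floor{P'}$ and $\ceil{P} \lineq \ceil{P'}$. There is no real obstacle here; the only thing to be mildly careful about is the bookkeeping when $P$ and $P'$ have different supports, which is handled trivially by regarding the missing coefficients as zero (which are integers and thus equal to their own floor and ceiling). The heart of the argument is the passage from $D \lineq D'$ to the integrality of $P - P'$, which is where Property~(Z1) does all the work.
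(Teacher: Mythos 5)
Your proof is correct and follows essentially the same route as the paper's: both hinge on Property~(Z1) giving $N=N'$, after which the rounding bookkeeping is elementary (the paper phrases it as $\floor{P}=D-\ceil{N}$ using integrality of $D$, while you phrase it as integrality of $P-P'$ forcing equal fractional parts componentwise — the same observation in different clothing).
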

\begin{proof}
Let $D=P+N$ and $D' = P'+N'$ be the  Zariski decompositions of the two given linearly equivalent divisors. Since $D$ and $D'$ are integral, any floor or ceiling operations have no effect. Hence 
$\floor{P} = D - \ceil{N}$ and $\floor{P'} = D' - \ceil{N'}$.
Since $D$ and $D'$ are linearly equivalent, $N = N'$ by (Z1), and hence $\ceil{N} = \ceil{N'}$, which implies $\floor{P}$ and $\floor{P'}$ are linearly equivalent. Clearly the same argument applies for $\ceil{P}$ and $\ceil{P'}$.
\end{proof}

The result implies that it is possible to define maps
\begin{equation}
\begin{gathered}
\begin{aligned}
\phifl\colon\overline{{\rm NE}}(S)\cap {\rm NS}(S)&\rightarrow \overline{{\rm NE}}(S)\cap {\rm NS}(S) \\
\phifl\colon [D]&\mapsto[\floor{P}]
\end{aligned}
\vspace{.2cm} \\
\begin{aligned}
\phicl\colon \overline{{\rm NE}}(S)\cap {\rm NS}(S)&\rightarrow \overline{{\rm NE}}(S)\cap {\rm NS}(S)\\
\phicl\colon [D]&\mapsto[\ceil{P}]
\label{eq:phi_map}
\end{aligned}
\end{gathered}
\end{equation}
between effective integral linear equivalence classes (line bundles), where the classes $\divcls{\floor{P}}$ and $\divcls{\ceil{P}}$ are constructed by choosing any integral effective representative $D'$ of the class $[D]$, followed by taking its positive Zariski part $P'$, then rounding down or up to $\floor{P'}$ or $\ceil{P'}$, and finally going to the linear equivalence class~$\divcls{\floor{P'}}$ or $\divcls{\ceil{P'}}$.

\smlhdg{Preservation of zeroth cohomology}

\noindent Let $D=P+N$ be an effective integral divisor and $\cO_S(D)$ the line bundle associated to $[D]$. Since the negative part of $D$ is determined by intersection properties alone, $N$ is a subdivisor of every element of the complete linear system $|D|$, which implies that $\cls{D}$ and $\cls{P}$ have the same dimension. Moreover, since the complete linear system of $D$ contains only integral effective divisors, it follows that the round-up, $\ceil{N}$, must be a subdivisor of every element of $\cls{D}$. Consequently, 
\be
\cls{D} \cong \cls{\floor{P}} \,.
\ee
Equivalently, since in the divisor line bundle correspondence there is the isomorphism $\cls{D} \cong \mbb{P}\left(H^0\big(\surf,\mc{O}_\surf(D)\big)\right)$, we can say that Zariski decomposition provides a map on line bundles that preserves the zeroth cohomology. This is summarised in the following theorem.
\begin{thm}
\label{thm:zar_coh_rel}
Let $\surf$ be a smooth projective surface, and let $D$ be an effective $\mbb{Z}$-divisor with Zariski decomposition $D = P + N$. Then
\be
H^0\big(S, \mc{O}_S(D)\big) \cong H^0\big(S, \mc{O}_S(\floor{P})\big) \,.
\label{eq:zar_coh_rel}
\ee
\end{thm}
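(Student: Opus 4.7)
The plan is to establish the isomorphism by exhibiting a bijection of complete linear systems $|D| \leftrightarrow |\floor{P}|$, which then lifts to the claimed cohomology isomorphism via the identification $|D| \cong \mathbb{P}\bigl(H^0(\surf,\mc{O}_\surf(D))\bigr)$ stated in Section~\ref{sec:bas_defs}.

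The first step is to show that the negative part $N$ is contained as a subdivisor of every element of $|D|$. Given any effective $D' \lineq D$, the intersection $D' \cdot C_i$ with an irreducible component $C_i \in \mathrm{Supp}(N)$ equals $D \cdot C_i = (P + N)\cdot C_i = N \cdot C_i$, where the last equality uses condition 3 of Theorem~\ref{thm:Zariski}. Expanding $D' = \sum_k x_k C_k + \sum_j y_j C'_j$, where the $C_k$ run over $\mathrm{Supp}(N)$ and the $C'_j$ are the other irreducible components, the contributions $y_j (C'_j \cdot C_i)$ are non-negative because distinct irreducible curves intersect non-negatively. Writing $N = \sum_k n_k C_k$ and $w_k = n_k - x_k$, this yields $\sum_k w_k (C_k\cdot C_i) \geq 0$ for every $C_i \in \mathrm{Supp}(N)$. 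Since the intersection matrix on $\mathrm{Supp}(N)$ is negative definite (condition 2 of Theorem~\ref{thm:Zariski}), a standard linear-algebra lemma (Lemma~14.9 in Ref.~\cite{badescu2001algebraic}, essentially the detection-of-rigid-divisors argument of Section~\ref{sec:bas_defs} applied simultaneously to all components) forces $w_k \leq 0$, i.e.\ $x_k \geq n_k$. Hence $D' - N$ is effective.

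Next I would upgrade this to $\ceil{N}$ using integrality. Since $D'$ is integral and each coefficient $x_k$ of $C_k$ in $D'$ is an integer satisfying $x_k \geq n_k$, it follows that $x_k \geq \ceil{n_k}$, so $D' - \ceil{N}$ is an effective integral divisor. This defines a map $\alpha : |D| \to |\floor{P}|$ by $D' \mapsto D' - \ceil{N}$, which is well-defined since $D' - \ceil{N} \lineq D - \ceil{N} = \floor{P}$. Conversely, $\beta : |\floor{P}| \to |D|$ defined by $D'' \mapsto D'' + \ceil{N}$ is also well-defined (the sum is effective integral and linearly equivalent to $D$), and clearly $\alpha \circ \beta = \mathrm{id}$ and $\beta\circ\alpha = \mathrm{id}$, giving a bijection $|D| \cong |\floor{P}|$.

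Finally, since the bijection is given by subtracting/adding the fixed divisor $\ceil{N}$, it is an isomorphism of projective spaces under the identification $|D| \cong \mathbb{P}\bigl(H^0(\surf,\mc{O}_\surf(D))\bigr)$ and $|\floor{P}| \cong \mathbb{P}\bigl(H^0(\surf,\mc{O}_\surf(\floor{P}))\bigr)$ (multiplication by the canonical section of $\mc{O}_\surf(\ceil{N})$ cut out by $\ceil{N}$ embeds $H^0(\surf,\mc{O}_\surf(\floor{P}))$ into $H^0(\surf,\mc{O}_\surf(D))$ as all of it). This yields the claimed isomorphism of vector spaces. The main obstacle is the first step---proving $N \leq D'$ as divisors---which hinges on the negative-definiteness of the intersection matrix of $\mathrm{Supp}(N)$; the subsequent integrality refinement and the passage from linear systems to cohomology are then essentially formal.
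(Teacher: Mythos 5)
Your proof is correct and is essentially the argument the paper relies on: the paper's own ``proof'' is just a citation to Proposition~2.3.21 of Lazarsfeld, and the informal discussion immediately preceding the theorem (that $N$, and hence by integrality $\ceil{N}$, is a subdivisor of every element of $\cls{D}$, giving $\cls{D}\cong\cls{\floor{P}}$) is exactly what you have fleshed out. Your only genuine addition is making the key step rigorous via the negative-definiteness/negativity-lemma argument, which is the standard route and is carried out correctly.
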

\begin{proof}
See Proposition~2.3.21 in Ref.~\cite{lazarsfeld2004positivity1}.
\end{proof}

\vspace{21pt}
In fact, it is straightforward to see that the same result  applies in the case of the round-up $\ceil{P}$.
\begin{crl}\label{crl:zar_coh_rel_cl}%
Let $\surf$ be a smooth projective surface, and let $D$ be an effective $\mbb{Z}$-divisor with Zariski decomposition $D = P + N$. Then
\be
H^0\big(\surf,\mc{O}_\surf(D)\big) \cong H^0\big(\surf,\mc{O}_\surf(\ceil{P})\big) \,.
\ee%
\end{crl}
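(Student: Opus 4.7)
\begin{prf}[Plan]
The plan is to interpret $\ceil{P}$ as sitting between $\floor{P}$ and $D$ in the partial order on integral divisors, and then apply Theorem~\ref{thm:zar_coh_rel} together with the functoriality of sections under effective inclusions. Concretely, since $D$ is integral we have $\ceil{P} = D - \floor{N}$ and $\floor{P} = D - \ceil{N}$; because $N$ is effective with rational coefficients, $\floor{N}$ is an effective integral divisor and $\ceil{N} - \floor{N}$ is an effective integral divisor (supported on the components of $N$ with non-integral coefficient). Hence both $D - \ceil{P} = \floor{N}$ and $\ceil{P} - \floor{P} = \ceil{N} - \floor{N}$ are effective integral divisors.

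Next I would use the standard fact that an inclusion of integral divisors $D_1 \leq D_2$ induces an injection $H^0(\surf, \mc{O}_\surf(D_1)) \hookrightarrow H^0(\surf, \mc{O}_\surf(D_2))$, realised by tensoring with the canonical section of $\mc{O}_\surf(D_2 - D_1)$. Applied to the chain $\floor{P} \leq \ceil{P} \leq D$, this yields injections
\be
H^0\big(\surf, \mc{O}_\surf(\floor{P})\big) \hookrightarrow H^0\big(\surf, \mc{O}_\surf(\ceil{P})\big) \hookrightarrow H^0\big(\surf, \mc{O}_\surf(D)\big) \,,
\ee
whose composition is precisely the multiplication-by-$s_{\ceil{N}}$ map that implements the isomorphism in Theorem~\ref{thm:zar_coh_rel}.

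To conclude, I would observe that if two injective linear maps between finite-dimensional vector spaces compose to an isomorphism, each must individually be an isomorphism: the induced inequalities on dimensions force equality, and injectivity then upgrades to bijectivity. Applying this to the chain above gives $H^0(\surf, \mc{O}_\surf(\ceil{P})) \cong H^0(\surf, \mc{O}_\surf(D))$, which is the desired statement. There is no real obstacle here; the only thing to watch is that $\ceil{N} - \floor{N}$ is genuinely effective and integral so that the middle inclusion is well-defined, which follows at once from the componentwise definition of the floor and ceiling.
\end{prf}
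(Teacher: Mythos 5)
Your proof is correct, but it takes a genuinely different route from the paper's. You sandwich $\ceil{P}$ between $\floor{P}$ and $D$ via the chain of effective differences $\floor{P} \leq \ceil{P} \leq D$ (using $\ceil{P} = D - \floor{N}$, $\floor{P} = D - \ceil{N}$, both differences effective and integral), and then squeeze: the two multiplication-by-canonical-section maps are injective, and since Theorem~\ref{thm:zar_coh_rel} forces the outer dimensions to agree, both injections must be isomorphisms. (Note that you do not even need to identify the composite with the specific map realising the isomorphism of Theorem~\ref{thm:zar_coh_rel}; the dimension equality $h^0(\floor{P}) = h^0(D)$ alone closes the squeeze.) The paper instead observes that $\ceil{P} = P + \Delta$ with $0 \leq f_i < 1$ and $\mathrm{Supp}(\Delta) \subseteq \mathrm{Supp}(N)$ is itself a valid Zariski decomposition, so $\ceil{P}$ and $D$ share the same positive part $P$ and hence the same round-down $\floor{P}$; applying Theorem~\ref{thm:zar_coh_rel} to each then gives the claim. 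Your argument is more elementary in that it avoids re-verifying the three axioms of Zariski decomposition and invoking uniqueness, relying only on monotonicity of $h^0$ under adding effective divisors; the paper's argument buys the extra structural fact that $D$ and $\ceil{P}$ have identical positive parts, which is conceptually useful elsewhere (e.g.\ when iterating the decomposition). Both proofs are complete and correct.
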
%
\begin{proof}
The ceiling $\ceil{P}$ of the positive part is related to $P$ by a fractional effective divisor, i.e.\
\be
\ceil{P} = P + \sum_i f_i D_i \equiv P + \Delta \,,
\ee
where $0 \leq f_i < 1$. Importantly, since $D$ is integral, $\mathrm{Supp}(\Delta) \subseteq \mathrm{Supp}(N)$. From the properties of Zariski decomposition, recalled in Section~\ref{sec:zar_dec_subsec}, it is then trivial to verify that the above expression for $\ceil{P}$ is in fact a Zariski decomposition, with positive part $P$ and negative part $\Delta$, so that in particular $D$ and $\ceil{P}$ have the same positive part $P$. But Theorem~\ref{thm:zar_coh_rel} then implies both $H^0\big(S, \mc{O}_S(\ceil{P})\big) \cong H^0\big(S, \mc{O}_S(\floor{P})\big)$ and $H^0\big(S, \mc{O}_S(D)\big) \cong H^0\big(S, \mc{O}_S(\floor{P})\big)$, which together establish the claim.
\end{proof}

\smlhdg{Iteration of Zariski decomposition and divisor rounding}

\noindent While the positive part $P$ in the Zariski decomposition $D=P+N$ is nef, the same is not in general true of the round-down $\floor{P}$ or the round-up $\ceil{P}$. That is, the maps $\phifl$ and $\phicl$, defined above, do not in general output in the nef cone. So it may be possible to perform a subsequent Zariski decomposition.

For simplicity we focus on the round-down $\floor{P}$. Denoting $\floor{P} = D^{(1)}$, its Zariski decomposition can be written as
\vspace{-4pt}
\begin{equation*}
D^{(1)} = P^{(1)} + N^{(1)} \,.
\vspace{-4pt}
\end{equation*}
This process can be iterated until for some $n$, $\floor{P^{(n)}}$ is nef, which includes the possibility of being zero. Equivalently, the iteration takes place as long as the Zariski decomposition gives a non-trivial negative part. The fact that such an $n$ must exist is clear, since at every iteration Zariski decomposition and flooring reduce at least one of the coefficients in the divisor expansion. 
It is useful to see a real example, which we choose from among the $16$ Gorenstein Fano toric surfaces.

\smlhdg{The $F_8$ example, once again}

\noindent Consider the Gorenstein Fano toric surface $F_8$, whose ray diagram is depicted in \fref{fig:F8} and whose properties we recalled in Section~\ref{sec:zar_dec_subsec}.

We also take again as our initial divisor $D = 2D_1+D_2+D_3$, which being integral defines a line bundle. In Section~\ref{sec:zar_dec_subsec}, we determined the Zariski decomposition of $D$ to be
\be
D = 2D_1 + D_2 + D_3 = \left(\frac{1}{2}D_1 + D_2 + \frac{1}{2}D_3\right) + \left(\frac{3}{2}D_1 + \frac{1}{2}D_3\right) \equiv P + N \,.
\ee
Since $P$ is not an integral divisor, $P \neq \floor{P}$. In particular, $\floor{P} = D_2$. Noting the intersection properties
\be
\floor{P} \cdot \{ D_1 \,, \ldots \,, D_4 \} = \{ 1 \,, -1 \,, 1 \,, 0 \} \,,
\ee
we see that $\floor{P}$ is not nef. Hence we look for another Zariski decomposition, $\floor{P} = P^{(1)} + N^{(1)}$. Applying again the algorithm of Section~\ref{sec:zar_dec_subsec}, we find straightforwardly
$\floor{P} = 0 + D_2 \equiv P^{(1)} + N^{(1)} \,.$
In this particular case, $\floor{P^{(1)}} = P^{(1)}$. Since $\floor{P^{(1)}} = 0$ is nef, the iteration process terminates here.

In terms of line bundles, the map $D \to \floor{P^{(1)}}$ of integral divisors becomes
$
\mc{O}_{F_8}(D) \to \mc{O}_{F_8} \big(\lfloor P^{(1)} \rfloor \big) = \mc{O}_{F_8} \,,
$
i.e.\ the final bundle is the trivial line bundle, which is nef. For the preserved zeroth cohomology, we have
\be
H^0\big( \mc{O}_{F_8} (D) \big) \cong H^0\big( \mc{O}_{F_8} \big( \lfloor P^{(1)} \rfloor \big) \big) \cong \mbb{C} \,,
\ee
as well as intermediate isomorphisms. It is easy to check within toric geometry that the complete linear system $\cls{D}$ indeed contains only one element.

\begin{figure}[h]
\begin{center}
\raisebox{0in}{\includegraphics[width=5.8cm]{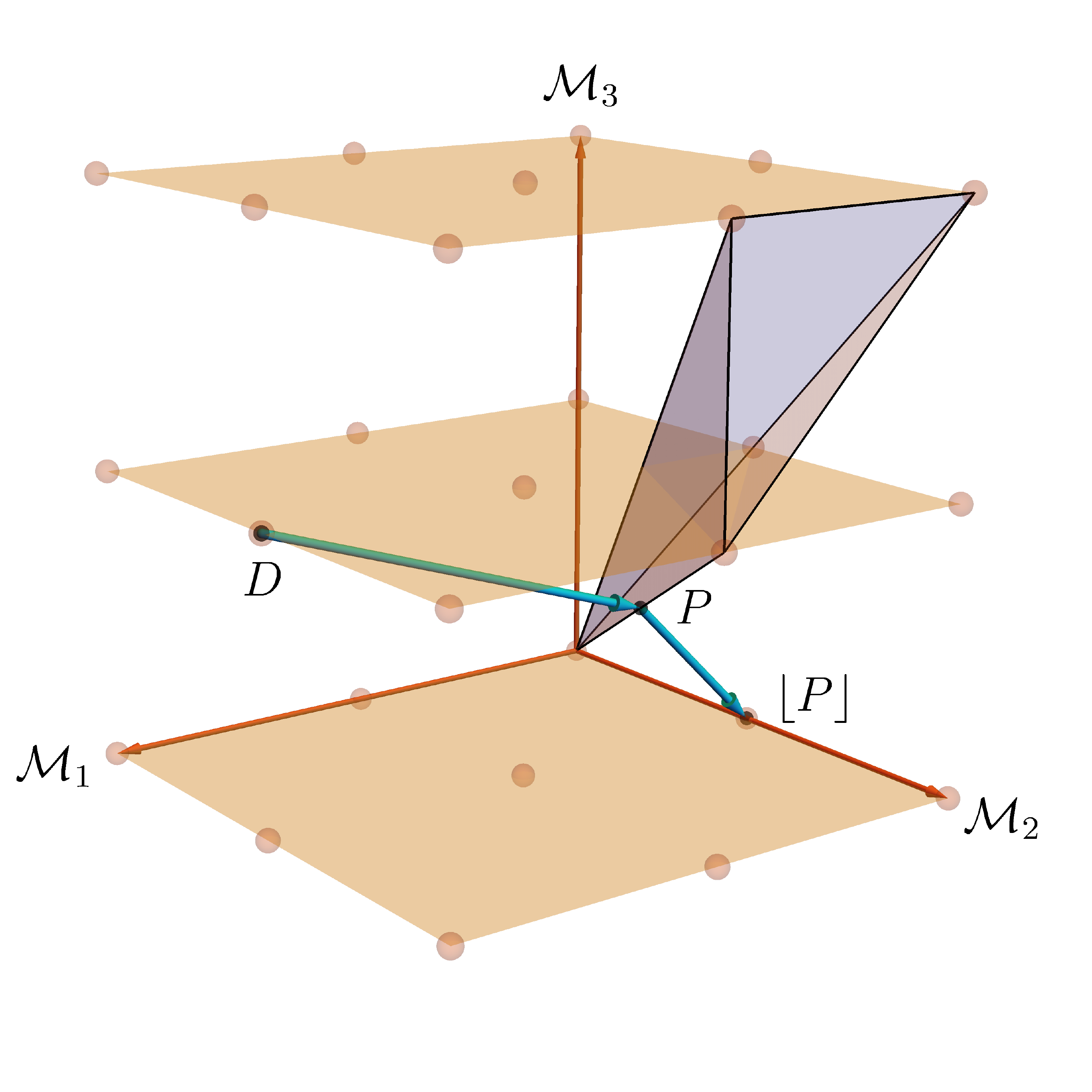}}
\hspace{1cm}
\raisebox{0.2in}{\includegraphics[width=5.cm]{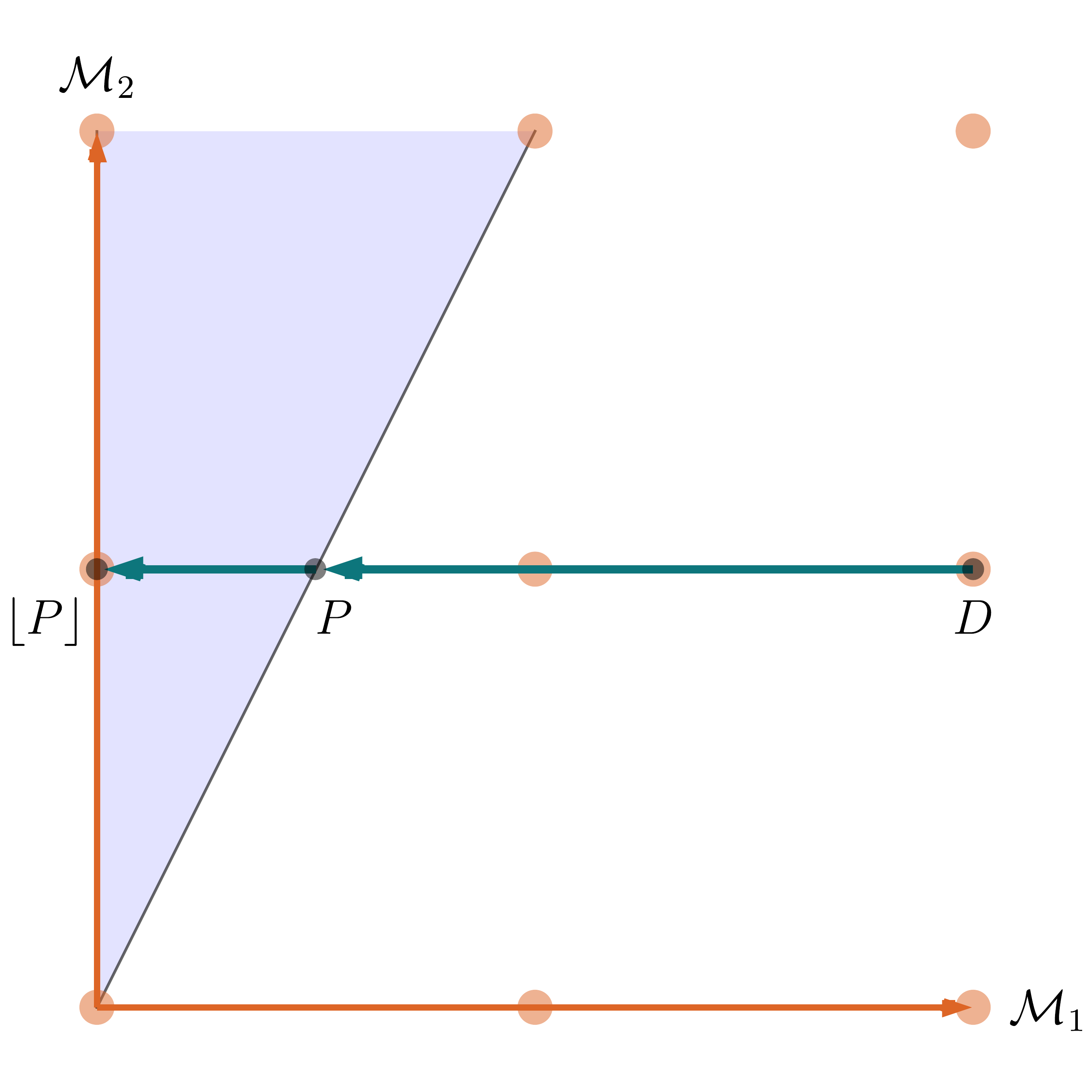}}
\capt{5.9in}{fig:flooring}{Illustration of Zariski decomposition $D \to P + N$ followed by a round-down $P \to \floor{P}$ on the Gorenstein Fano toric surface $F_8$. The blue regions corresponds to the nef cone. Lattice points correspond to effective integral divisors. Left image: projection onto the 3d subspace $(\cdot,\cdot,\cdot,0)$. Right image: projection onto the 2d subspace $(\cdot,\cdot,0,0)$.}
\end{center}
\end{figure}

%%%
%%%%%
%% SECTION
%%%%%
%%%
\newpage
\section{Zariski chambers}
\label{sec:zar_ch}

Let $D = P + N$ be a Zariski decomposition. By varying the coefficients in $N$ while holding the support fixed, yielding effective divisors $\tilde{N}$, one obtains divisors $\tilde{D}$ whose Zariski decompositions are $\tilde{D} = P + \tilde{N}$. By keeping $P$ fixed and adding various $\tilde{N}$ with fixed support, one performs what might be called a `Zariski composition'.

 If $\mathrm{Supp}(N) \neq \varnothing$, then the positive part $P$ lies on a boundary of the nef cone. To see this, note $P \cdot C = 0$ for all curves $C \in \mathrm{Supp}(N)$. As the $C$ are rigid and hence generators of the Mori cone, $C \cdot ( \ldots ) = 0$ specifies a hyperplane which meets the nef cone along a boundary. But $P$ is nef by definition, so $P$ must lie on this boundary. Hence in a Zariski composition, one begins at a point on a boundary of the nef cone. One can then imagine varying the starting point across the entire boundary. The region reached by all such compositions will then be given by translating the entire boundary along the elements in $\mathrm{Supp}(N)$.

This perspective was formalised in Ref.~\cite{Bauer04}.  The authors showed that on any smooth projective surface the interior of the effective cone, which is the big cone, can be decomposed into rational locally polyhedral subcones called `Zariski chambers' such that in each region the support of the negative part of the Zariski decomposition of the divisors is constant. Moreover, these subcones are in one-to-one correspondence with faces of the nef cone that intersect the big cone.
\begin{defn}
Let $\surf$ be a smooth projective surface, and let $F$ denote a face of the nef cone which intersects the big cone. The Zariski chamber $\Sigma_F$ associated with $F$ is the subcone of the effective cone constructed by translating $F$ along all negative curves that are orthogonal to $F$ with respect to the intersection form, where the boundary between two chambers belongs to the chamber whose corresponding face has higher dimension.  
\end{defn}
\noindent Note that in general a Zariski chamber is a cone which is neither open nor closed. 

To construct these chambers, one requires knowledge of at least two out of three of the Mori cone, nef cone, and intersection form, which may in general be non-trivial to determine. Moreover, note that since the possible supports for the negative part of a Zariski decomposition are in one-to-one correspondence with the collections $\{C_A\}$ of rigid curves which have negative definite intersection matrix, the same is true of the faces of the nef cone that intersect the big cone. The Zariski chambers are determined by knowledge of the set of such collections, which we denote $\mc{R}(\surf)$ on a surface $\surf$. 

In Ref.~\cite{Bauer04} Zariski chambers were defined only in the interior of the effective cone, since the authors were interested in the volume properties of big line bundles. For our purposes we do not need to make this restriction. As such, we can extend Zariski chambers to the closure of the effective cone. 

\smlhdg{Map for fixed $\text{Supp}(N)$}

\noindent As we now explain, within a Zariski chamber the form of the Zariski decomposition is fixed. Let $D$ be an effective divisor with curve decomposition 
\begin{equation*}
D = \sum_{i=1}^r a_i C_i \,,
\end{equation*}
and Zariski decomposition
\begin{equation*}
D ~=~ P + N ~=~ \sum_{i=1}^r x_i C_i ~~+\!\!\! \sum_{\substack{A=1\\[1pt]C_A\in \mathcal{I}(S)}}^m y_A C_A \,,
\end{equation*}
where $C_A$ are rigid curves and the intersection matrix $(C_A\cdot C_B)$ is negative definite. When the support of $N$ is known, as throughout a Zariski chamber, the coefficients $y_A$ can be straightforwardly obtained as follows.

\begin{lma}
Let $D = P + N$ be the Zariski decomposition of an effective divisor. Then for every $C_A \in \mathrm{Supp}(N)$, the coefficient $y_A$ of $C_A$ in $N$ is given by
\begin{equation}
y_A = - D\cdot C_{A,\mathrm{Supp}(N)}^\vee \in\mathbb{Q} \,,
\label{eq:yi}
\end{equation}
where $C_{A,\mathrm{Supp}(N)}^\vee$ is the unique divisor with $\mathrm{Supp}(C_{A,\mathrm{Supp}(N)}^\vee) \subseteq \mathrm{Supp}(N)$ satisfying $C_{A,\mathrm{Supp}(N)}^\vee\cdot C_B = -\delta_{AB}$ for all $C_B \in \mathrm{Supp}(N)$.
\label{lem:ya_coeffs}
\end{lma}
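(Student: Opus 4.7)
The plan is to exploit directly the defining properties of Zariski decomposition listed in Theorem~\ref{thm:Zariski}, together with the linear algebra of the intersection form restricted to $\mathrm{Supp}(N)$.

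First I would establish the existence and uniqueness of the dual divisor $C_{A,\mathrm{Supp}(N)}^\vee$. By property~2 of Theorem~\ref{thm:Zariski}, the intersection matrix $(C_A\cdot C_B)$ restricted to the components of $\mathrm{Supp}(N)$ is negative definite, hence invertible. Therefore, within the $\mathbb{Q}$-vector space spanned by the irreducible curves $C_B \in \mathrm{Supp}(N)$, the linear system $C_{A,\mathrm{Supp}(N)}^\vee \cdot C_B = -\delta_{AB}$ has a unique solution, which is a well-defined $\mathbb{Q}$-divisor supported on $\mathrm{Supp}(N)$.

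Next I would compute $D \cdot C_{A,\mathrm{Supp}(N)}^\vee$ using the decomposition $D = P + N$. The key observation is that $C_{A,\mathrm{Supp}(N)}^\vee$ is a linear combination of components of $N$, and property~3 of Theorem~\ref{thm:Zariski} states $P \cdot C_B = 0$ for every irreducible $C_B \in \mathrm{Supp}(N)$. By bilinearity of the intersection product this extends to $P \cdot C_{A,\mathrm{Supp}(N)}^\vee = 0$, so only the negative part contributes:
\begin{equation*}
D \cdot C_{A,\mathrm{Supp}(N)}^\vee = N \cdot C_{A,\mathrm{Supp}(N)}^\vee = \sum_B y_B \, C_B \cdot C_{A,\mathrm{Supp}(N)}^\vee = \sum_B y_B (-\delta_{AB}) = -y_A \,.
\end{equation*}
Rearranging gives the claimed formula $y_A = - D\cdot C_{A,\mathrm{Supp}(N)}^\vee$.

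There is essentially no obstacle in this argument; the work has already been done by Zariski's theorem itself. The only subtle point worth flagging in the write-up is the appeal to negative definiteness to guarantee that the dual divisor $C_{A,\mathrm{Supp}(N)}^\vee$ is well-defined and unique, because otherwise the expression $-D \cdot C_{A,\mathrm{Supp}(N)}^\vee$ would be ambiguous. The rationality of $y_A$ is automatic from the fact that the intersection matrix and the intersection $D \cdot C_B$ are integral, so inverting the intersection matrix yields rational coefficients.
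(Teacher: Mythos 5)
Your proof is correct and follows essentially the same route as the paper's: negative definiteness of the intersection matrix on $\mathrm{Supp}(N)$ gives existence and uniqueness of the dual, $P\cdot C_{A,\mathrm{Supp}(N)}^\vee=0$ by property~3 of Theorem~\ref{thm:Zariski} and bilinearity, and $N\cdot C_{A,\mathrm{Supp}(N)}^\vee=-y_A$ from the defining property of the dual. The added remark on rationality is a harmless and correct elaboration.
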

\begin{proof}
Since the intersection matrix $(C_A\cdot C_B)$ is non-degenerate, it follows that $C_{A,\mathrm{Supp}(N)}^\vee$ exists and is unique. From the defining property $C_{A,\mathrm{Supp}(N)}^\vee\cdot C_B = -\delta_{AB}$, it follows that $N \cdot C_{A,\mathrm{Supp}(N)}^\vee = - y_A$. Additionally, since $\mathrm{Supp}(C_{A,\mathrm{Supp}(N)}^\vee) \subseteq \mathrm{Supp}(N)$ and $C_A \cdot P = 0$ for $C_A \in \mathrm{Supp}(N)$, one has $P \cdot C_{A,\mathrm{Supp}(N)}^\vee = 0$.
\end{proof}

The divisor $C_{A,\mathrm{Supp}(N)}^\vee$ should be read `the dual of $C_A$ with respect to the support of $N$'. We note it is a classic result in the context of Zariski decomposition that the divisor $C_A^\vee$ in Lemma~\eqref{lem:ya_coeffs} is effective (see for instance Lemma~14.9 of Ref.~\cite{badescu2001algebraic}). This lemma immediately gives the following formula. Note the rigid curves are a subset of the Mori cone generators, so we write $\cM_i$ for the elements in $\mathrm{Supp}(N)$. 

\begin{prp}
For a divisor $D$ belonging to a Zariski chamber $\Sigma_{i_1,\ldots i_n}$ obtained by translating a codimension $n$ face of the nef cone along the Mori cone generators $\cM_{i_1},\ldots \cM_{i_n}$ orthogonal to the face, the negative part of the Zariski decomposition reads
\begin{equation}
N ~=~ \sum_{k=1}^n\, (-D\cdot \cM_{i_k,\{i_1,\ldots,i_n\}}^\vee) \,\cM_{i_k} \,,
\label{eq:n_gen}
\end{equation}
where the notation $\cM_{i_k,\{i_1,\ldots,i_n\}}^\vee$ for the dual of $\cM_{i_k}$ indicates that it is computed with respect to the set $\{\cM_{i_1},\ldots \cM_{i_n}\}$.
\label{prp:n_gen}
\end{prp}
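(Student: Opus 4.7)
The plan is to verify directly that the proposed decomposition satisfies the three defining properties of Zariski decomposition from Theorem~\ref{thm:Zariski}, and then invoke uniqueness. The argument builds on Lemma~\ref{lem:ya_coeffs} but avoids having to first establish as a separate step that $\mathrm{Supp}(N)$ equals $\{\cM_{i_1},\ldots,\cM_{i_n}\}$.

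First I would unpack the definition of the Zariski chamber $\Sigma_{i_1,\ldots,i_n}$: by construction any $D$ in the chamber admits a presentation
\[
D = P_0 + \sum_{k=1}^n t_k\,\cM_{i_k},
\]
where $P_0$ lies on the face $F$ of the nef cone orthogonal to $R := \{\cM_{i_1},\ldots,\cM_{i_n}\}$ and the coefficients satisfy $t_k \geq 0$. The existence and uniqueness of the dual divisor $\cM_{i_k,R}^\vee$ follow from the negative-definiteness of the intersection form on $R$ (so its Gram matrix is invertible), and effectiveness of the dual is the classical fact cited after Lemma~\ref{lem:ya_coeffs}. Pairing both sides of the presentation of $D$ with $\cM_{i_k,R}^\vee$, and using $P_0 \cdot \cM_{i_m} = 0$ for every $\cM_{i_m} \in R$ (since $P_0 \in F$) together with $\mathrm{Supp}(\cM_{i_k,R}^\vee) \subseteq R$, yields $P_0 \cdot \cM_{i_k,R}^\vee = 0$, so the defining property $\cM_{i_k,R}^\vee \cdot \cM_{i_m} = -\delta_{km}$ gives $t_k = -D \cdot \cM_{i_k,R}^\vee$.

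With the $t_k$ identified, I would set $P := P_0$ and $N := \sum_k (-D \cdot \cM_{i_k,R}^\vee)\,\cM_{i_k}$ and check the three conditions of Theorem~\ref{thm:Zariski}: $P$ is nef since it lies on a face of the nef cone; $N$ is effective because $t_k \geq 0$, and its intersection matrix is negative definite by the assumption that $R \in \mc{R}(\surf)$; finally, $P \cdot \cM_{i_k} = 0$ for every generator appearing in $\mathrm{Supp}(N)$ by virtue of $P_0 \in F$. Uniqueness in Theorem~\ref{thm:Zariski} then forces this to be the Zariski decomposition of $D$, and reading off $N$ gives the stated formula.

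The main subtlety, rather than a genuine obstacle, is that Lemma~\ref{lem:ya_coeffs} expresses the coefficients using duals relative to $\mathrm{Supp}(N)$, whereas the proposition uses duals relative to all of $R$. In the interior of the chamber every $t_k > 0$ and the two agree; on boundary points where some $t_k = 0$, the corresponding term $0 \cdot \cM_{i_k}$ drops harmlessly from $N$, so the formula remains correct. Thus the proof reduces to the direct intersection-theoretic verification sketched above, with only careful bookkeeping around the relationship between the chamber structure and the defining properties of the dual divisors.
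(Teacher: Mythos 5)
Your proof is correct, and it takes a genuinely different route from the paper's. The paper disposes of this proposition in one line --- ``this follows immediately from Lemma~\ref{lem:ya_coeffs}'' --- which implicitly leans on the structural result of Ref.~\cite{Bauer04} that throughout the chamber $\Sigma_{i_1,\ldots,i_n}$ the support of $N$ is exactly $\{\cM_{i_1},\ldots,\cM_{i_n}\}$, so that the duals in the lemma (computed with respect to $\mathrm{Supp}(N)$) coincide with the duals in the proposition (computed with respect to the full set $R$). You instead bypass that input: you read the presentation $D = P_0 + \sum_k t_k\,\cM_{i_k}$ directly off the geometric definition of the chamber as a translate of the face $F$, extract $t_k = -D\cdot\cM_{i_k,R}^\vee$ by pairing, verify the three axioms of Theorem~\ref{thm:Zariski} for the candidate pair $(P_0, \sum_k t_k\cM_{i_k})$, and conclude by uniqueness. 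This buys you two things: the argument is self-contained (uniqueness of Zariski decomposition replaces the citation to the chamber-support theorem), and it transparently handles the edge cases where some $t_k=0$, where the lemma's $\mathrm{Supp}(N)$-duals and the proposition's $R$-duals would a priori differ --- your direct computation shows the $R$-dual formula is the correct one regardless. The only point worth making explicit is that when some $t_k$ vanish the intersection matrix of $\mathrm{Supp}(N)$ is a principal submatrix of that of $R$ and hence still negative definite, so condition 2 of Theorem~\ref{thm:Zariski} survives; you gesture at this but it deserves a sentence.
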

\begin{proof}
This follows immediately from Lemma~\eqref{lem:ya_coeffs}.
\end{proof}

\smlhdg{Example: Zariski chambers for the $F_6$ surface}

\noindent The space $F_6$ is a toric surface that is not isomorphic to a Hirzebruch or del Pezzo surface, and in fact it is the lowest Picard number surface of this kind among the Gorenstein Fano toric surfaces. It is isomorphic to a blow-up of the Hirzebruch surface $\mbb{F}_2$. We show the toric diagram with labelled toric divisors and the weight system:
\begin{figure}[H]
\begin{center}
{
\begin{minipage}[t]{3.9in}
\raisebox{0.in}{$\begin{tabular}{C C C C C}
D_1 	& D_2 	& D_3 	& D_4 	& D_5 	\\
\hline
 1		& 0 		& 0		& 1	& 0 		\\
0		& 1		& 0		& 2	& 1 		\\
0		& 0 		& 1		& 1	& 1 	\\
\end{tabular} $} 
\hfill \hspace*{0pt}
\raisebox{-.57in}{\includegraphics[width=3cm]{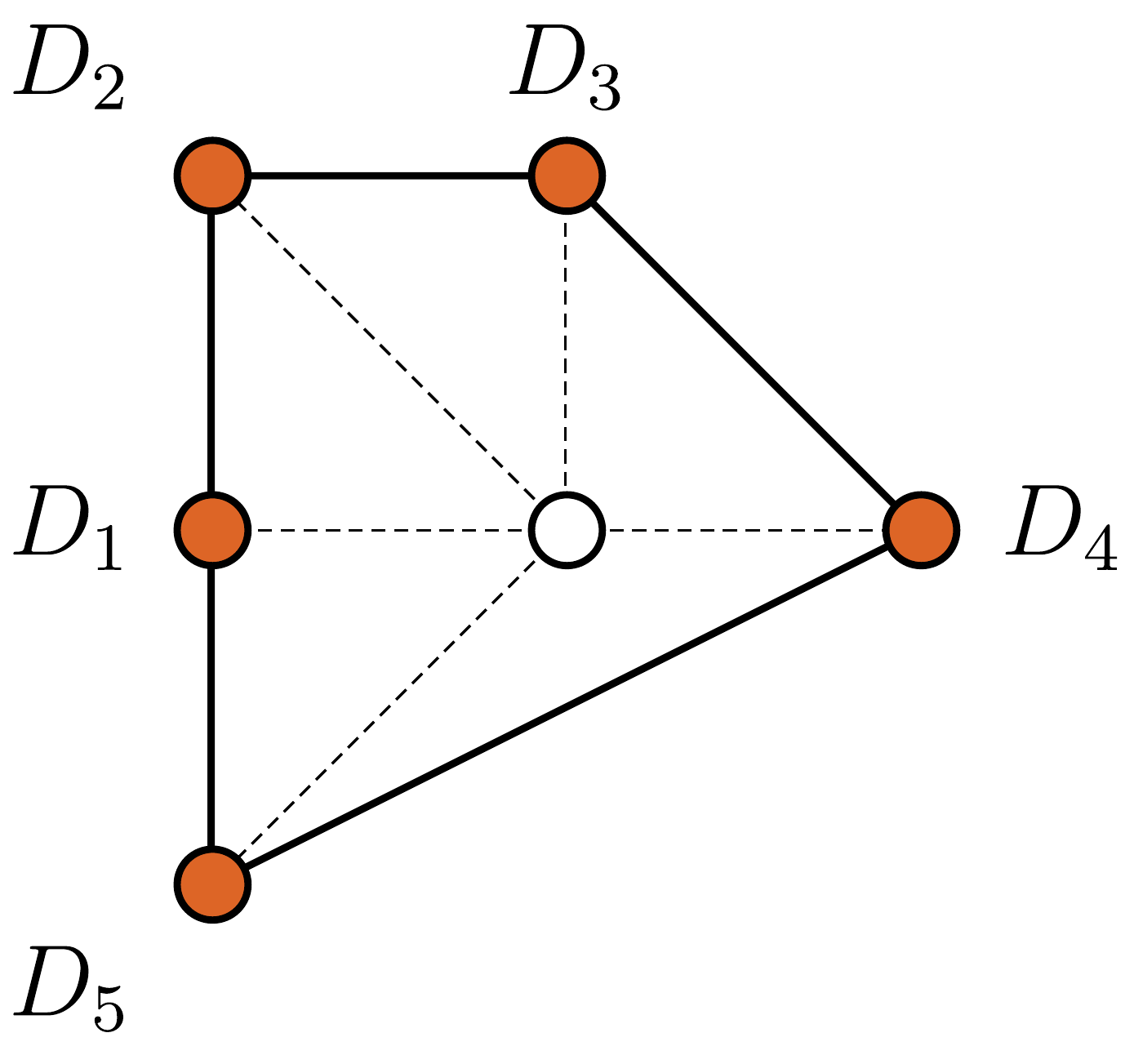}}
\end{minipage}}
\end{center}
\vspace{-12pt}
\end{figure}
\noindent We can take as a divisor basis $\{D_1, D_2, D_3\}$, in terms of which $D_4 = D_1 + 2D_2 + D_3$ and $D_5 = D_2 + D_3$. The self-intersections are given by 
\be
D_1^2 = -2 \,, \quad D_2^2 = -1 \,, \quad D_3^2 = -1 \,, \quad D_4^2 = 1 \,, \quad D_5^2 = 0 \,,
\ee
and the intersection form in the above basis is
\be
\left(D_i \cdot D_j \right) = 
\begin{pmatrix}
-2 & ~~1 & ~~0~ \\
~~1 & -1 & ~~1~ \\
~~0 & ~~1 & -1~ \\
\end{pmatrix} \,.
\label{eq:f6_intmat}
\ee
As before, we write divisors in the chosen basis as $D = (\,\cdot\,,\,\cdot\,,\,\cdot\,)$. The anti-canonical divisor $-K$ is the sum of the toric divisors, $-K = (2,4,3)$. The Mori cone generators and the dual nef cone generators are given by
\be
\begin{aligned}
\moricn_1 &= (1,0,0) \,, \quad \moricn_2 = (0,1,0) \,, \quad \moricn_3 = (0,0,1) \,,\\
\nefcn_1 &= (0,1,1) \,, \quad~ \nefcn_2 = (1,2,2) \,, \quad ~\nefcn_3 = (1,2,1) \,,
\end{aligned}
\ee
where $\moricn_i \cdot \nefcn_j = \delta_{ij}$. The rigid irreducible curves are simply the Mori cone generators.

\begin{figure}[H]
\begin{center}
\includegraphics[width=7cm]{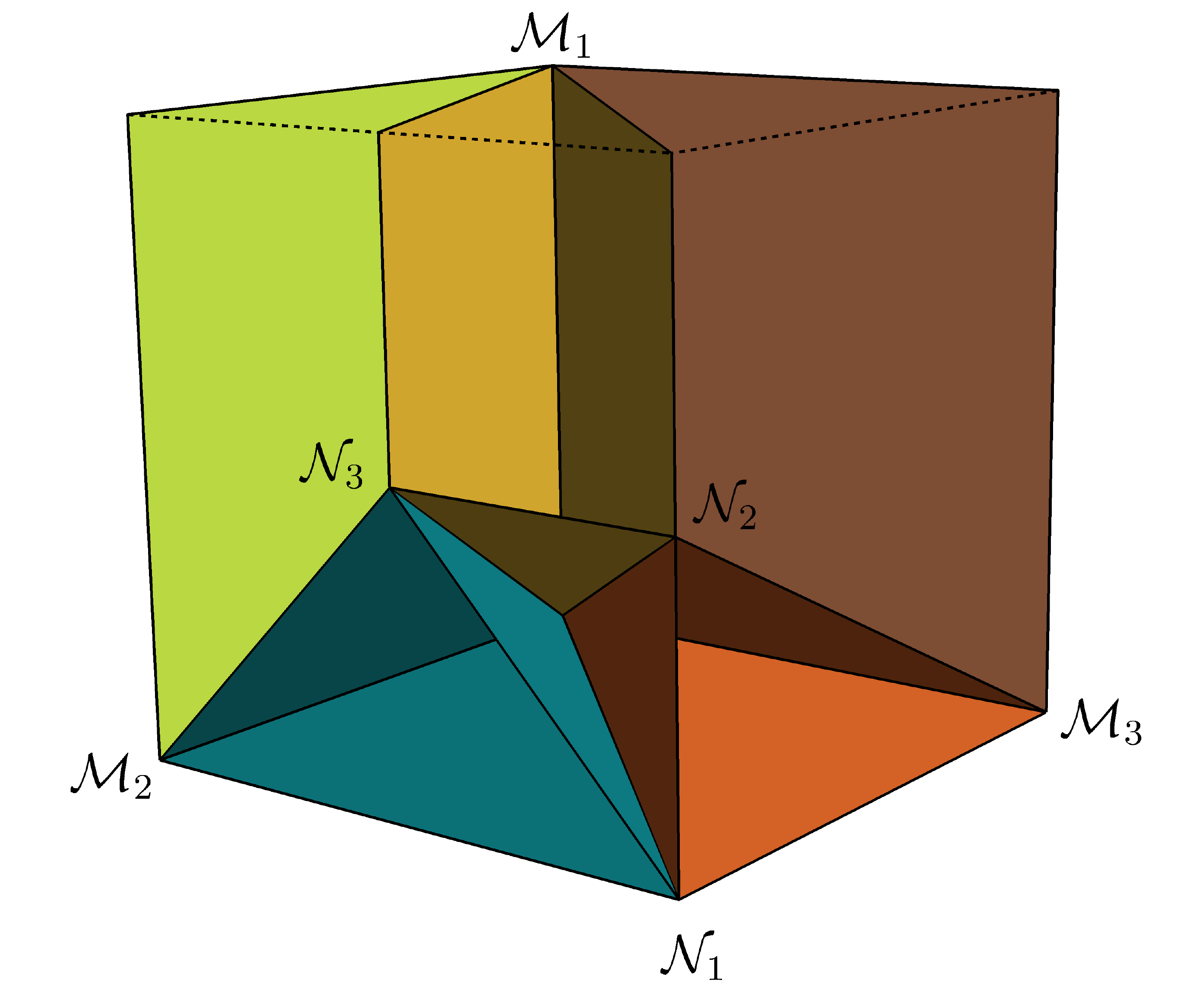}
\capt{5.8in}{fig:f6_regs}{The effective cone of the Gorenstein Fano toric surface $F_6$ splits into five different Zariski chambers outside of the nef cone. We have labelled the rays of the Mori cone generators $\moricn_i$ and the nef cone generators $\nefcn_j$.}
\end{center}
\end{figure}

Following the prescription outlined above, we determine the set $\mc{R}(F_6)$ of collections of rigid curves with negative definite intersection matrix. In the present case, the rigid curves are precisely the Mori cone generators, so intersections between rigid curves are given by the matrix in Equation~\eqref{eq:f6_intmat}. Intersections between a subset of the rigid curves are given by restricting the matrix. For example, restricting to $\{\cM_1,\cM_3\}$ gives
\be
I(\{\cM_1,\cM_3\}) = 
\begin{pmatrix}
-2 & ~~0~ \\
~~0 & -1~ \\
\end{pmatrix} \,,
\ee
which is negative definite, so that $\{\cM_1,\cM_3\} \in \mc{R}(F_6)$. In total there are five collections with negative definite intersection matrix,
\be
\mc{R}(F_6) = \{ \{\cM_1\} \,, \{\cM_2\} \,, \{\cM_3\} \,, \{\cM_1 , \cM_2 \} \,, \{\cM_1 , \cM_3 \} \} \,.
\ee
Correspondingly, one can note from \fref{fig:f6_regs} that the nef cone has three codimension 1 faces and two codimension 2 faces that have a non-vanishing intersection with the big cone.

The Zariski chamber $\Sigma_R$ for $R \in \mc{R}(\surf)$ is given by translating along the elements in $R$ the boundary of the nef cone spanned by generators orthogonal to all elements in $R$ up to boundaries -- which we recall belong to the subcone whose corresponding face has higher dimension.  Hence, the five Zariski chambers of $F_6$, which are sub-cones of the effective cone in addition to the nef cone, are
\begin{align*}
\Sigma_1 = \langle \cM_1, \cN_2, \cN_3 \rangle~,~~\Sigma_2 = \langle \cN_1&, \cM_2, \cN_3 \rangle~,~~\Sigma_3 = \langle \cN_1, \cN_2, \cM_3 \rangle~,\\[4pt]
\Sigma_{1,2} = \langle \cM_1, \cM_2, \cN_3 \rangle\setminus(\Sigma_1\cup\Sigma_2)~&,~~\Sigma_{1,3} = \langle \cM_1, \cN_2,  \cM_3 \rangle\setminus(\Sigma_1\cup\Sigma_3) \,,
\end{align*}
which are depicted in \fref{fig:f6_regs}. For a Zariski chamber $\Sigma_i$ corresponding to translation by a single Mori cone generator, the duals are simply $\cM_{i,\{i\}}^\vee = -\cM_i / \cM_i^2$. For the chambers with two Mori cone generators we have the following duals:
\begin{align*}
\Sigma_{1,2}: ~ \cM_{1,\{1,2\}}^\vee =  \cM_1+\cM_2 ~&\text{ and }~\cM_{2,\{1,2\}}^\vee = \cM_1+2\cM_2  \,,\\
\Sigma_{1,3}:~ \cM_{1,\{1,3\}}^\vee =  \tfrac{1}{2}\cM_{1} ~&\text{ and }~\cM_{3,\{1,3\}}^\vee = \cM_{3} \,.
\end{align*}

\smlhdg{Alternative packaging}

\noindent In some situations it is useful to repackage the information given by the Zariski chamber structure, instead writing a single formula that captures the behaviour of the decomposition throughout the effective cone.

Recall $\mc{R}(\surf)$ is the set of collections of rigid curves with negative intersection matrix. For a given $R \in \mc{R}(\surf)$, also recall that for any $C_A \in R$, one can define a unique effective dual curve $C_{A,R}^\vee$ with respect to $R$, which has $\mathrm{Supp}(C_{A,R}^\vee) \subseteq R$ and which satisfies $C_{A,R}^\vee \cdot C_B = -\delta_{AB}$ for all $C_B \in R$.

Each element $R \in \mc{R}(\surf)$ with $C_A \in R$ hence determines a divisor $C_{A,R}^\vee$, giving a set of possible duals for~$C_A$ written as $\mc{S}_A = \{ C_{A,R}^\vee \, | \, R \in \mc{R}(\surf), C_A\in R \}$. For example, in the $F_6$ case treated above, these are
\be
\mc{S}_1 = \{ \tfrac{1}{2}\cM_1 , \cM_1 + \cM_2 \} \,, \quad
\mc{S}_2 = \{ \cM_2 , \cM_1+2\cM_2 \} \,, \quad 
\mc{S}_3 = \{ \cM_3 \} \,.
\ee
These sets $\mc{S}_A$ provide an alternative way to determine a Zariski decomposition, as follows.

\begin{lma}\label{lma:alt_pack}
Let $D$ be an effective divisor with Zariski decomposition $D = P + N$, and let $C_A$ be a rigid curve. Then the following statements are true.
\begin{enumerate}
\item $C_A \in \mathrm{Supp}(N)$ if and only if there exists a divisor $C_{A,R}^\vee \in \mc{S}_A$ such that $C_{A,R}^\vee \cdot D < 0$.
\item If $C_A \in \mathrm{Supp}(N)$, then amongst $C_{A,R}^\vee \in \mc{S}_A$ the intersection $-C_{A,R}^\vee \cdot D$ is maximum when $R = \mathrm{Supp}(N)$.
\end{enumerate}
\end{lma}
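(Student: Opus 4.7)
The plan is to derive both statements from a single master inequality governing $C_{A,R}^\vee \cdot D$, obtained by substituting the Zariski decomposition:
\[
C_{A,R}^\vee \cdot D \;=\; C_{A,R}^\vee \cdot P \;+\; C_{A,R}^\vee \cdot N \,.
\]
First I would note that $C_{A,R}^\vee$ is effective (as remarked immediately after Lemma~\ref{lem:ya_coeffs}) and $P$ is nef, which gives $C_{A,R}^\vee \cdot P \geq 0$, with equality whenever $R \subseteq \mathrm{Supp}(N)$, since then $\mathrm{Supp}(C_{A,R}^\vee) \subseteq R \subseteq \mathrm{Supp}(N)$ and $P \cdot C_B = 0$ for every $C_B \in \mathrm{Supp}(N)$ by property~3 of Theorem~\ref{thm:Zariski}.

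Next I would expand $N = \sum_{C_B \in \mathrm{Supp}(N)} y_B C_B$ with all $y_B > 0$, and split $C_{A,R}^\vee \cdot N$ into contributions from $C_B \in R$ and $C_B \in \mathrm{Supp}(N) \setminus R$. For $C_B \in R$, the defining relation $C_{A,R}^\vee \cdot C_B = -\delta_{AB}$ yields a total contribution $-y_A$ if $C_A \in \mathrm{Supp}(N)$ and zero otherwise (the Kronecker delta never fires in that case). For $C_B \in \mathrm{Supp}(N) \setminus R$, the contribution is $y_B \, (C_{A,R}^\vee \cdot C_B) \geq 0$: indeed $C_{A,R}^\vee$ is a non-negative combination of curves in $R$, each of which is a distinct irreducible element of $\mc{I}(\surf)$ from $C_B$, so the standard non-negativity of intersections between distinct irreducible curves applies. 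Combining everything, one obtains the master inequality
\[
-\,C_{A,R}^\vee \cdot D \;\leq\; y_A \,,
\]
under the convention $y_A := 0$ when $C_A \notin \mathrm{Supp}(N)$, with equality when simultaneously $R \subseteq \mathrm{Supp}(N)$ and $\mathrm{Supp}(N) \setminus R = \varnothing$, i.e.\ when $R = \mathrm{Supp}(N)$. The equality case in fact reproduces Lemma~\ref{lem:ya_coeffs}.

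Both parts then follow immediately. For the forward direction of Part~1, if $C_A \in \mathrm{Supp}(N)$ then $\mathrm{Supp}(N) \in \mc{R}(\surf)$ by property~2 of Theorem~\ref{thm:Zariski}, and choosing $R = \mathrm{Supp}(N)$ yields $-C_{A,R}^\vee \cdot D = y_A > 0$, so $C_{A,R}^\vee \cdot D < 0$. For the reverse direction, any admissible $R$ with $C_{A,R}^\vee \cdot D < 0$ forces $y_A > 0$ by the master inequality, hence $C_A \in \mathrm{Supp}(N)$. For Part~2, the inequality bounds $-C_{A,R}^\vee \cdot D \leq y_A$ uniformly in admissible $R$, and this upper bound is saturated at $R = \mathrm{Supp}(N)$, so the maximum is attained there.

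I do not anticipate a genuine obstacle. The only point requiring some care is the non-negativity of $C_{A,R}^\vee \cdot C_B$ for $C_B \in \mathrm{Supp}(N) \setminus R$: this rests on the fact that the curves appearing in $R$ and those in $\mathrm{Supp}(N) \setminus R$ are all distinct elements of $\mc{I}(\surf)$, so the corresponding irreducible curves intersect pairwise non-negatively. Everything else is direct bookkeeping built on Lemma~\ref{lem:ya_coeffs} and the defining properties of Zariski decomposition.
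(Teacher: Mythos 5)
Your proof is correct, and it supplies in full the details that the paper's own proof merely asserts to be ``straightforward'': the decomposition $C_{A,R}^\vee \cdot D = C_{A,R}^\vee \cdot P + C_{A,R}^\vee \cdot N$, the non-negativity of $C_{A,R}^\vee\cdot P$ and of the cross-terms with $\mathrm{Supp}(N)\setminus R$, and the resulting bound $-C_{A,R}^\vee\cdot D \leq y_A$ with equality at $R=\mathrm{Supp}(N)$ constitute exactly the intended argument, resting on the same two ingredients the paper cites (effectiveness of the duals and the fact that $\mathrm{Supp}(N)\in\mc{R}(\surf)$). No gaps.
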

\begin{proof}
These are straightforward to prove from the fact that $\mc{R}(\surf)$ is precisely the set of possible supports for the negative parts in Zariski decomposition.
\end{proof}

Defining $\mc{G}_A(D) = \{-C_{A,R}^\vee \cdot D \, | \, C_{A,R}^\vee \in \mc{S}_A \} \cup \{ 0 \}$, both results are contained in the statement that the coefficient of $C_A$ in $N$ is precisely the maximum of $\mc{G}_A$. Hence we have the following.
\begin{prp}\label{prp:alt_pack_form}
Let $D$ be an effective divisor. The negative part $N$ of its Zariski decomposition is given by
\be\label{eq:n_gen_ev}
N = \sum_{\cM_A \in \mc{I}(S)} \mathrm{max}\, (\mc{G}_A(D)) \, \cM_A \,.
\ee
\end{prp}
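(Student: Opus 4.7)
The plan is to deduce the proposition essentially as a direct corollary of Lemma~\ref{lma:alt_pack} together with Proposition~\ref{prp:n_gen}. For each rigid curve $\cM_A \in \mc{I}(S)$, I need to show that the coefficient of $\cM_A$ in $N$ equals $\mathrm{max}(\mc{G}_A(D))$. I would split into two cases according to whether $\cM_A$ lies in $\mathrm{Supp}(N)$ or not.

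First I would handle the case $\cM_A \notin \mathrm{Supp}(N)$. Here the coefficient of $\cM_A$ in $N$ is zero by definition, so I need to show $\mathrm{max}(\mc{G}_A(D)) = 0$. By part~1 of Lemma~\ref{lma:alt_pack}, the fact that $\cM_A \notin \mathrm{Supp}(N)$ implies $-C_{A,R}^\vee \cdot D \leq 0$ for every $C_{A,R}^\vee \in \mc{S}_A$. Since $\mc{G}_A(D)$ is constructed by adjoining $0$ to this set of non-positive numbers, its maximum is exactly $0$, as required.

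Next I would treat the case $\cM_A \in \mathrm{Supp}(N)$. By Proposition~\ref{prp:n_gen} (which invokes Lemma~\ref{lem:ya_coeffs}), the coefficient of $\cM_A$ in $N$ equals $-D \cdot \cM_{A,\mathrm{Supp}(N)}^\vee$, which is positive since $\cM_A$ genuinely appears in $N$. By part~2 of Lemma~\ref{lma:alt_pack}, this value is the maximum over $R \in \mc{R}(S)$ with $\cM_A \in R$ of $-C_{A,R}^\vee \cdot D$, hence it dominates every element of $\mc{G}_A(D) \setminus \{0\}$. Since this maximum is strictly positive, adjoining $0$ does not change the maximum, so $\mathrm{max}(\mc{G}_A(D)) = -D \cdot \cM_{A,\mathrm{Supp}(N)}^\vee$, matching the coefficient of $\cM_A$ in $N$.

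Finally I would observe that this accounts for every term in the expansion $N = \sum_A y_A \cM_A$ over rigid curves, since $\mathrm{Supp}(N) \subseteq \mc{I}(S)$ always. Summing the contributions over all $\cM_A \in \mc{I}(S)$ yields the claimed formula. There is no real obstacle here; the substance of the argument is already contained in Lemma~\ref{lma:alt_pack}, and the only care needed is the bookkeeping that adjoining $0$ to $\mc{G}_A(D)$ correctly captures the zero coefficients without spoiling the positive ones.
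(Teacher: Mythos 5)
Your argument is correct and is exactly the route the paper takes: the paper's proof is simply ``this follows immediately from Lemma~3.5'', with the surrounding text noting that both parts of that lemma are packaged in the statement that the coefficient of $\cM_A$ in $N$ is $\max(\mc{G}_A(D))$. Your case split on $\cM_A \in \mathrm{Supp}(N)$ versus $\cM_A \notin \mathrm{Supp}(N)$, together with the identification of the coefficient via Lemma~3.2/Proposition~3.3, just spells out the details the paper leaves implicit.
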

\begin{proof}
This follows immediately from Lemma~\eqref{lma:alt_pack}.
\end{proof}

The formula in Equation~\eqref{eq:n_gen_ev} has a natural interpretation in the context of detecting rigid curves by intersection, which was reviewed at the end of Section~\ref{sec:bas_defs}. For each rigid curve, the formula checks several candidate effective divisors to see which detects the maximal amount in $D$. The reason the number of candidates is small is because instead of checking every element in the cone of candidate divisors (see Section~\ref{sec:bas_defs}) it suffices to check the generators, as one can verify.

This perspective makes it clear that for a rigid curve $C_A$ the set of duals in $\mc{S}_A$ can also be understood as the generators of the cone in the N\'{e}ron-Severi group determined by the inequalities $\tilde{D} \cdot C_A \leq 0$ and $\tilde{D} \cdot C_i \geq 0$ for all $C_i \in \mc{I}(S)$ where $C_i \neq C_A$, excluding generators lying on a nef cone boundary. We also note in passing that these cones are a subset of the (closures of the) simple Weyl chambers as defined in Ref.~\cite{Rams16}.

%%%
%%%%%
%% SECTION
%%%%%
%%%
\newpage
\section{Cohomology chambers}
\label{sec:bun_coh}

\subsection{Cohomology chambers and formulae}
\label{sec:coh_ch}

For $D$ an effective $\mbb{Z}$-divisor with Zariski decomposition $D = P + N$, Theorem~\eqref{thm:zar_coh_rel} asserts the preservation of cohomology $H^0\big(S, \mc{O}_S(D)\big) \cong H^0\big(S, \mc{O}_S(\floor{P})\big)$. Combining this with the general form of Zariski decomposition in Proposition~\eqref{prp:n_gen} gives the following explicit relation.

\begin{thm}\label{thm:flp_gen}
Let $\surf$ be a smooth projective surface, and let $D$ be an effective $\mbb{Z}$-divisor that lies within the Zariski chamber $\Sigma_{i_1,\ldots i_n}$, which is obtained by translating the codimension $n$ face of the nef cone that is orthogonal to the Mori cone generators $\cM_1,\ldots \cM_n$ along these generators. Then
\be
H^0\big(S, \mc{O}_S(D)\big) \cong H^0\bigg(S, \mc{O}_S\Big(D - \sum_{k=1}^n\, \ceil{-D\cdot \cM_{i_k,\{i_1,\ldots,i_n\}}^\vee} \,\cM_{i_k}\Big)\bigg) \,.
\ee
\end{thm}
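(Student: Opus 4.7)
The plan is to simply combine the two main tools already established in the paper: the cohomology-preservation statement of Theorem~\ref{thm:zar_coh_rel}, which says that $H^0\bigl(S,\mc{O}_S(D)\bigr)\cong H^0\bigl(S,\mc{O}_S(\lfloor P\rfloor)\bigr)$ for any effective $\mbb{Z}$-divisor $D=P+N$, and the chamber-adapted formula of Proposition~\ref{prp:n_gen}, which furnishes a closed-form expression for the negative part $N$ in terms of intersection duals. So the only real content of the proof is the computation of the round-down $\lfloor P\rfloor$.

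First I would apply Theorem~\ref{thm:zar_coh_rel} to reduce the target cohomology to $H^0\bigl(S,\mc{O}_S(\lfloor P\rfloor)\bigr)$. Then I would substitute the expression for $N$ given in Proposition~\ref{prp:n_gen} to write
\[
P \;=\; D-N \;=\; D-\sum_{k=1}^n\bigl(-D\cdot \cM_{i_k,\{i_1,\ldots,i_n\}}^{\vee}\bigr)\,\cM_{i_k}.
\]
The only step requiring actual verification is that the floor operation distributes over this expression in the asserted way.

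To see this, I would use the fact that each $\cM_{i_k}$ is irreducible (being one of the rigid generators in $\mc{I}(S)$) and that $D$ is integral. Expand both $D$ and $P$ in the basis of irreducible curves. For any irreducible curve $C$ different from the $\cM_{i_k}$, the coefficient of $C$ in $P$ coincides with its coefficient in $D$, which is an integer; these components are unaffected by rounding. For the curves $\cM_{i_k}$, the coefficient of $\cM_{i_k}$ in $P$ is $a_{i_k}+D\cdot\cM_{i_k,\{i_1,\ldots,i_n\}}^{\vee}$, where $a_{i_k}\in\mbb{Z}$ is the coefficient of $\cM_{i_k}$ in $D$. Applying the elementary identity $\lfloor m-x\rfloor=m-\lceil x\rceil$ for $m\in\mbb{Z}$ and $x\in\mbb{Q}$ yields precisely
\[
\lfloor P\rfloor \;=\; D-\sum_{k=1}^n \bigl\lceil -D\cdot \cM_{i_k,\{i_1,\ldots,i_n\}}^{\vee}\bigr\rceil\,\cM_{i_k},
\]
which is the divisor appearing on the right-hand side of the theorem.

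The main obstacle, in so far as there is one, is really just the bookkeeping around rounding: one must be careful that the duals $\cM_{i_k,\{i_1,\ldots,i_n\}}^{\vee}$ are rational (which is automatic from the non-degeneracy of the restricted intersection form), that they are supported only on $\{\cM_{i_1},\ldots,\cM_{i_n}\}$ so that the expansion of $N$ involves no curves outside this set, and that the $\cM_{i_k}$ are themselves integral irreducible curves so that distributing the floor is legitimate. All of these points are delivered by the hypotheses and by the structure of Zariski chambers recalled in Section~\ref{sec:zar_ch}, so the proof is essentially a direct assembly of the already-proved results.
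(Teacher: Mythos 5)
Your proposal is correct and takes essentially the same route as the paper's proof: reduce to $H^0\big(S,\mc{O}_S(\floor{P})\big)$ via Theorem~\ref{thm:zar_coh_rel}, substitute the chamber formula for $N$ from Proposition~\ref{prp:n_gen}, and note that since $D$ is integral the round-down touches only the coefficients of $N$, via $\floor{m-x}=m-\ceil{x}$. The rounding bookkeeping you spell out is exactly what the paper compresses into the single sentence that flooring ``leaves $D$ unchanged and so affects only the coefficients in $N$.''
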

\begin{proof}
Within a Zariski chamber, the form of the map $D \to P$ is fixed, with $N$ given by Equation~\eqref{eq:n_gen}. In the case of an integral divisor $D$, the round-up of $P = D - N$ leaves $D$ unchanged and so affects only the coefficients in $N$. Hence
\be
\floor{P} = D - \sum_{k=1}^n\, \ceil{-D\cdot \cM_{i_k,\{i_1,\ldots,i_n\}}^\vee} \,\cM_{i_k} \,.
\label{eq:flp_gen}
\ee
Combination with Theorem~\eqref{thm:zar_coh_rel} then gives the stated result.
\end{proof}

Since there is also the cohomology relation $H^0\big(S, \mc{O}_S(D)\big) \cong H^0\big(S, \mc{O}_S(\floor{P})\big)$ for the round-up $P$, one can also write an analogous result in this case.
\begin{crl}\label{crl:clp_gen}
In the situation of Theorem~\eqref{thm:flp_gen},
\be\label{eq:clp_gen}
H^0\big(S, \mc{O}_S(D)\big) \cong H^0\bigg(S, \mc{O}_S\Big(D - \sum_{k=1}^n\, \floor{-D\cdot \cM_{i_k,\{i_1,\ldots,i_n\}}^\vee} \,\cM_{i_k}\Big)\bigg) \,.
\ee
\end{crl}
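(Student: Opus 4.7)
The plan is to mirror the proof of Theorem~\ref{thm:flp_gen} almost verbatim, but swap the floor for the ceiling and invoke Corollary~\ref{crl:zar_coh_rel_cl} in place of Theorem~\ref{thm:zar_coh_rel}. Concretely, the starting point is the cohomology identity $H^0(\surf,\mc{O}_\surf(D)) \cong H^0(\surf,\mc{O}_\surf(\ceil{P}))$ supplied by Corollary~\ref{crl:zar_coh_rel_cl}, so the task reduces to producing an explicit formula for $\ceil{P}$ inside the given Zariski chamber in terms of intersection data.

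The key computation is the same compatibility between rounding and the integrality of $D$ that powered Theorem~\ref{thm:flp_gen}. By Proposition~\ref{prp:n_gen}, within the Zariski chamber $\Sigma_{i_1,\ldots,i_n}$ one has the closed form
\be
N ~=~ \sum_{k=1}^n\, (-D\cdot \cM_{i_k,\{i_1,\ldots,i_n\}}^\vee)\,\cM_{i_k} \,,
\ee
so $P = D - N$. Since $D$ is a $\mbb{Z}$-divisor, the ceiling operation does not alter its coefficients and therefore acts only on $N$. Using the elementary identity $\ceil{-x} = -\floor{x}$ applied coefficient by coefficient in the expansion of $N$, I would write
\be
\ceil{P} ~=~ D - \floor{N} ~=~ D - \sum_{k=1}^n\, \floor{-D\cdot \cM_{i_k,\{i_1,\ldots,i_n\}}^\vee}\,\cM_{i_k} \,.
\ee
Substituting this expression for $\ceil{P}$ into the isomorphism from Corollary~\ref{crl:zar_coh_rel_cl} then produces exactly the right-hand side of~\eqref{eq:clp_gen}.

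There is essentially no obstacle here, since every ingredient has been established earlier in the excerpt. The only point that requires a line of care is confirming that $\ceil{P} = D - \floor{N}$, which is the statement that rounding up the positive part is equivalent to rounding down the negative part when the total divisor is integral; this is simply $\ceil{-x} = -\floor{x}$ applied to each fractional coefficient of $N$ (the coefficients of $D$ being integers, and the supports of $D$ being a union of the supports of $P$ and $N$). Once this is in place, the proof is a one-line substitution, which is why stating it as a corollary of Theorem~\ref{thm:flp_gen} is appropriate.
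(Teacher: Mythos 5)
Your proposal is correct and follows essentially the same route as the paper: the paper's proof of Corollary~\ref{crl:clp_gen} simply notes that one repeats the proof of Theorem~\ref{thm:flp_gen} with Corollary~\ref{crl:zar_coh_rel_cl} in place of Theorem~\ref{thm:zar_coh_rel}, which is exactly your argument, including the observation that integrality of $D$ makes the rounding act only on the coefficients of $N$ via $\ceil{-x}=-\floor{x}$. Nothing further is needed.
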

\begin{proof}
This is analogous to the proof of Theorem~\eqref{thm:flp_gen}, using the cohomology relation of Corollary~\eqref{crl:zar_coh_rel_cl}.
\end{proof}

While the relations in Theorem~\ref{thm:flp_gen} and Corollary~\ref{crl:clp_gen} are valuable in themselves, unless something can be said about the cohomologies appearing on the right hand side, this is unlikely to be helpful in practice. On the classes of surfaces that we discuss below, something can indeed be said about these cohomologies, due to the existence of powerful vanishing theorems.

\smlhdg{Vanishing theorems and `pulling back' the index}

\noindent A vanishing theorem asserts the triviality of a number of the cohomologies for a subclass of line bundles, given certain properties of the variety. %
Perhaps the most well-known vanishing theorem is that of Kodaira.
\begin{nnthm}[Kodaira vanishing]
On a smooth irreducible complex projective variety $X$, for ample divisor $D$,
\be
H^i\big(X,\mc{O}_X(K_X + D)\big) = 0 ~ \text{for} ~ i > 0 \,.
\ee
\end{nnthm}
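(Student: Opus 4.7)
The plan is to follow Kodaira's original analytic argument via the Bochner-Kodaira-Nakano identity. Setting $L := \mc{O}_X(D)$, ampleness of $D$ provides a Hermitian metric $h$ on $L$ whose Chern curvature $\omega_L := i\Theta_h(L)$ is a strictly positive $(1,1)$-form; I would take $\omega_L$ itself as the K\"ahler form on $X$. By the Dolbeault isomorphism, together with the standard identification of $(K_X \otimes L)$-valued $(0,q)$-forms with $L$-valued $(n,q)$-forms (where $n = \mathrm{dim}\, X$), one obtains
\be
H^q\big(X, \mc{O}_X(K_X + D)\big) \cong \mc{H}^{n,q}(X, L),
\ee
the space of $L$-valued $\bar\partial$-harmonic $(n,q)$-forms. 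It therefore suffices to show that this space is trivial for $q \geq 1$.

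The key input is the Bochner-Kodaira-Nakano identity
\be
\Delta_{\bar\partial} - \Delta_\partial = [\omega_L, \Lambda],
\ee
where $\Lambda$ denotes the adjoint of the Lefschetz operator. For a harmonic $L$-valued $(n,q)$-form $\alpha$, taking the global $L^2$ inner product yields
\be
0 \;=\; \langle \Delta_{\bar\partial}\alpha, \alpha\rangle \;=\; \langle \Delta_\partial\alpha,\alpha\rangle + \langle [\omega_L, \Lambda]\alpha, \alpha\rangle,
\ee
where the first equality uses harmonicity and the second the Bochner-Kodaira-Nakano identity. Both remaining terms are manifestly non-negative, so each must vanish separately.

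The crux, and the step I expect to be the most technical, is the pointwise positivity of the curvature operator $[\omega_L,\Lambda]$ on $(n,q)$-forms when $q \geq 1$. Choosing normal coordinates at a point in which $\omega_L$ is diagonalised with eigenvalues $\lambda_1, \ldots, \lambda_n > 0$, a direct computation shows that on a basis $(n,q)$-form of the type $dz_1 \wedge \cdots \wedge dz_n \otimes d\bar z_{i_1} \wedge \cdots \wedge d\bar z_{i_q}$ the commutator acts as multiplication by $\lambda_{i_1} + \cdots + \lambda_{i_q} > 0$. This forces $\alpha = 0$ pointwise, and hence $H^q(X, \mc{O}_X(K_X + D)) = 0$ for $q \geq 1$. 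The twist by $K_X$ in the statement is precisely what makes this sign work: without shifting to $(n,q)$-forms, the curvature commutator on $(0,q)$-forms can be indefinite for small $q$, and the Bochner argument would fail.
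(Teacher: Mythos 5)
The paper does not prove this statement: Kodaira vanishing is quoted there as a classical result, with no argument supplied, so there is nothing internal to compare against. Your proposal is the standard analytic proof (Kodaira/Akizuki--Nakano) and its outline is correct: ampleness gives a metric on $L=\mc{O}_X(D)$ with positive curvature $\omega_L$, which you take as the K\"ahler form; Hodge theory identifies $H^q(X,\mc{O}_X(K_X+D))$ with harmonic $L$-valued $(n,q)$-forms; and the Bochner--Kodaira--Nakano identity $\Delta_{\bar\partial}=\Delta_{\partial}+[\,i\Theta_h(L),\Lambda\,]$ splits $0=\langle\Delta_{\bar\partial}\alpha,\alpha\rangle$ into two non-negative pieces, the curvature piece being strictly positive definite on $(n,q)$-forms for $q\geq1$ (eigenvalue $\sum_{j\in J}\lambda_j$ on $dz_1\wedge\cdots\wedge dz_n\otimes d\bar z_J$, which with your choice of K\"ahler form is just $q>0$), forcing $\alpha=0$. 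Your closing remark correctly identifies why the twist by $K_X$ is essential: on $(p,q)$-forms the eigenvalue is $\sum_{j\in J}\lambda_j+\sum_{j\in I}\lambda_j-\sum_{j}\lambda_j$, which is only guaranteed positive when $p+q>n$, i.e.\ the general statement here is Akizuki--Nakano vanishing, of which the case $p=n$ is Kodaira's theorem. The only caveat is that this is a proof sketch rather than a complete proof: the derivation of the BKN identity and the pointwise eigenvalue computation for the commutator $[\omega_L,\Lambda]$ are asserted rather than carried out, but both are standard and your description of them is accurate.
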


\noindent When a vanishing theorem ensures that all but one cohomology vanish, the remaining dimension can be computed from the index. For example, if Kodaira vanishing ensures that the higher cohomologies of a line bundle $\lb$ vanish, then
\be
\ind(\lb) = h^0(\surf,\lb) - h^1(\surf,\lb) + h^2(\surf,\lb) - \ldots = h^0(\surf,\lb) \,.
\ee
While individual cohomologies are generically difficult to compute, the index can be computed using only divisor intersection properties, due to the Hirzebruch-Riemann-Roch theorem. In the case of a surface $\surf$,

\be
\ind\big(\mc{O}_S(D)\big) = \ind\big( \mc{O}_S \big) + \frac{1}{2}( D \cdot D - D \cdot K_\surf ) \,,
\label{eq:hirz_riem_roch}
\ee
where $\mc{O}_S$ is the trivial bundle. Hence this gives a formula describing the sole non-trivial cohomology throughout the region of vanishing. Note in the surface case the formula is quadratic in the divisor $D$, or equivalently, quadratic in the integers specifying $D$ with respect to a basis.

\medskip

The set $\Sigma \cap \mathrm{NS}(\surf)$ of integral divisor classes in a Zariski chamber $\Sigma$ has images $\phifl\big(\Sigma\cap\mathrm{NS}(\surf)\big)$ and $\phicl\big(\Sigma\cap\mathrm{NS}(\surf)\big)$ under, respectively, the maps $\phifl \colon \divcls{D} \mapsto \divcls{\floor{P}}$ and $\phicl \colon \divcls{D} \mapsto \divcls{\ceil{P}}$ between integral divisor classes, defined in Equation~\eqref{eq:phi_map}. If a vanishing theorem applies across either of these images, then one can `pull back' the index to give a formula for cohomology throughout the Zariski chamber $\Sigma$.
\begin{prp}\label{prp:zar_to_coh}
Let $\surf$ be a smooth projective surface, and let $D$ be an effective $\mbb{Z}$-divisor that lies within the Zariski chamber $\Sigma_{i_1,\ldots i_n}$, which is obtained by translating the codimension $n$ face of the nef cone that is orthogonal to the Mori cone generators $\cM_1,\ldots \cM_n$ along these generators. If a vanishing theorem ensures triviality of the higher cohomologies for every line bundle in the image region $\phifl\big(\Sigma \cap \mathrm{NS}(\surf)\big)$, then
\be
h^0\big(S, \mc{O}_S(D)\big) = \ind\bigg(S, \mc{O}_S\Big(D - \sum_{k=1}^n\, \ceil{-D\cdot \cM_{i_k,\{i_1,\ldots,i_n\}}^\vee} \,\cM_{i_k}\Big)\bigg) \,.
\label{eq:gen_pullb}
\ee
If instead the vanishing theorem applies throughout the image region $\phicl\big(\Sigma \cap \mathrm{NS}(\surf)\big)$, then
\be\label{eq:gen_pullb_cl}
h^0\big(S, \mc{O}_S(D)\big) = \ind\bigg(S, \mc{O}_S\Big(D - \sum_{k=1}^n\, \floor{-D\cdot \cM_{i_k,\{i_1,\ldots,i_n\}}^\vee} \,\cM_{i_k}\Big)\bigg) \,.
\ee
In either situation, the Zariski chamber becomes a `cohomology chamber', in which the zeroth cohomologies are given throughout by a single formula.
\end{prp}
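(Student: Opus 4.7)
The plan is to assemble the proposition directly from results already in hand, with essentially no new ingredients beyond a careful chaining of definitions and previously stated theorems.

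First, I would invoke Theorem~\ref{thm:flp_gen} for the round-down version: for the effective $\mbb{Z}$-divisor $D$ in the Zariski chamber $\Sigma_{i_1,\ldots,i_n}$, one has the isomorphism
\[
H^0\big(S, \mc{O}_S(D)\big) \cong H^0\big(S, \mc{O}_S(\floor{P})\big),
\]
where $\floor{P}$ takes the explicit form given in Equation~\eqref{eq:flp_gen}. This already accounts for all the content inside the argument of $\ind$ in Equation~\eqref{eq:gen_pullb}; no further divisor manipulation is needed at this stage.

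Next, I would use the vanishing hypothesis. Since by property (Z2) the class $\divcls{\floor{P}}$ depends only on $\divcls{D}$, the line bundle $\mc{O}_S(\floor{P})$ lies in the image $\phifl(\Sigma\cap\mathrm{NS}(S))$ by construction of the map $\phifl$ in Equation~\eqref{eq:phi_map}. By assumption the chosen vanishing theorem then kills all higher cohomologies of $\mc{O}_S(\floor{P})$, so
\[
h^0\big(S,\mc{O}_S(\floor{P})\big) \;=\; \sum_{i\geq 0} (-1)^i\, h^i\big(S,\mc{O}_S(\floor{P})\big) \;=\; \ind\big(S,\mc{O}_S(\floor{P})\big).
\]
Combining with the isomorphism from the previous step yields Equation~\eqref{eq:gen_pullb}. (The index itself is then an explicit quadratic in the intersection data via Hirzebruch--Riemann--Roch as in Equation~\eqref{eq:hirz_riem_roch}, though this is not needed to prove the stated identity.)

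For the round-up case I would repeat the argument verbatim, substituting Corollary~\ref{crl:clp_gen} for Theorem~\ref{thm:flp_gen} and $\phicl$ for $\phifl$: the isomorphism $H^0(S,\mc{O}_S(D))\cong H^0(S,\mc{O}_S(\ceil{P}))$ together with the vanishing of higher cohomologies on $\phicl(\Sigma\cap\mathrm{NS}(S))$ yields Equation~\eqref{eq:gen_pullb_cl}. The promotion to a \emph{cohomology chamber} is then immediate: Proposition~\ref{prp:n_gen} guarantees that the formulae for $\floor{P}$ and $\ceil{P}$ take a single fixed functional form throughout $\Sigma_{i_1,\ldots,i_n}$, so Equations~\eqref{eq:gen_pullb} and~\eqref{eq:gen_pullb_cl} describe $h^0$ by a single closed-form expression on the entire chamber.

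There is no real obstacle here: the proposition is a packaging statement and the argument is a two-line chain. The only subtlety to flag carefully is to justify that $\floor{P}$ (respectively $\ceil{P}$) really is a well-defined element of the image region $\phifl(\Sigma\cap\mathrm{NS}(S))$ (respectively $\phicl(\Sigma\cap\mathrm{NS}(S))$), which is exactly what Proposition~\ref{prop:Z}(Z2) and the definition in Equation~\eqref{eq:phi_map} guarantee. All genuine mathematical effort in this section has been done upstream; here one is only harvesting it.
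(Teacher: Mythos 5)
Your proposal is correct and follows the same route as the paper, which simply declares the result ``immediate'' from Theorem~\ref{thm:zar_coh_rel} and Corollary~\ref{crl:zar_coh_rel_cl}; you have merely spelled out the chain (cohomology preservation under $D\mapsto\floor{P}$ or $D\mapsto\ceil{P}$, membership of the image in $\phifl(\Sigma\cap\mathrm{NS}(S))$ or $\phicl(\Sigma\cap\mathrm{NS}(S))$, vanishing, index) in more detail. Nothing is missing.
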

\begin{proof}
This is immediate given the cohomology preservation relations in Theorem~\eqref{thm:zar_coh_rel} and Corollary~\eqref{crl:zar_coh_rel_cl}.
\end{proof}
\noindent Note that while the image of a Zariski chamber under the map $\divcls{D} \to \divcls{P}$ lies on a boundary of the nef cone, due to the rounding operations involved in the maps $\phifl$ and $\phicl$ of integral divisor classes the images $\phifl\big(\Sigma\cap\mathrm{NS}(\surf)\big)$ and $\phicl\big(\Sigma\cap\mathrm{NS}(\surf)\big)$ will in general not lie entirely in the nef cone. Hence the vanishing theorems of interest are not simply those applying to the nef cone.

\begin{rmk}
In the case that every Zariski chamber is a cohomology chamber, the zeroth cohomology is described throughout the effective cone by using the expressions in Proposition~\eqref{prp:zar_to_coh} within each Zariski chamber. Though we do not know of cases where some Zariski chambers become cohomology chambers via the map $\phifl$ while others become cohomology chambers via the map $\phicl$, this is a possibility. When all Zariski chambers are cohomology chambers via the same map, then using the packaging in Proposition~\eqref{prp:alt_pack_form} one can alternatively write everywhere
\be
h^0\big(S, \mc{O}_S(D)\big) = \ind\bigg( D - \sum_{\cM_A \in \mc{I}(S)} \ceil{\mathrm{max}\, (\mc{G}_A(D))} \, \cM_A \bigg) \,,
\ee
where $D$ is any effective $\mbb{Z}$-divisor, and where $\mc{I}(\surf)$ is the set of negative curves on $\surf$ while $\mc{G}_A(D)$ is defined above Proposition~\eqref{prp:alt_pack_form}.
\end{rmk}

Note that since $\mathrm{ind}(D)$ is a quadratic polynomial in the divisor $D$ (or equivalently in the integers specifying $D$ with respect to a basis), the formula for the zeroth cohomology in a cohomology chamber is a polynomial in the divisor $\floor{P}$ or $\ceil{P}$. Since these involve rounding, the result is not a genuine polynomial in general. This is illustrated in the example in Section~\ref{sec:tor_sf_coh} below.

\smlhdg{Iteration and cohomology chambers}

\noindent While the integral divisors $\floor{P}$ and $\ceil{P}$ are not in general nef, if one iterates the process of Zariski decomposition and rounding, eventually this will reach an integral nef divisor. Naively then it seems that a vanishing theorem throughout the nef cone is sufficient to upgrade each Zariski chamber to a cohomology chamber.

However, two integral divisors from the same Zariski chamber may pass through distinct chambers on their journey to the nef cone, so that the combined map by which an index expression in the nef cone is `pulled back' would not be uniform throughout the original chamber, so that the Zariski chamber is not a cohomology chamber.

The following is an illustrative example. In Section~\ref{sec:zar_dec_subsec} we considered the divisor $D = 3D_1+D_2+5D_3$ on the Gorenstein Fano toric surface $F_8$, and determined its Zariski decomposition to be
\be
D = 3D_1 + D_2 + 5D_3 = \left(\frac{1}{2}D_1 + D_2 + \frac{1}{2}D_3\right) + \left(\frac{5}{2}D_1 + \frac{9}{2}D_3\right) \equiv P + N \,.
\ee
In Section~\ref{sec:zar_lb} we noted that $\floor{P}$ is not nef, so that a further Zariski decomposition is required. This decomposition is trivial,
\be
\floor{P} = 0 + D_2 \equiv P^{(1)} + N^{(1)} \,,
\ee
and the process terminates here, since $\floor{P^{(1)}} = \ceil{P^{(1)}} = 0$ is nef. Note one can check that $\ceil{P}$ is not nef, so in either case it requires multiple steps to reach the nef cone.

Now consider instead the divisor $D' = 3D_1+2D_2+5D_3$, which one can check has a Zariski decomposition
\be
D' = 3D_1 + 2D_2 + 5D_3 = \left(D_1 + 2D_2 + D_3\right) + \left(2D_1 +4D_3\right) \equiv P' + N' \,.
\ee
Since $\mathrm{Supp}(N) = \mathrm{Supp}(N')$, $D$ and $D'$ lie in the same Zariski chamber. However, in contrast to the case for $D$, the divisor $\floor{P'}$ is nef, so that the process terminates after a single step.

\smlhdg{Higher cohomologies}

\noindent When all Zariski chambers are also cohomology chambers, the zeroth cohomology is described throughout the entire effective cone by a set of regions and corresponding formulae, and is by definition zero outside. The higher cohomologies can then be obtained throughout the Picard group by Serre duality and the Hirzebruch-Riemann-Roch theorem
\be
\begin{gathered}
h^2\big(\surf, \mc{O}_\surf(D) \big) = h^0\big(\surf, \mc{O}_\surf(K_\surf-D) \big) \,, \\
h^0\big(\surf, \mc{O}_\surf(D) \big) - h^1\big(\surf, \mc{O}_\surf(D) \big) + h^2\big(\surf, \mc{O}_\surf(D) \big) = \ind\big( \mc{O}_\surf \big) + \frac{1}{2}( D \cdot D - D \cdot K_\surf ) \,.
\end{gathered}
\ee
In particular, we see that the chambers for the second cohomology are given by simply reflecting through the origin and translating by $K_\surf$ the Zariski chambers, while intersections of chambers in these two sets give chambers for the first cohomology.

\subsection{Toric surfaces}
\label{sec:tor_sf_coh}

\noindent On toric varieties, there is a powerful vanishing theorem due to Demazure. See for example Chapters~9.2 and~9.3 of Ref.~\cite{cox2011toric} for details and a proof.

\begin{nnthm}[Demazure vanishing for $\mbb{Q}$-divisors] Let $D$ be a nef $\mathbb Q$-divisor on a toric variety $X_{\Sigma}$ whose fan $\Sigma$ has convex support. Then
\begin{equation*}
H^q \big(X_\Sigma, \mathcal O_{X_\Sigma}(\floor{D}) \big) = 0,~\forall p>0 \,.
\end{equation*}
\end{nnthm}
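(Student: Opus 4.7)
The plan is to apply the character decomposition of sheaf cohomology that exists on any toric variety. For every torus-invariant $\mathbb{Z}$-divisor $E = \sum_\rho a_\rho D_\rho$ on $X_\Sigma$ there is a natural $M$-grading
\[
H^q\big(X_\Sigma, \mc{O}_{X_\Sigma}(E)\big) \;=\; \bigoplus_{m \in M} H^q\big(X_\Sigma, \mc{O}_{X_\Sigma}(E)\big)_m \,,
\]
and a \v{C}ech computation on the standard affine toric cover identifies each graded summand with a reduced simplicial cohomology
\[
H^q\big(X_\Sigma, \mc{O}_{X_\Sigma}(E)\big)_m \;\cong\; \widetilde{H}^{q-1}\big(V_{E,m};\, \mathbb{C}\big) \,,
\]
where $V_{E,m} \subseteq |\Sigma|$ is the combinatorial subset built by gluing, cone by cone, the pieces cut out by the defining inequalities $\langle m, v_\rho\rangle < -a_\rho$ on the primitive ray generators $v_\rho$ (see Thm.~9.1.3 of \cite{cox2011toric}). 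To prove the theorem it suffices to show that $V_{\floor{D},m}$ is either empty or contractible for every lattice $m \in M$.

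First I would invoke the nefness of the $\mathbb{Q}$-divisor $D$. On a toric variety with convex support $|\Sigma|$, nefness of a torus-invariant $\mathbb{Q}$-Cartier divisor is equivalent to convexity of its associated piecewise-linear support function $\psi_D$. Were $D$ already integral, the set $V_{D,m}$ would be the strict sublevel set of the concave function $\psi_D - \langle m, \cdot \rangle$ inside the convex domain $|\Sigma|$, hence convex, hence empty or contractible, and the theorem would follow immediately.

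Second, to descend from $D$ to $\floor{D}$ I would exploit the fact that only lattice characters $m \in M$ contribute to the decomposition. The key arithmetic observation is that for any integer $n$ and any real number $a$ one has $n < -a$ if and only if $n < -\floor{a}$. Applied to $n = \langle m, v_\rho \rangle \in \mathbb{Z}$ and $a = a_\rho$, this shows that the defining strict inequalities of $V_{\floor{D}, m}$ and of the $\mathbb{Q}$-analogue $V_{D, m}$ coincide on the ray generators, so the two subsets of $|\Sigma|$ have the same homotopy type, and contractibility of $V_{D, m}$ transfers to $V_{\floor{D}, m}$.

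The main obstacle I anticipate is making this last step rigorous: $V_{E,m}$ is glued together from pieces of each cone of $\Sigma$, and one has to check that the ray-by-ray equivalence of strict inequalities propagates into a genuine equivalence (up to homotopy) of the full combinatorial subsets. This is exactly where the convex-support hypothesis on $|\Sigma|$ earns its keep, promoting the pointwise sublevel-set convexity into a global contractibility statement. Once this is settled, $\widetilde H^{q-1}(V_{\floor{D},m}; \mathbb{C}) = 0$ for all $q > 0$ and all $m \in M$, and the character decomposition delivers $H^q\big(X_\Sigma, \mc{O}_{X_\Sigma}(\floor{D})\big) = 0$ for every $q > 0$.
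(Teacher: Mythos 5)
Your argument is correct and is precisely the proof the paper points to: the paper offers no proof of this theorem itself, deferring entirely to Chapters~9.2--9.3 of Cox--Little--Schenck, whose argument (the $M$-graded decomposition, the identification of graded pieces with $\widetilde{H}^{q-1}(V_{E,m};\mathbb{C})$, convexity of the support function of a nef $\mathbb{Q}$-divisor making each $V_{D,m}$ empty or contractible, and the integrality trick $\langle m, v_\rho\rangle < -a_\rho \Leftrightarrow \langle m, v_\rho\rangle < -\lfloor a_\rho\rfloor$) is exactly what you reproduce. The one ``obstacle'' you flag is in fact vacuous: since $V_{E,m}$ is built solely from the data of which rays satisfy the strict inequality, and those ray conditions coincide for $D$ and $\lfloor D\rfloor$, the sets $V_{\lfloor D\rfloor,m}$ and $V_{D,m}$ are literally the same subset of $|\Sigma|$, so there is nothing to propagate.
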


Demazure's vanishing theorem is limited to toric varieties with convex support. However, this is not a restriction in the context of Zariski decomposition, because this condition holds when the toric variety is projective. To see this, note that a projective variety is compact. A toric variety is compact if and only if its fan is `complete' (see for example Theorem~3.1.19 of Ref.~\cite{cox2011toric}), which means its support is $\mbb{R}^n$ for some $n$. But this support is clearly convex. So compact toric varieties, and in particular projective toric varieties, are covered by Demazure vanishing.

This implies that on any projective toric surface, every Zariski chamber is also a cohomology chamber.
\begin{prp}
Let $\surf$ be a smooth projective toric surface, and $D$ an effective $\mbb{Z}$-divisor with Zariski decomposition $D = P + N$. Then
\be
h^0 \big(\surf, \mc O_\surf(D) \big) = \ind\big(\surf,\mc{O}_\surf(\floor{P})\big) \,.
\ee
Hence every Zariski chamber is upgraded to a cohomology chamber. Explicitly, if $D$ lies in the Zariski chamber $\Sigma_{i_1,\ldots i_n}$, then
\be
h^0\big(S, \mc{O}_S(D)\big) = \ind\bigg(S, \mc{O}_S\Big(D - \sum_{k=1}^n\, \ceil{-D\cdot \cM_{i_k,\{i_1,\ldots,i_n\}}^\vee} \,\cM_{i_k}\Big)\bigg) \,.
\ee
\end{prp}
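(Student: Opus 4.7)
The plan is to assemble the result from pieces already established earlier in the excerpt: cohomology preservation under Zariski decomposition plus rounding (Theorem~\ref{thm:zar_coh_rel}), Demazure vanishing for the round-down of a nef $\mathbb{Q}$-divisor, and the Hirzebruch--Riemann--Roch computation of the index, together with Proposition~\ref{prp:n_gen} for the explicit form of $N$ inside a Zariski chamber.

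First I would verify that Demazure vanishing applies. The surface $\surf$ is smooth and projective, hence compact, so its fan is complete and in particular has convex support. Writing $D = P + N$ with $P$ the positive part, $P$ is a nef $\mathbb{Q}$-divisor by Theorem~\ref{thm:Zariski}. Demazure vanishing then gives
\begin{equation*}
H^q\big(\surf, \mc{O}_\surf(\floor{P})\big) = 0 \quad \text{for all } q > 0.
\end{equation*}
Combined with the Hirzebruch--Riemann--Roch formula \eqref{eq:hirz_riem_roch}, this yields
\begin{equation*}
h^0\big(\surf, \mc{O}_\surf(\floor{P})\big) = \ind\big(\surf, \mc{O}_\surf(\floor{P})\big).
\end{equation*}

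Next I would invoke Theorem~\ref{thm:zar_coh_rel}, which gives $H^0\big(\surf, \mc{O}_\surf(D)\big) \cong H^0\big(\surf, \mc{O}_\surf(\floor{P})\big)$ since $D$ is an integral effective divisor. Chaining these two identities produces the first claimed equality $h^0\big(\surf, \mc{O}_\surf(D)\big) = \ind\big(\surf, \mc{O}_\surf(\floor{P})\big)$, and simultaneously shows that every Zariski chamber is promoted to a cohomology chamber in the sense of Proposition~\ref{prp:zar_to_coh}, since the vanishing theorem holds on the entire image $\phifl(\Sigma \cap \mathrm{NS}(\surf))$ (in fact on all integral divisor classes).

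For the explicit formula, I would apply Proposition~\ref{prp:n_gen}: if $D$ lies in the Zariski chamber $\Sigma_{i_1,\ldots,i_n}$ obtained by translating the corresponding codimension $n$ face of the nef cone along $\cM_{i_1},\ldots,\cM_{i_n}$, then
\begin{equation*}
N = \sum_{k=1}^n\, (-D\cdot \cM_{i_k,\{i_1,\ldots,i_n\}}^\vee) \,\cM_{i_k}.
\end{equation*}
Since $D$ is integral, $\floor{P} = D - \ceil{N}$, and because the sum has distinct supports $\cM_{i_k}$ the ceiling acts coefficient-wise, so
\begin{equation*}
\floor{P} = D - \sum_{k=1}^n\, \ceil{-D\cdot \cM_{i_k,\{i_1,\ldots,i_n\}}^\vee} \,\cM_{i_k},
\end{equation*}
giving the stated explicit expression once substituted into the index.

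There is no real obstacle here: the entire argument is a direct assembly of previously established tools. The only subtlety worth checking carefully is that Demazure's hypothesis on convex support is automatic for projective toric surfaces; this is standard but deserves an explicit one-line justification in the writeup.
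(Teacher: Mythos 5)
Your proof is correct and follows essentially the same route as the paper's: combine the cohomology-preservation relation of Theorem~\ref{thm:zar_coh_rel} with Demazure vanishing (noting convex support is automatic for a projective toric surface), then substitute the explicit form of $N$ from Proposition~\ref{prp:n_gen} with the ceiling acting coefficient-wise. The only difference is that you spell out the steps the paper leaves implicit, which is fine.
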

\begin{proof}
The relation $h^0 \big(\surf, \mc O_\surf(D) \big) = \ind\big(\surf,\mc{O}_\surf(\ceil{P})\big)$ follows by combining the cohomology preserving property in Theorem~\eqref{thm:zar_coh_rel} with the Demazure vanishing theorem. The form of $\floor{P}$ is then as in Proposition~\ref{prp:zar_to_coh}.
\end{proof}

Moreover, on a projective toric surface the Zariski chamber decomposition is straightforward to implement, because the Mori cone, nef cone, and intersection form are all computed algorithmically from the toric data.

\smlhdg{Cohomology chambers on Gorenstein Fano toric surfaces}

\noindent A commonly used set of projective toric surfaces are the 16 Gorenstein Fano toric surfaces, whose fans are shown in \fref{fig:refl_poly} in Appendix~\ref{app:tor_surf_dat}. 

\begin{prp}
On the Gorenstein Fano toric surfaces $F_i$, the numbers $z(F_i)$ of cohomology chambers and, equally, the numbers of Zariski chambers are given by the following table,
\be
\begin{tabular}{C | C C C C C C C C C C C C C C C C}
i & 1 & 2 & 3 & 4 & 5 & 6 & 7 & 8 & 9 & 10 & 11 & 12 & 13 & 14 & 15 & 16 \\
 \hline
|\mc{I}(F_i)| & 0 & 0 & 1 & 1 & 3 & 3 & 6 & 4 & 5 & 4 & 6 & 7 & 7 & 8 & 8 & 9 \\
z(F_i) & 1 & 1 & 2 & 2 & 5 & 6 & 18 & 13 & 17 & 14 & 41 & 50 & 97 & 130 & 131 & 322
\end{tabular}
\ee
where we have included also the number $|\mc{I}(F_i)|$ of rigid curves.
\end{prp}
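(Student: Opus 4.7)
The plan is to prove this by direct enumeration, surface by surface, using the preceding proposition that on smooth projective toric surfaces every Zariski chamber is upgraded to a cohomology chamber. Thus the count $z(F_i)$ of cohomology chambers equals the count of Zariski chambers plus one (the nef cone itself, viewed as the ``trivial'' chamber corresponding to $R = \varnothing$), so it suffices to enumerate the elements of $\mathcal{R}(F_i)$ together with the empty subset.

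The first step, for each $F_i$, is to extract the combinatorial data from the fan in Appendix~\ref{app:tor_surf_dat}: list the toric divisors $D_1,\ldots,D_r$ corresponding to the rays, choose a basis of the Picard lattice, and read off the intersection matrix using the standard toric formulas recalled in Appendix~\ref{app:tor_surf}. From the intersection matrix one reads off the self-intersections $D_a^2$. The rigid irreducible curves on a smooth projective toric surface are precisely the toric divisors with negative self-intersection, so $\mathcal{I}(F_i) = \{D_a : D_a^2 < 0\}$, which determines the column $|\mathcal{I}(F_i)|$ of the table.

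The second step is to enumerate $\mathcal{R}(F_i)$. By definition, this is the collection of subsets $R \subseteq \mathcal{I}(F_i)$ whose restricted intersection matrix $I(R) = (D_a \cdot D_b)_{D_a, D_b \in R}$ is negative definite. Since $|\mathcal{I}(F_i)| \leq 9$ throughout, there are at most $2^9 = 512$ subsets to check, and negative definiteness can be verified via Sylvester's criterion on the principal minors. For each $F_i$ one tabulates $|\mathcal{R}(F_i)|$ and then sets $z(F_i) = |\mathcal{R}(F_i)| + 1$.

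The only conceptual subtlety, rather than an obstacle, is ensuring that we are really counting all Zariski chambers. By the discussion in Section~\ref{sec:zar_ch}, the Zariski chambers other than the nef cone are in bijection with faces of the nef cone that meet the big cone, which in turn are in bijection with subsets of $\mathcal{I}(F_i)$ of negative definite intersection form, i.e.\ with $\mathcal{R}(F_i)$; the nef cone itself corresponds to $R=\varnothing$. Since each $F_i$ is projective with convex (indeed complete) fan, the big cone has nonempty intersection with every face under consideration. The main ``obstacle'' is therefore purely computational rather than conceptual: carrying out the enumeration reliably for all 16 surfaces. In practice one implements the criterion on a computer and cross-checks by hand for the low Picard number cases (where $|\mathcal{I}(F_i)| \leq 4$), verifying e.g.\ that $F_1 = \mathbb{P}^2$ and $F_2$ have no negative curves and hence $z = 1$, that $F_6$ matches the explicit enumeration carried out in Section~\ref{sec:zar_ch} giving $z(F_6) = 6$, and that $F_i$ which are del Pezzo or Hirzebruch surfaces reproduce the chamber counts already known from the literature.
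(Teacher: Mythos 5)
Your proposal is correct and follows essentially the same route as the paper: read off the intersection forms from the toric data in the appendix, enumerate the subsets of rigid curves (equivalently, Mori cone generators with negative self-intersection) whose restricted intersection matrix is negative definite, and count these together with the empty set for the nef cone, invoking the earlier result that on a projective toric surface every Zariski chamber is a cohomology chamber. The one cosmetic remark is that the bijection between negative-definite subsets and faces of the nef cone meeting the big cone is the content of the Bauer--Kuronya--Szemberg chamber decomposition recalled in Section~3, not a consequence of completeness of the fan, but this does not affect the count.
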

\begin{proof}
The intersection forms for the Gorenstein Fano toric surfaces are given in Appendix~\ref{app:tor_surf_dat}. From these one can determine the subsets of the Mori cone generators on which the intersection form is negative definite. These subsets count the Zariski chambers, together with the empty set which corresponds to the nef cone, and by the above discussion these are also cohomology chambers.
\end{proof}

\smlhdg{Example: cohomology chambers for the $F_6$ surface}

\noindent In Section~\ref{sec:zar_ch} we have determined the Zariski chambers for the example of the Gorenstein Fano toric surface $F_6$, and using the Demazure vanishing theorem these can be immediately upgraded to cohomology chambers.

From the intersection form, the subsets of Mori cone generators with negative definite intersection matrix are $\{ \{\cM_1\} \,, \{\cM_2\} \,, \{\cM_3\} \,, \{\cM_1 , \cM_2 \} \,, \{\cM_1 , \cM_3 \} \}$. Together with the nef cone, this gives six Zariski chambers, illustrated in \fref{fig:f6_regs}. In the upgrade to cohomology chambers, this gives the following formulae.

\begin{equation*}
\begin{tabular}{ L | L}
\Sigma &~ h^0\big(F_6,\mc{O}_{F_6}(D)\big) \\
\hline
\Sigma_\mathrm{nef} &~ {\rm ind}\big(F_6, \cO_{F_6}(D)\big) \\
\Sigma_1 &~ {\rm ind}\Big(F_6, \cO_{F_6}\big(D-\ceil{-\frac{1}{2}D\cdot\cM_1}\cM_1\big)\Big) \\
\Sigma_2 &~ {\rm ind}\Big(F_6, \cO_{F_6}\big(D-(-D\cdot\cM_2)\cM_2\big)\Big) \\
\Sigma_3 &~ {\rm ind}\Big(F_6, \cO_{F_6}\big(D-(-D\cdot\cM_3)\cM_3\big)\Big) \\
\Sigma_{1,2} &~ {\rm ind}\Big(F_6, \cO_{F_6}\big(D-(-D\cdot(\cM_1+\cM_2))\cM_1-(-D\cdot(\cM_1+2\cM_2))\cM_2\big)\Big) \\
\Sigma_{1,3} &~ {\rm ind}\Big(F_6, \cO_{F_6}\big(D-\ceil{-\frac{1}{2}D\cdot\cM_1}\cM_1-(-D\cdot\cM_3)\cM_3\big)\Big) \\
\end{tabular}
\end{equation*}
To compute the index with Hirzebruch-Riemann-Roch, one also needs that $-K_{F_6} = 2\cM_1+4\cM_2+3\cM_3$.

It is sometimes useful to express the cohomology formulae with respect to a basis. One obvious choice here is to write a general element $D$ of the N\'{e}ron-Severi group as a sum over the Mori cone generators,
\be
D = k_1 \moricn_1 + k_2 \moricn_2 + k_3 \moricn_3 \equiv (k_1, k_2 , k_3 ) \,.
\ee
The coefficients $-D \cdot \cM_{i,\{\ldots\}}^\vee$ in the general Zariski decomposition in Equation~\eqref{eq:n_gen} are functions of the $k_i$. For example, $-D \cdot \cM_{1,\{1,2\}}^\vee = - D \cdot (\cM_1 + \cM_2) = k_1 - k_3$, so that
in the Zariski chamber $\Sigma_{1,2}$ the map $D \to \floor{P}$ is
\be
\Sigma_{1,2}: ~ (k_1,k_2,k_3) ~\to~ (k_1 , k_2 , k_3) - (k_1-k_3,0,0) - (0,k_2-2k_3,0) = ( k_3, 2 k_3, k_3 ) \,,
\ee
and across all Zariski chambers the results for $\floor{P}$ are
\be
\begin{gathered}
    \Sigma_1:  ~ \big(\floor{\tfrac{1}{2}k_2}, k_2, k_3\big) \,, \quad
    \Sigma_2:  ~ \big( k_1, k_1+k_3, k_3\big) \,, \quad 
    \Sigma_3:  ~ \big( k_1, k_2, k_2\big) \,, \\
    \Sigma_{1,2}:  ~ \big( k_3, 2k_3, k_3 \big) \,, \quad
    \Sigma_{1,3}:  ~ \big( \floor{\tfrac{1}{2}k_2} , k_2, k_2\big) \,.
\end{gathered}
\ee
The formulae describing cohomology follow from these by using the expression for the index in this basis,
\be
\begin{aligned}
\ind \big( \mc{O}_{F_6}( k_1 \moricn_1 + k_2 \moricn_2 + k_3 \moricn_3 ) \big)
=  1 - k_1^2 + \tfrac{1}{2}k_2 + k_1 k_2 - \tfrac{1}{2}k_2^2 + \tfrac{1}{2}k_3 + k_2 k_3 - \tfrac{1}{2}k_3^2 \,,
\end{aligned}
\ee
so that the zeroth cohomology in each Zariski chamber is given by the following table.
\begin{equation*}
\begin{tabular}{ L | L }
\Sigma & ~ h^0\big(F_6, \mc{O}_{F_6}(k_1\cM_1+k_2\cM_2+k_3\cM_3)\big) \vspace{1pt}   \\
\hline
\Sigma_\mathrm{nef} & ~ 1 - k_1^2 + \tfrac{1}{2}k_2 + k_1 k_2 - \tfrac{1}{2}k_2^2 + \tfrac{1}{2}k_3 + k_2 k_3 - \tfrac{1}{2}k_3^2 \\
\Sigma_1 & ~ 1 + \tfrac{1}{2} k_2 - \tfrac{1}{2} k_2^2 + \tfrac{1}{2} k_3 + k_2 k_3 - \tfrac{1}{2} k_3^2 +  k_2 \floor{\tfrac{1}{2}k_2} -  \floor{\tfrac{1}{2}k_2}^2 
\\
\Sigma_2 & ~ 1 + \tfrac{1}{2} k_1 - \tfrac{1}{2} k_1^2 + k_3 + k_1 k_3 
\\
\Sigma_3 & ~ 1 - k_1^2 + k_2 + k_1 k_2 
\\
\Sigma_{1,2} & ~ 1 + \tfrac{3}{2} k_3 + \tfrac{1}{2} k_3^2 
\\
\Sigma_{1,3} & ~ 1 + k_2 +  k_2 \floor{\tfrac{1}{2} k_2} - \floor{\tfrac{1}{2} k_2}^2 
\end{tabular}
\end{equation*}

\subsection{Generalised del Pezzo (weak Fano) surfaces}
\label{sec:gen_dp}

\smlhdgnogap{Kawamata-Viehweg vanishing and Zariski decomposition}

\noindent On non-toric surfaces Demazure's vanishing theorem is unavailable. However there is the following generalisation of Kodaira vanishing (see for example Chapter~9.1.C of Ref.~\cite{lazarsfeld2004positivity2}).

\begin{nnthm}[Kawamata-Viehweg vanishing for $\mbb{Q}$-divisors] Let $X$ be a non-singular projective variety, and let $B$ be a $\mbb{Z}$-divisor. Assume that
\be
B \numeq D + \Delta
\ee
where $D$ is a nef and big $\mbb{Q}$-divisor, and $\Delta = \sum_i a_i D_i$ is a $\mbb{Q}$-divisor with fractional coefficients $0 \leq a_i < 1$ and with simple normal crossing support. Then
\begin{equation*}
H^q \big(X, \mathcal O_X(K_X + B) \big) = 0~\forall q>0 \,.
\end{equation*}
\end{nnthm}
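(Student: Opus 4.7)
The plan is to reduce Kawamata-Viehweg vanishing to the classical Kodaira vanishing theorem via the standard cyclic covering construction, supplemented by a perturbation argument that handles the nef-and-big (rather than ample) hypothesis on $D$.

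First I would clear denominators in the fractional $\mbb{Q}$-divisor $\Delta = \sum_i a_i D_i$. Since the $a_i$ are rational, choose a positive integer $N$ with $N a_i \in \mbb{Z}$ for all $i$. The simple normal crossing hypothesis on $\mathrm{Supp}(\Delta)$ then permits the construction of a cyclic cover $\pi\colon Y \to X$ of degree $N$, ramified along $\mathrm{Supp}(\Delta)$, obtained by extracting an $N$-th root of the section associated to the $\mbb{Z}$-divisor $N\Delta$. After a log resolution if necessary, $Y$ is smooth and projective, and the SNC condition guarantees a clean direct sum decomposition of $\pi_* \mc{O}_Y$ into line bundles twisted by the round-down divisors $\floor{j\Delta}$ for $j = 0, 1, \ldots, N-1$. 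A careful bookkeeping on discrepancies then exhibits $\mc{O}_X(K_X + B)$ as a direct summand of $\pi_* \mc{O}_Y(K_Y + \pi^* H)$ for a suitable $\mbb{Z}$-divisor $H$ on $X$ that is numerically a positive integer multiple of $D$.

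Next, since $D$ is only nef and big rather than ample, I would invoke Kodaira's lemma: for a nef and big $\mbb{Q}$-divisor $D$ there exist an effective $\mbb{Q}$-divisor $E$ and a rational $\epsilon > 0$ such that $D - \epsilon E$ is ample. Absorbing the small effective piece $\epsilon E$ into $\Delta$, and passing to a further log resolution to restore the SNC hypothesis while keeping all fractional coefficients in $[0,1)$, reduces the problem to the case where the nef-and-big part is in fact ample. Kodaira vanishing applied on the smooth projective variety $Y$ then kills $H^q(Y, \mc{O}_Y(K_Y + \pi^* H))$ for all $q > 0$, and the stated vanishing on $X$ follows by projecting onto the relevant direct summand.

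The main obstacle will be the combinatorial bookkeeping that links $K_Y$, $\pi^* K_X$, and the fractional part of $\Delta$: the hypothesis $0 \le a_i < 1$ is precisely what is needed so that the relevant summand of $\pi_* \mc{O}_Y(K_Y)$ is $\mc{O}_X(K_X)$ rather than a twist by $\floor{\Delta}$, and this identification must survive both the $N$-th root construction and the log resolutions used in the perturbation step. Verifying these details carefully is exactly the content of the standard expositions, for instance Chapter~9 of the reference cited in the statement.
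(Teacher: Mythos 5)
The paper does not prove this statement at all: it is quoted as a classical theorem, with a pointer to Chapter~9.1.C of Lazarsfeld's \emph{Positivity in Algebraic Geometry II}, so there is no in-paper proof to compare against. Your sketch is essentially the standard argument given in that reference --- cyclic covering along the SNC support of $\Delta$ to reduce to an integral divisor, Kodaira's lemma to replace the nef-and-big part by an ample one, and Kodaira vanishing on the cover --- and the overall plan is sound. Two points where the sketch is loosest and would need care in a full write-up: first, the perturbation step is usually performed \emph{before} the covering construction, because absorbing $\epsilon E$ into $\Delta$ changes the support and forces a new log resolution $\mu\colon X' \to X$, after which you must transport the vanishing back down to $X$; this requires knowing that $\mu_*$ of the relevant adjoint sheaf recovers $\mc{O}_X(K_X+B)$ and that the higher direct images vanish (a local vanishing statement that is itself nontrivial and is not mentioned in your outline). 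Second, the hypothesis is only \emph{numerical} equivalence $B \numeq D + \Delta$, not linear equivalence, so the identification of $\mc{O}_X(K_X+B)$ with a summand of the pushforward from the cover needs the extra observation that one may twist by a numerically trivial line bundle throughout (nefness, bigness and ampleness being numerical properties). Neither issue is a fatal gap --- both are handled in the standard expositions you defer to --- but they are exactly the places where the bookkeeping can silently go wrong.
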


We note the useful characterisation that on an irreducible projective variety $X$ of dimension $n$ a nef divisor $D$ is big if and only if $D^n>0$ (see Theorem 2.2.16 in Ref.~\cite{lazarsfeld2004positivity1}). Before applying the Kawamata-Viehweg vanishing theorem in the context of Zariski decomposition, we explain the definition of simple normal crossing support. A divisor $\sum_i a_i D_i$ on a variety of dimension $n$ has simple normal crossing support if each component $D_i$ is smooth (simpleness) and the reduced divisor $\sum_i D_i$ can be defined in the neighbourhood of any point by an equation 
\be
z_1 \cdot \ldots \cdot z_{k \leq n} = 0 \,,
\ee
where $z_i$ are independent local parameters (normal crossing support). For example, on a surface there are two independent local parameters, so if three components $D_i$ meet at a point, the divisor does not have simple normal crossing support.

\medskip

Note that the round-up $\ceil{D} = D + \Delta$ of a $\mbb{Q}$-divisor satisfies the requirements for $B$ in the theorem, provided that $\Delta = \ceil{D} - D$ has simple normal crossing support. Additionally, it is convenient to rewrite the theorem to state that higher cohomologies of $\mc{O}_X(B)$ vanish when $B$ is such that $B-K_X$ is nef and big. This gives the following corollary.

\begin{crl}\label{crl:kaw_vieh}
Let $\surf$ be a smooth projective surface and let $P$ be a $\mbb{Q}$-divisor. If $P-K_\surf$ is nef and big, and $\ceil{P}-P$ has simple normal crossing support, then
\begin{equation*}
H^q \big(\surf, \mathcal O_\surf(\ceil{P}) \big) = 0~\forall q>0 \,.
\end{equation*}
\end{crl}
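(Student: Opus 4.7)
The plan is to derive this corollary as essentially a direct rewriting of the Kawamata-Viehweg vanishing theorem, with the divisor $\ceil{P}$ decomposed into the sum of a canonical class, a nef-and-big piece, and a fractional piece with simple normal crossing support.

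First I would set up the identification with the form in which Kawamata-Viehweg was stated. The statement gives vanishing for $H^q\big(\surf, \mc{O}_\surf(K_\surf + B)\big)$ when $B \numeq D + \Delta$ with $D$ nef and big and $\Delta$ fractional with simple normal crossing support. Since I want to control $H^q\big(\surf, \mc{O}_\surf(\ceil{P})\big)$, I would take $B := \ceil{P} - K_\surf$, which is an integral divisor since both $\ceil{P}$ and $K_\surf$ are integral.

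Next I would exhibit the required decomposition of $B$. The natural choice is
\[
\ceil{P} - K_\surf = (P - K_\surf) + (\ceil{P} - P) =: D + \Delta.
\]
By hypothesis, $D = P - K_\surf$ is a nef and big $\mbb{Q}$-divisor, so it plays the role of the nef-and-big piece directly (the numerical equivalence needed in the theorem is trivially satisfied as an actual equality). The divisor $\Delta = \ceil{P} - P$ has, by the very definition of the round-up, coefficients in $[0,1)$, and has simple normal crossing support by hypothesis. So both conditions of Kawamata-Viehweg are met.

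Applying the theorem to this $B$ then gives
\[
H^q\big(\surf, \mc{O}_\surf(K_\surf + B)\big) = H^q\big(\surf, \mc{O}_\surf(\ceil{P})\big) = 0 \quad \forall q > 0,
\]
which is the desired statement. There is no real obstacle here — the only thing to check carefully is that $\ceil{P} - P$ genuinely satisfies the coefficient condition $0 \leq a_i < 1$ (immediate from the componentwise definition of the round-up) and that one is free to take the numerical equivalence in the hypothesis of Kawamata-Viehweg to be an honest equality of $\mbb{Q}$-divisors, which is allowed since equality implies numerical equivalence. The hypothesis on simple normal crossing support is imported directly.
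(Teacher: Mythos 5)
Your proof is correct and matches the paper's approach exactly: the paper's own proof is just ``This is immediate,'' relying on the preceding remark that one sets $B = \ceil{P} - K_\surf$ and decomposes it as $(P - K_\surf) + (\ceil{P} - P)$, which is precisely your argument. You have simply spelled out the details that the paper leaves implicit.
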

\begin{proof}
This is immediate.
\end{proof}
\noindent Here we have suggestively written $P$ for the $\mbb{Q}$-divisor, as we are interested in applying this vanishing theorem to the positive part $P$ of a Zariski decomposition, as in Proposition~\eqref{prp:zar_to_coh}. While the positive part $P$ of a Zariski decomposition is by definition nef, it is not in general true that $P-K_\surf$ is nef and big, nor is it necessarily true that the fractional part of $P$ has simple normal crossing support. However, there is at least one obvious class of surfaces for which these conditions are satisfied for every Zariski decomposition of a $\mbb{Z}$-divisor. These are the generalised del Pezzo surfaces, as we now discuss.

\smlhdg{Application to generalised del Pezzo surfaces}
\newline
\noindent In order for Corollary~\eqref{crl:kaw_vieh} to apply to all $\mbb{Q}$-divisors $P$ throughout the nef cone, it is necessary that $P-K_\surf$ be nef and big for every nef $P$. Clearly, nefness of all $P-K_\surf$ for all $P$ requires that $-K_\surf$ be itself nef. Additionally, recalling that a nef divisor $D$ is big if and only if $D^2>0$, we check the self-intersection
\be
(P-K_\surf)^2 = P^2  + 2P \cdot (-K_\surf) + (-K_\surf)^2 \,.
\ee
Since $P$ is nef and effective, $P^2 \geq 0$, while since $P$ is nef and $-K_\surf$ is effective, $P \cdot (-K_\surf) \geq 0$. To guarantee that $P-K_\surf$ is always big, the final term must be positive, so that $-K_\surf$ must be big. A variety whose anti-canonical divisor is nef and big is called `weak Fano', or in two dimensions a `generalised del Pezzo' surface. 
All generalised del Pezzo surfaces except for the Hirzebruch surfaces $\mathbb F_0$ and $\mathbb F_2$ are blow-ups of the projective plane at $n\leq 8$ points in almost general position. For the main properties of generalised del Pezzo surfaces we refer the reader to textbook accounts such as Chapter~5.2 of Ref.~\cite{arzhantsev2015cox}, Chapter~8 of Ref.~\cite{dolgachev2012classical} and Ref.~\cite{Derenthal_2013}.

An important result for the present discussion is that on a generalised del Pezzo surface all negative curves are smooth and have self-intersection $-1$ or $-2$ (see e.g.~Lemma 2.7 in Ref.~\cite{elagin2020}). On del Pezzo surfaces the same statement holds, with the exception that there are no $-2$ curves. In particular, this result implies that on every (generalised) del Pezzo surface, if $D=P+N$ is the Zariski decomposition of an effective integral divisor $D$, the fractional divisor $\ceil{P}-P$ always has simple normal crossing support, as shown in the following proposition.

\begin{prp}\label{prp:norm_cros}
Let $\surf$ be a smooth projective surface, and let $D$ be an effective $\mbb{Z}$-divisor with Zariski decomposition $D = P + N$. If there are no curves $C$ on the surface $\surf$ with self-intersection $C^2 < -2$, then $\ceil{P}-P$ has normal crossing support.
\end{prp}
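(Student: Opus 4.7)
The plan is to verify normal crossing support for $\ceil{P}-P$ by checking two things: that each irreducible component of the support is smooth, and that at every point the local picture is given by independent coordinates, i.e., at most two components meet (on a surface) and when two meet they do so transversally at a single point.

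A first reduction: since $D=P+N$ is integral, the fractional part of $P$ is supported exactly where $N$ has a non-integral coefficient, so $\mathrm{Supp}(\ceil{P}-P)\subseteq \mathrm{Supp}(N)$. It is therefore enough to establish the normal crossing property for the components $C_1,\ldots,C_k$ of $N$. Since these are irreducible rigid curves, $C_i^2<0$, and by hypothesis $C_i^2\in\{-1,-2\}$.

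The heart of the argument is the negative-definiteness of the intersection matrix restricted to the $C_i$, guaranteed by Zariski's theorem. For $i\neq j$ let $m:=C_i\cdot C_j\in\mathbb{Z}_{\ge 0}$; negative-definiteness of the principal minor
\[
\begin{pmatrix} C_i^2 & m\\ m & C_j^2\end{pmatrix}
\]
forces $m^{2}<C_i^2 C_j^2\le 4$, hence $m\le 1$, with $m=0$ mandatory when both curves are $(-1)$-curves. Thus any two components either miss each other or meet at a single transverse point. If three components shared a point, the pairwise intersection numbers would each be at least $1$, and hence exactly $1$ by what was just shown; a direct computation of the possible $3\times 3$ principal minors (with all off-diagonals equal to $1$, diagonals in $\{-1,-2\}$, and using that two $(-1)$-curves cannot meet) shows each has non-negative determinant, contradicting negative-definiteness. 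So at most two components pass through any point, and they meet transversally.

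For the smoothness of the components, which the bound $C^2\ge -2$ does not strictly enforce on its own, I would appeal to the surface-specific classification recalled in the paragraph immediately preceding the proposition: on a generalised del Pezzo surface, and more generally on the surfaces to which this proposition will be applied, the irreducible $(-1)$- and $(-2)$-curves are smooth rational. Combining this with the intersection-theoretic conclusions above gives the normal crossing condition for $\mathrm{Supp}(N)$, and hence for $\ceil{P}-P$. I expect the smoothness step to be the main obstacle, since it is the only part not a formal consequence of negative-definiteness; the rest reduces to a bounded case analysis of $2\times 2$ and $3\times 3$ determinants.
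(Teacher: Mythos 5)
Your proof is correct and follows essentially the same route as the paper's: reduce to $\mathrm{Supp}(N)$, then use negative definiteness of its intersection matrix together with the diagonal entries lying in $\{-1,-2\}$ to rule out three pairwise-intersecting components, hence three components through a point (your $2\times 2$ minor computation giving transversality of pairwise intersections is a welcome detail the paper glosses over). The smoothness worry at the end is unnecessary for this particular proposition, which claims only \emph{normal crossing} support rather than \emph{simple} normal crossing support; the smoothness of the components (the ``simple'' qualifier) is exactly what the paper defers to the surface-specific discussion immediately afterwards.
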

\begin{proof}
Clearly $\mathrm{Supp}(\ceil{P}-P) \subseteq \mathrm{Supp}(N)$. Hence if $N$ has normal crossing support then so does the $\ceil{P}-P$. A sufficient condition for $N$ to have normal crossing support is that no three curves in $\mathrm{Supp}(N)$ can intersect at a point. By definition, $\mathrm{Supp}(N)$ has negative definite intersection matrix, and hence so does any subset. The intersection matrix between three curves in the support of $N$ is hence a $3 \times 3$ negative definite matrix. In the assumption of the theorem, the diagonal entries are equal to $-1$ or $-2$. However such a matrix cannot have strictly positive elements in all off-diagonal entries, as is trivial to check. Hence any three curves in $\mathrm{Supp}(N)$ cannot all have pairwise intersections, and so certainly cannot all meet at a point. So $N$ has normal crossing support, and hence so does the $\ceil{P}-P$.
\end{proof}

Since all negative curves on a generalised del Pezzo surface are smooth, normal crossing support for  $\ceil{P}-P$ implies simple normal crossing support. Hence, the generalised del Pezzo surfaces are a class of surfaces on which Corollary~\eqref{crl:kaw_vieh} can be applied.

\begin{prp}
Let $\surf$ be a smooth generalised del Pezzo surface, and $D$ an effective $\mbb{Z}$-divisor with Zariski decomposition $D = P + N$. Then
\be
h^0 \big(\surf, \mc O_\surf(D) \big) = \ind\big(\surf,\mc{O}_\surf(\ceil{P})\big) \,.
\ee
Hence every Zariski chamber is upgraded to a cohomology chamber. Explicitly, if $D$ lies in a Zariski chamber $\Sigma_{i_1,\ldots i_n}$, then
\be
h^0\big(S, \mc{O}_S(D)\big) = \ind\bigg(S, \mc{O}_S\Big(D - \sum_{k=1}^n\, \floor{-D\cdot \cM_{i_k,\{i_1,\ldots,i_n\}}^\vee} \,\cM_{i_k}\Big)\bigg) \,.
\ee
\end{prp}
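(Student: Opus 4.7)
The plan is to assemble the result directly from the machinery already established in the excerpt. The first step is to invoke Corollary~\ref{crl:zar_coh_rel_cl}, which gives the isomorphism $H^0(\surf,\mc{O}_\surf(D)) \cong H^0(\surf,\mc{O}_\surf(\ceil{P}))$ for any effective $\mbb{Z}$-divisor. This reduces the problem to showing that the higher cohomologies of $\mc{O}_\surf(\ceil{P})$ vanish, so that $h^0(\surf,\mc{O}_\surf(\ceil{P})) = \ind(\surf,\mc{O}_\surf(\ceil{P}))$ by Hirzebruch-Riemann-Roch.

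For this vanishing, I would apply Corollary~\ref{crl:kaw_vieh} (Kawamata-Viehweg) to the $\mbb{Q}$-divisor $P$. There are two hypotheses to verify: that $P - K_\surf$ is nef and big, and that $\ceil{P} - P$ has simple normal crossing support. For the first condition, I would use the defining properties of a generalised del Pezzo surface, namely that $-K_\surf$ is nef and big. Since $P$ is nef by Zariski's theorem, the sum $P + (-K_\surf)$ is nef, and the self-intersection
\be
(P - K_\surf)^2 = P^2 + 2\, P\cdot(-K_\surf) + (-K_\surf)^2
\ee
is strictly positive: the first two terms are nonnegative since $P$ is nef and both $P$ and $-K_\surf$ are effective (the latter by weak Fano-ness), while $(-K_\surf)^2 > 0$ because $-K_\surf$ is nef and big. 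Thus $P - K_\surf$ is big by the characterisation $D^2 > 0$ for nef divisors on surfaces.

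For the normal crossing condition, I would appeal to Proposition~\ref{prp:norm_cros}, which only requires that no curve on $\surf$ has self-intersection $< -2$. This is a standard property of generalised del Pezzo surfaces: all negative curves are smooth with self-intersection $-1$ or $-2$. Smoothness of each component in $\mathrm{Supp}(N) \supseteq \mathrm{Supp}(\ceil{P}-P)$ then promotes normal crossing to simple normal crossing, which is exactly what Corollary~\ref{crl:kaw_vieh} demands. Combining these gives $H^q(\surf,\mc{O}_\surf(\ceil{P})) = 0$ for $q>0$ and hence the first equation of the proposition.

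For the explicit Zariski-chamber form, I would simply substitute into $\ceil{P}$ the expression from Proposition~\ref{prp:n_gen}, namely $N = \sum_{k=1}^n (-D\cdot \cM_{i_k,\{i_1,\ldots,i_n\}}^\vee)\, \cM_{i_k}$, and use that $\ceil{P} = \ceil{D-N} = D - \floor{N}$ since $D$ is integral, so the floor passes onto the coefficients of the $\cM_{i_k}$. This is exactly the formula given in Equation~\eqref{eq:clp_gen}, and in conjunction with the vanishing above completes the proof. The main subtlety in this whole argument, and the only place that actually uses the generalised del Pezzo hypothesis in a nontrivial way, is ensuring the simple normal crossing condition; one must know both that the negative curves are smooth and that their self-intersections are bounded below by $-2$. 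Everything else is formal assembly of earlier statements.
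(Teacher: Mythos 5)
Your proposal is correct and follows essentially the same route as the paper: the cohomology-preservation relation for $\ceil{P}$, Kawamata--Viehweg via Corollary~\ref{crl:kaw_vieh} (with $P-K_\surf$ nef and big from weak Fano-ness and simple normal crossings from Proposition~\ref{prp:norm_cros} plus the $C^2\geq -2$ property), and then the chamber formula from Proposition~\ref{prp:n_gen}. If anything, you are slightly more careful than the paper's one-line proof in explicitly citing the round-up version (Corollary~\ref{crl:zar_coh_rel_cl}) of the cohomology-preservation statement, which is the one actually needed here.
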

\begin{proof}
The relation $h^0 \big(\surf, \mc O_\surf(D) \big) = \ind\big(\surf,\mc{O}_\surf(\ceil{P})\big)$ follows by combining the cohomology preserving property in Theorem~\eqref{thm:zar_coh_rel} with Corollary~\eqref{crl:kaw_vieh}. The form of $\ceil{P}$ is then as in Proposition~\ref{prp:zar_to_coh}.
\end{proof}

\smlhdg{Classification of generalised del Pezzo surfaces}

\noindent Up to isomorphism, a generalised del Pezzo surface is either $\mbb{P}^1 \times \mbb{P}^1$, the Hirzebruch surface $\mbb{F}_2$, or a blow-up of $\mbb{P}^2$ at up to 8 points in almost general position. The ordinary del Pezzo surfaces, on which the anti-canonical divisor is not just nef and big but ample, are $\mbb{P}^1 \times \mbb{P}^1$ and the blow-ups of $\mbb{P}^2$ at points in general position.
A useful invariant of a generalised del Pezzo surface $\surf$ is the degree $d = (-K_\surf)^2$. On a generalised del Pezzo surface given by the blow-up of $\mbb{P}^2$ in $n$ points, the degree is $d=9-n$. In the remaining cases of $\mbb{P}^1 \times \mbb{P}^1$ and $\mbb{F}_2$ the degree is 8. Note $1 \leq d \leq 9$.

As already mentioned above, any curve on a generalised del Pezzo surface has self-intersection $C^2 \geq -2$, and any curve on an ordinary del Pezzo surface has self-intersection $C^2 \geq -1$. On a generalised del Pezzo surface, the number of curves with self-intersection $C^2=-2$ is at most 9, while the number of curves with self-intersection $C^2=-1$ is finite.

\medskip

Generalised del Pezzo surfaces are  classified in terms of their `type', as defined below.
\begin{nndefn}[Definition~3 in Ref.~\cite{Derenthal_2013}]
Two generalised del Pezzo surfaces have the same type if there is an isomorphism of their Picard groups preserving the intersection form that gives a bijection between their sets of classes of negative curves.
\end{nndefn}
\noindent This classification is particularly important for our purposes, since the decomposition of the Mori cone of a surface into Zariski chambers is determined by the Mori cone generators and the intersection form alone. While in general the negative curves do not fully specify the Mori cone, there is the following theorem.
\begin{nnthm}[Theorem~3.10 in Ref.~\cite{derenthal2007nef}]
On a generalised del Pezzo surface of degree $d \leq 7$, the effective cone is finitely generated by the set of $(-1)$- and $(-2)$-curves.
\end{nnthm}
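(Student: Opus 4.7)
The plan is to invoke the cone theorem of Mori for weak Fano surfaces and then classify the extremal rays of $\overline{\mathrm{NE}}(\surf)$ explicitly using adjunction. First, since $-K_\surf$ is nef and big on a generalised del Pezzo surface, the cone theorem yields that $\overline{\mathrm{NE}}(\surf)$ is rational polyhedral, decomposing as $\overline{\mathrm{NE}}(\surf)_{K_\surf \geq 0} + \sum_i \mathbb{R}_{\geq 0}[\ell_i]$, where the $\ell_i$ are classes of rational curves with $0 < -K_\surf \cdot \ell_i \leq \dim \surf + 1 = 3$. Since $-K_\surf$ is nef, the first summand equals the face of the Mori cone orthogonal to $-K_\surf$, and is generated by classes of irreducible curves with $K_\surf \cdot C = 0$.

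Second, I would classify each type of extremal curve via adjunction $2g - 2 = C^2 + K_\surf \cdot C$ and the assumption that on a generalised del Pezzo no curve has self-intersection less than $-2$. For an irreducible curve $C$ on the $K_\surf$-orthogonal face, the intersection with the nef-and-big class $-K_\surf$ vanishes, so Hodge index forces $C^2 \leq 0$. Combined with the self-intersection bound, adjunction gives $C^2 = -2$, $g = 0$; such a curve is a $(-2)$-curve, and being rigid is necessarily a generator of the Mori cone (as proved earlier in the excerpt). For a $K_\surf$-negative extremal ray generated by a rational curve $C$, one has $-K_\surf \cdot C \in \{1,2,3\}$, giving respectively $C^2 \in \{-1, 0, 1\}$ by adjunction; one then must show that only the case $C^2 = -1$ occurs.

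The main obstacle is precisely ruling out extremal rays with $C^2 \geq 0$. The approach I would take is to realise $\surf$ as a blow-up of $\mathbb{P}^2$ (or $\mathbb{F}_2$) in at most $n = 9-d$ points in almost general position, and exploit the action of the Weyl group of the corresponding root system on $\mathrm{Pic}(\surf)$. The $(-1)$-classes form a single Weyl orbit and can be enumerated explicitly for $d \leq 7$ via their intersection numbers with $-K_\surf$; one verifies directly that a class with $C^2 \geq 0$ and $-K_\surf \cdot C \leq 3$ is always expressible as a sum of two distinct effective classes lying in the cone over the enumerated $(-1)$- and $(-2)$-curves, contradicting extremality. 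The degree restriction $d \leq 7$ (equivalently $n \geq 2$) is essential here: it guarantees enough $(-1)$-curves that any conic class $C$ with $C^2 = 0$ splits as a sum of two $(-1)$-curve classes in the Mori cone, and any $C^2 = 1$ class is represented by the strict transform of a conic through blown-up points and decomposes similarly.

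Once this classification is established, the theorem follows: every extremal ray is spanned by a $(-1)$- or $(-2)$-curve, and polyhedrality ensures that the Mori cone is the finite convex hull of these generators.
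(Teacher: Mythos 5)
The paper does not actually prove this statement: it is imported verbatim as Theorem~3.10 of Ref.~\cite{derenthal2007nef}, with a citation standing in for the proof, so there is no in-paper argument to compare yours against. Your proposal is nonetheless a coherent standalone route and is essentially the standard one: Hodge index applied to the big and nef class $-K_\surf$ forces any irreducible curve on the face $K_\surf^\perp\cap\overline{\mathrm{NE}}(\surf)$ to have $C^2<0$, whence $C^2=-2$ and $g=0$ by adjunction, while the $K_\surf$-negative extremal rays carry irreducible rational curves with $-K_\surf\cdot C\in\{1,2,3\}$, i.e.\ $C^2\in\{-1,0,1\}$.

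Two steps are thinner than the rest. First, the cone theorem alone does not give rational polyhedrality of all of $\overline{\mathrm{NE}}(\surf)$ --- the $K_\surf$-negative rays are a priori only locally discrete away from $K_\surf^\perp$; you need either the log-Fano refinement (perturb $K_\surf$ by a small multiple of an effective $E$ with $-K_\surf\lineq A+E$, $A$ ample) or, more simply here, the finiteness of the sets of $(-1)$- and $(-2)$-curves, which suffices once every extremal ray is shown to be spanned by one of them. Second, the case $C^2\geq0$ is where all the content sits, and you only assert the decomposition. The case $C^2=1$ needs no enumeration of conics: such a class is big, hence interior to $\overline{\mathrm{NE}}(\surf)$ once $\rho(\surf)=10-d\geq3$, so it cannot span an extremal ray. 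For $C^2=0$ with $-K_\surf\cdot C=2$, the clean finish is to note that any such lattice class is a sum $D_1+D_2$ of two $(-1)$-classes with $D_1\cdot D_2=1$ (e.g.\ $H-E_1=(H-E_1-E_2)+E_2$, the Weyl group acting transitively on these ruling classes), and that \emph{every} $(-1)$-class on a weak del Pezzo surface is effective by Riemann--Roch, since $\chi\big(\mc{O}_\surf(D)\big)=1$ and $h^2=0$ because $(K_\surf-D)\cdot(-K_\surf)<0$; this kills extremality without relying on the Weyl action preserving effectivity, which it does not on a non-ordinary generalised del Pezzo. With those two points filled in, your argument is correct.
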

\noindent While this theorem does not cover the cases with degrees 9 or 8, there is up to isomorphism precisely one generalised del Pezzo surface with degree 9, $\mbb{P}^2$, and three with degree 8, $\mbb{P}^1 \times \mbb{P}^1$, $\mbb{F}_2$, and $\mathrm{Bl}_1\mbb{P}^2$, which are all of distinct types. Hence the Mori cone and intersection form, and hence also the Zariski chambers, are fixed within a type.

Since the surfaces with degree $d=9$ or $d=8$ are toric and very simple, the classification of types of generalised del Pezzos can be restricted to $d \leq 7$. With this restriction, the Picard group $\mathrm{Pic}(\surf)$ and its intersection form depend only on the degree of $\surf$. What then differs among generalised del Pezzo surfaces of the same degree are the classes in $\mathrm{Pic}(\surf)$ which are effective.

To cut a long story short, the type of a generalised del Pezzo surfaces $S$ of degree $d\leq 7$ is specified by three elements: its degree $d$, the incidence graph $\Gamma$ of the (-2)-curves, which turns out to be always a disjoint union of Dynkin graphs of types $A$, $D$, $E$, and the number $m$ of (-1)-curves, hence the notation~$S_{d,\Gamma, m}$. 

For each degree $d\leq 7$, the graphs describing the possible configurations of (-2)-curves correspond to the Dynkin diagrams of all the subsystems of the root systems $R_d$ (up to automorphisms of $R_d$) given in the following table 
\be
\begin{tabular}{C | C C C C C C C}
d & 7 & 6 & 5 & 4 & 3 & 2 & 1 \\ \hline
R_d & \mathbf{A}_1 & \mathbf{A}_2+\mathbf{A}_1 & \mathbf{A}_4 & \mathbf{D}_5 & \mathbf{E}_6 & \mathbf{E}_7 & \mathbf{E}_8
\end{tabular}
\ee
with the exception of the subsystems $7\mathbf{A}_1$ of $R_2$ and $7\mathbf{A}_1$, $8\mathbf{A}_1$ and $\mathbf{D}_4 + 4\mathbf{A}_1$ of $R_1$, which only occur in characteristic 2 (see \cite{urabe1983,alexeev2006pezzo}). A subsystem consists of the set of $(-2)$-classes which are effective, and the simple roots of the subsystem are the irreducible elements, i.e.\ the $(-2)$-curves. 

The classes of the $(-1)$-curves are the elements $[D]\in \mathrm{Pic}(\surf)$ with $[D] \cdot [D] = -1$ and $[D] \cdot (-K_\surf) = 1$ which also satisfy $[D] \cdot [C] \geq 0$ for all $(-2)$-curves $C$. As an example, in each degree $d \leq 7$ there is a type corresponding to the empty subsystem. In this type there are no $(-2)$-curves, and this type contains precisely the ordinary del Pezzo surfaces of degree $d$. Here the constraint $[D] \cdot C \geq 0$ for all $(-2)$-curves $C$ is trivial so the classes of the $(-1)$-curves are determined by the conditions $[D] \cdot [D] = -1$ and $[D] \cdot (-K_\surf) = 1$.

There are 176 types of generalised del Pezzo surface. Within each type there can be multiple or infinitely many non-isomorphic surfaces. We note that 16 of these types contain a single toric surface, which are those in \fref{fig:refl_poly}, and these are the only toric examples. These are the Gorenstein Fano toric surfaces, discussed in Section~\ref{sec:tor_sf_coh}. For each degree $d$, there is one type containing the ordinary del Pezzo surfaces of degree $d$, except in the case $d=8$ where there are two types, each containing precisely one of the non-isomorphic ordinary del Pezzo surfaces of this degree. Note that the ordinary del Pezzo surfaces are non-toric only for $d<6$. The distribution of the types according to degree, and their breakdown into toric and non-toric cases, is
\be
\begin{tabular}{C | C C C C C C C C C}
d & 9 & 8 & 7 & 6 & 5 & 4 & 3 & 2 & 1 \\
 \hline
\#~\text{total} & 1 & 3 & 2 & 6 & 7 & 16 & 21 & 46 & 74 \\
\#~\text{toric} & 1 & 3 & 2 & 4 & 2 & 3 & 1 & 0 & 0 \\
\end{tabular} 
\ee

\smlhdg{Example: ordinary del Pezzo surfaces}

\noindent Among the generalised del Pezzo surfaces are the ordinary del Pezzo surfaces. As well as the simple case of $\mbb{P}^1 \times \mbb{P}^1$, these are the blow-ups of $\mbb{P}^2$ at $0 \leq n \leq 8$ points in general position, which we write as dP$_n$. These surfaces are non-toric only for $n>3$. The numbers of Zariski chambers $z(\mathrm{dP}_n)$ on dP$_n$ have been determined in Ref.~\cite{bauer2009counting}. By the above analysis, these are also the numbers of cohomology chambers. The numbers of chambers together with the numbers $|\mc{I}(\mathrm{dP}_n)|$ of negative curves (which must be $(-1)$-curves) are
\be
\begin{tabular}{C | C C C C C C C C C}
n & 0 & 1 & 2 & 3 & 4 & 5 & 6 & 7 & 8 \\
 \hline
 |\mc{I}(\mathrm{dP}_n)| & 0 & 1 & 3 & 6 & 10 & 16 & 27 & 56 & 240 \\
z(\mathrm{dP}_n) & 1 & 2 & 5 & 18 & 76 & 393 & 2764 & 33645 & 1501681
\end{tabular}
\ee
The description of the Zariski chambers is relatively simple. Recall that on an ordinary del Pezzo any curve has self-intersection $C^2 \geq -1$. Noting that a negative definite matrix with $-1$ on the diagonal must be zero off the diagonal, we see the supports of the negative part in Zariski decomposition are the sets of $(-1)$-curves having no mutual intersections. This implies that the duals appearing in the general Zariski decomposition in Equation~\eqref{eq:n_gen} are given by $C_{i,\mathrm{Supp}(N)}^\vee = -C_i$ for every support $\mathrm{Supp}(N) \ni C_i$. It is easy to check, for example recalling the discussion at the end of Section~\ref{sec:zar_ch}, that this implies for a divisor $D$ with Zariski decomposition $D = P + N$ that $C_i \in \mathrm{Supp}(N)$ if and only if $D \cdot C_i < 0$. Hence the boundaries between Zariski chambers are simply the hyperplanes orthogonal to rigid curves. The interiors of the Zariski chambers are hence the connected regions upon removing from the big cone this set of hyperplanes. These are just the simple Weyl chambers, as defined in Ref.~\cite{Rams16}.

Thus on ${\rm dP}_n$ the class of an effective divisor $D$ belongs to a Zariski chamber $\Sigma_{i_1,i_2\ldots,i_k}$ if and only if $D\cdot C_{j}<0$ for all $j\in \{i_1,i_2\ldots,i_s\}$ and $D\cdot C_j\geq 0$ for all $j\notin \{i_1,i_2\ldots,i_s\}$. Within this chamber, the zeroth cohomology of $\cO_{{\rm dP}_n}(D)$ is given by 
\begin{equation*}
h^0\left(\mathrm{dP}_n,\mc{O}_{\mathrm{dP}_n}(D)\right) = \mathrm{ind}\bigg(\mathrm{dP}_n, ~\mc{O}_{\mathrm{dP}_n}\Big(D ~~- \!\!\!\! \sum_{j\in \{i_1,i_2\ldots,i_s\}} \!\!\!\! (D\cdot(-C_j))~C_j\Big) \bigg) \,.
\end{equation*}
\noindent The formula can be alternatively written in the following form, which appeared in Refs.~\cite{Brodie:2019ozt, Brodie:2019pnz}:
\begin{equation*}
h^0\left(\mathrm{dP}_n,\mc{O}_{\mathrm{dP}_n}(D)\right) = \mathrm{ind}\Big(\mathrm{dP}_n, ~\mc{O}_{\mathrm{dP}_n}\big(D ~~ +\!\!\!\! \sum_{C_i\in \mc{I}(\mathrm{dP}_n)} \!\!\!\! \theta( - D \cdot C_i ) \, (D \cdot C_i)~C_i\big) \Big) \,.
\label{eq:dp2_h0_alt}
\end{equation*}

\smlhdg{Example: Degree 6 Type $\mbb{A}_1$}

\noindent On a generalised del Pezzo surface the Picard number $\rho$ is related to the degree $d$ by $\rho = 10-d$. For degrees $d \geq 7$, i.e.\ Picard numbers $\rho \leq 3$, all examples of generalised del Pezzo surfaces are toric. In degree $d = 6$, where the Picard number is $\rho = 4$, in addition to four toric types there are two non-toric types. These non-toric types correspond to root subsystems $\mathbf{A}_1$ and $\mathbf{A}_2$. All six types are shown in Table~4 of Ref.~\cite{Derenthal_2013}. 
We take the case of the subsystem $\mathbf{A}_1$ as a simple example of a generalised del Pezzo surface $\surf$ which is neither toric nor an ordinary del Pezzo surface.

The Picard group and intersection form depend only on the degree of the generalised del Pezzo surface. For degree $d=6$ the Picard lattice is spanned by $l_0$ and $l_i$ with $i=1,2,3$, which we write collectively as $l_A$. We write a general element $\sum_{A=0}^4 k_A l_A$ in this basis as a vector $(k_0, k_1, k_2, k_3)$.  In this basis the intersection form and the anti-canonical divisor class $-K_\surf$ are
\be
(l_A \cdot l_B) = 
\begin{pmatrix}
1 & 0 & 0 & 0 \\
0 & -1 & 0 & 0 \\
0 & 0 & -1 & 0 \\
0 & 0 & 0 & -1 \\
\end{pmatrix} \,,
\quad
-K_\surf=(3,1,1,1) \,.
\ee
There are six $(-1)$-classes, satisfying $D^2 = -1$ and $D \cdot (-K_\surf) = 1$, which are given by
\be
(0,1,0,0) ~,~ (0,0,1,0)  ~,~ (0,0,0,1)  ~,~ (1,-1,-1,0)  ~,~ (1,-1,0,-1)  ~,~ (1,0,-1,-1) \,,
\ee
and there are eight $(-2)$-classes, satisfying $D^2 = -2$ and $D \cdot (-K_\surf) = 0$, which are given by
\be
\pm(1,-1,-1,-1)  ~,~ \pm(0,1,-1,0)  ~,~ \pm (0,1,0,-1)  ~,~ \pm(0,0,1,-1) \,.
\ee

Taking the root subsystem to be $\mathbf{A}_1$ corresponds to a single $(-2)$-class $(1,-1,-1,-1)$ being effective, so that there is a single $(-2)$-curve $C^{(-2)}$. The effective $(-1)$-classes $C^{(-1)}$, i.e.\ the classes of $(-1)$-curves, are then those classes $C$ satisfying $C \cdot C^{(-2)} \geq 0$, explicitly,
$ \{ (0,1,0,0) ~,~ (0,0,1,0) ~,~ (0,0,0,1) \} \,.$
Together these give four rigid curves, which are precisely the generators $\cM_A$ of the Mori cone. The dual nef cone generators, which can be chosen to satisfy $\cM_A \cdot \cN_B = \delta_{AB}$, then follow, giving
\be
\begin{array}{ l l l l }
\cM_0 = (1,-1,-1,-1) & \cM_1 = (0,1,0,0) & \cM_2 = (0,0,1,0) & \cM_3 = (0,0,0,1) \,, \vspace{.1cm} \\
\cN_0 =  (1,0,0,0) & \cN_1 =  (1,-1,0,0) & \cN_2 =  (1,0,-1,0) & \cN_3 = (1,0,0,-1) \,.
\end{array}
\ee
The intersection form between the Mori cone generators is
\be
(\cM_A \cdot \cM_B) = 
\begin{pmatrix}
-2 & 1 & 1 & 1\\
1 & -1 & 0 & 0\\
1 & 0 & -1 & 0\\
1 & 0 & 0 & -1
\end{pmatrix} \,.
\ee

The above data determines the structure of the Zariski chambers. From the intersection matrix $(\cM_A \cdot \cM_B)$, there are eleven subsets of the rigid curves $\{\cM_A\}$ which have a negative definite intersection form, which are
\be
\begin{gathered}
\{\cM_0\} ~,~ \{\cM_1\} ~,~ \{\cM_2\} ~,~ \{\cM_3\} ~, \\
 \{\cM_0\,,\,\cM_1\} ~,~ \{\cM_0\,,\,\cM_2\} ~,~ \{\cM_0\,,\,\cM_3\} ~,~ \{\cM_1\,,\,\cM_2\} ~,~ \{\cM_1\,,\,\cM_3\} ~,~ \{\cM_2\,,\,\cM_3\} ~, \\
  \{\cM_1\,,\,\cM_2\,,\,\cM_3\} \,.
\end{gathered}
\ee
Together with the nef cone, this gives eleven Zariski chambers. The zeroth cohomology is then given throughout the effective cone by the following formulae.

\begin{equation*}
\begin{tabular}{ L | L}
\Sigma &~ h^0\big(\surf,\mc{O}_{\surf}(D)\big) \\
\hline
\Sigma_\mathrm{nef} &~ \ind\big(\surf,\mc{O}_\surf(D)\big)\\
\Sigma_0 &~ \ind\Big(\surf, \mc{O}_\surf\big(D-\ceil{-\frac{1}{2}D\cdot\cM_0}\cM_0\big)\Big) \\
\Sigma_{i\in\{1,2,3\}} &~ \ind\Big(\surf, \mc{O}_\surf\big(D-(-D\cdot\cM_i)\cM_i\big)\Big) \\
\Sigma_{0,i\in\{1,2,3\}} &~ \ind\Big(\surf, \mc{O}_\surf\big(D-\ceil{-\frac{1}{2}D\cdot\cM_0}\cM_0-(-D\cdot\cM_i)\cM_i\big)\Big) \\
\Sigma_{i\in\{1,2\},j>i} &~ \ind\Big(\surf, \mc{O}_\surf\big(D-(-D\cdot\cM_i)\cM_i-(-D\cdot\cM_j)\cM_j\big)\Big) \\
\Sigma_{1,2,3} &~ \ind\Big(\surf, \mc{O}_\surf\big(D-\sum_{i=1}^3(-D\cdot\cM_i)\cM_i \big)\Big)  \\
\end{tabular}
\end{equation*}

Writing $D=(k_0,k_1,k_2,k_3)$ in the basis $\{l_0, l_1, l_2, l_3\}$, the above formulae become:
%\vspace{-12pt}
\begin{equation*}
\begin{tabular}{ L | L}
\Sigma &~ h^0\big(\surf,\mc{O}_{\surf}(D)\big) \\
\hline
\Sigma_\mathrm{nef} &~ 1+\frac{1}{2} \left(3 k_ 0+k_ 0^2-k_ 1-k_ 1^2-k_ 2-k_ 2^2-k_ 3-k_3^2 \right)\\
\Sigma_0 &~ 1+\frac{1}{2} \left(3 k_ 0+k_ 0^2-k_ 1-k_ 1^2-k_ 2-k_ 2^2-k_ 3-k_3^2 \right) - \floor{\frac{1}{2} \left(k_ 0+k_1+k_ 2+k_ 3\right)}^2+ \\
&~~~~~~~~~~~~ \floor{\frac{1}{2}\left(k_ 0+k_ 1+k_ 2+k_ 3\right)} \left(3+k_ 0+k_ 1+k_ 2+k_ 3\right)\\
\Sigma_{i\in\{1,2,3\}} &~ \frac{1}{2} \left(3 k_ 0+k_ 0^2-k_ 1-k_ 1^2-k_ 2-k_ 2^2-k_ 3-k_3^2 +k_i^2+k_i \right)\\
\Sigma_{0,i\in\{1,2,3\}} &~ 1+\frac{1}{2} \left(3 k_ 0+k_ 0^2-k_ 1-k_ 1^2-k_ 2-k_ 2^2-k_ 3-k_3^2 +k_i^2+k_i  \right) - \floor{\frac{1}{2} \left(k_ 0+k_1+k_ 2+k_ 3\right)}^2+ \\
&~~~~~~~~~~~~ \floor{\frac{1}{2}\left(k_ 0+k_ 1+k_ 2+k_ 3\right)} \left(3+k_ 0+k_ 1+k_ 2+k_ 3-k_i\right)\\
\Sigma_{i\in\{1,2\},j>i} &~ \frac{1}{2} \left(3 k_ 0+k_ 0^2-k_ 1-k_ 1^2-k_ 2-k_ 2^2-k_ 3-k_3^2 +k_i^2+k_i +k_j^2+k_j \right)\\
\Sigma_{1,2,3} &~ 1+ \frac{1}{2} \left(3 k_ 0+k_ 0^2\right)\\
\end{tabular}
\end{equation*}

\subsection{K3 surfaces}

A complex K3 surface is a compact connected complex surface $\surf$ with trivial canonical bundle and with $H^1(\surf,\mc{O}_\surf) = 0$. These are the Calabi-Yau surfaces, excluding, by the latter condition, a product of tori. Among the smooth complex K3 surfaces, we restrict to the projective case, since this is the case in which Zariski decomposition can be applied. Below we will often say `K3 surface' where we mean `smooth projective complex K3 surface'.

On a K3 surface the Picard group and the N\'{e}ron-Severi group coincide, so we will not need to make a distinction. Moreover, the only negative curves on a K3 surface are $(-2)$-curves. See Ref.~\cite{huybrechts2016lectures} for more properties of K3 surfaces. 

\smlhdg{Vanishing in the big cone}

\noindent The Kawamata-Viehweg vanishing theorem can be applied to K3 surfaces, bearing in mind that in the present case the canonical bundle is trivial, which leads to the following specialisation of Corollary~\eqref{crl:kaw_vieh}.

\begin{crl} \label{crl:kaw_vieh_k3} Let $\surf$ be a smooth projective complex K3 surface and let $P$ be a $\mbb{Q}$-divisor. If $P$ is nef and big, and the fractional part $\ceil{P}-P$ has simple normal crossing support, then
\begin{equation*}\label{crl:kaw_vieh_k3}
H^q \big(\surf, \mathcal O_\surf(\ceil{P}) \big) = 0~\forall q>0 \,.
\end{equation*}
\end{crl}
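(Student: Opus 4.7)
The plan is to deduce the statement immediately from Corollary~\ref{crl:kaw_vieh}, the surface version of Kawamata--Viehweg vanishing that was just established, by feeding in the defining property of a K3 surface. The only content to unpack is the triviality of the canonical class.

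More concretely, first I would recall that on a smooth projective complex K3 surface $\surf$ the canonical bundle is trivial, so that $K_\surf \lineq 0$ and hence also $K_\surf \numeq 0$. This gives $P - K_\surf \numeq P$ for any $\mbb{Q}$-divisor $P$. Since nefness depends only on intersections with curves and bigness of a nef divisor on a surface is detected by the self-intersection number $(P - K_\surf)^2 = P^2$, both conditions are invariants of the numerical class. The hypothesis that $P$ is nef and big therefore translates directly into the hypothesis that $P - K_\surf$ is nef and big, as required by Corollary~\ref{crl:kaw_vieh}. The simple normal crossing condition on the fractional part $\ceil{P} - P$ is identical in the two statements. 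Applying Corollary~\ref{crl:kaw_vieh} then yields
\be
H^q\big(\surf, \mc{O}_\surf(\ceil{P})\big) = 0 \quad \text{for all } q > 0 \,,
\ee
which is the claim.

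There is essentially no obstacle here: the corollary is a cosmetic specialisation of the general surface version, taking advantage of $K_\surf \sim 0$ to absorb the canonical twist into the hypothesis on $P$. If one preferred to appeal to the Kawamata--Viehweg theorem for $\mbb{Q}$-divisors directly rather than to Corollary~\ref{crl:kaw_vieh}, the same reasoning applies with the decomposition $\ceil{P} = P + (\ceil{P} - P)$, setting $D = P$ (nef and big) and $\Delta = \ceil{P} - P$ (simple normal crossing, fractional coefficients in $[0,1)$), and using $K_\surf + \ceil{P} \numeq \ceil{P}$.
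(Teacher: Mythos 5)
Your proposal is correct and matches the paper's own treatment: the paper presents this corollary as an immediate specialisation of Corollary~\ref{crl:kaw_vieh} obtained by noting that $K_\surf$ is trivial on a K3 surface, which is exactly the substitution you make. The additional observations about nefness and bigness being numerical invariants are sound but not strictly needed, since $K_\surf \lineq 0$ already gives $P - K_\surf \lineq P$ as divisor classes.
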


The positive part $P$ of an effective divisor is nef by definition, but in general it is not big. Hence on a K3 surface, Kawamata-Viehweg vanishing applies only to the subset of possible positive parts that are in the big cone. This is in contrast to the case of generalised del Pezzo surfaces treated in Section~\ref{sec:gen_dp} above, where the vanishing theorem applied throughout the nef cone.
On the other hand, since on a K3 surface the only negative curves are smooth $(-2)$-curves, Proposition~\ref{prp:norm_cros} implies that the fractional part $\ceil{P}-P$ always has simple normal crossing support. As such, the Kawamata-Viehweg vanishing theorem ensures the vanishing of the higher cohomologies of $\mc{O}_\surf(\ceil{P})$ for those positive parts $P$ that are in the big cone (the interior of the Mori cone). In particular, this excludes the cases where $\mc{O}_\surf(\ceil{P})=\mc{O}_\surf$. 

The question remains which effective $\mbb{Z}$-divisors have positive parts in the interior of the Mori cone and which have positive parts on the boundary. This is answered by the following lemma.

\begin{lma}
Let $\surf$ be a smooth projective surface on which linear equivalence and numerical equivalence coincide, and let $D$ be an effective $\mbb{Q}$-divisor. The positive part $P$ in the Zariski decomposition of $D$ is in the interior of the Mori cone if and only if $D$ is.
\label{lma:preimage}
\end{lma}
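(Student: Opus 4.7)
The plan is to reduce the statement to an equivalent condition on the self-intersection of $P$. Since $P$ is nef by construction, the characterisation that a nef $\mbb{R}$-divisor on an irreducible projective surface is big if and only if its self-intersection is positive (recalled at the start of Section~\ref{sec:gen_dp}) would show that $P$ lies in the interior of the Mori cone if and only if $P^2 > 0$. Accordingly, the lemma reduces to proving that $D$ is big if and only if $P^2 > 0$.

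For the easier ``$\Leftarrow$'' half, I would argue by standard cone geometry: if $P$ is big then $P$ lies in the interior of the pseudo-effective cone, and since $N$ is pseudo-effective (being effective), any small open neighbourhood of $P$ inside $\overline{\mathrm{NE}}(\surf)$ translates under the addition of $N$ to an open neighbourhood of $D$, still contained in $\overline{\mathrm{NE}}(\surf)$ by convexity. Hence $D$ is in the interior of the Mori cone.

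For the harder converse I would invoke Zariski's volume formula on surfaces: for a pseudo-effective $\mbb{Q}$-divisor $D$ with Zariski decomposition $D = P + N$, one has $\mathrm{vol}(D) = P^2$ (Corollary~2.3.22 in Ref.~\cite{lazarsfeld2004positivity1}). Together with the standard characterisation that a pseudo-effective divisor is big precisely when its volume is strictly positive, this chains to $D\ \text{big} \Rightarrow \mathrm{vol}(D) > 0 \Rightarrow P^2 > 0 \Rightarrow P\ \text{big}$. The volume identity itself can be recovered internally from the tools already assembled in the paper: Theorem~\ref{thm:zar_coh_rel} applied to multiples yields $h^0\big(\surf, \mc{O}_\surf(mD)\big) = h^0\big(\surf, \mc{O}_\surf(\floor{mP})\big)$, and asymptotic Riemann--Roch on the nef integral divisor $\floor{mP}$ produces the leading term $\tfrac{1}{2}m^2 P^2$, identifying $\mathrm{vol}(D)$ with $P^2$.

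The principal obstacle is the volume identity $\mathrm{vol}(D) = P^2$, which is classical but not proved in the excerpt; I would either cite it from a standard reference or sketch the derivation above. The hypothesis that linear and numerical equivalence coincide is needed implicitly to match the divisor-class language of the lemma with the numerical notions of nefness, bigness, and the Mori cone invoked throughout the argument.
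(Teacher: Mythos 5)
Your proposal is correct, but it reaches the harder direction by a different route than the paper. For ``$D$ big $\Rightarrow$ $P$ big'' the paper avoids the volume machinery entirely: since $P\cdot N=0$ by the defining properties of the decomposition, one has $P^2=P\cdot(D-N)=P\cdot D$, and because $D$ lies in the interior of the Mori cone (the dual of the nef cone) the pairing $P\cdot D$ is strictly positive for any nef $P$ not numerically trivial; the degenerate case $P\equiv 0$ is excluded because it would force $D=N$ onto the boundary. Your chain $D\ \text{big}\Rightarrow\mathrm{vol}(D)>0\Rightarrow P^2>0$ via Zariski's volume formula is valid and handles the $P\equiv 0$ case automatically, but it imports Corollary~2.3.22 of Ref.~\cite{lazarsfeld2004positivity1} where a one-line intersection computation suffices; note also that your internal sketch of that formula calls $\floor{mP}$ nef, which it need not be (the paper stresses this), though $\floor{mP}$ differs from $mP$ by a bounded divisor so the asymptotics survive. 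For the other direction the two arguments are essentially contrapositives of each other: you add the pseudo-effective $N$ to an interior point and use convexity, while the paper takes a nef class $\mc{N}_0$ with $D\cdot\mc{N}_0=0$ and squeezes $P\cdot\mc{N}_0=-N\cdot\mc{N}_0$ between the two sign constraints. Your approach buys generality and a cleaner treatment of edge cases at the cost of a heavier citation; the paper's buys self-containedness within the intersection-theoretic toolkit it has already set up.
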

\begin{proof}
Note both $P$ and $D$ are in the Mori cone, either in the interior or on the boundary. We prove the statement by showing that if $D$ is on the boundary then so is $P$, and that if $D$ is in the interior then so is $P$.

First suppose $D$ is on the boundary. Then there exists a nef cone generator $\mc{N}_0$ such that $D \cdot \mc{N}_0 = 0$. Since $P = D - N$, this means $P \cdot \mc{N}_0 = - N \cdot\mc{N}_0$. But since $P$ and $N$ are both in the Mori cone, $P \cdot \mc{N}_0 \geq 0$ and $N \cdot \mc{N}_0 \geq 0$. This is consistent only if $P \cdot \mc{N}_0 = 0$, so that $P$ is on the boundary of the Mori cone.

Next suppose $D$ is in the interior. We have $P^2 = P\cdot(D-N)=P\cdot D$, since $P\cdot N=0$ by definition. But since $D$ is in the interior of the Mori cone, it follows that $P\cdot D>0$, unless $P$ is numerically equivalent and hence linearly equivalent to $0$. However, $P$ is linearly equivalent to $0$ only when $D$ lies on the boundary of the Mori cone, which cannot happen since $D$ is big. Therefore $P^2>0$, which implies $P$ is big, i.e.\ in the interior of the Mori cone. 
\end{proof}

\noindent This immediately gives the following proposition.

\begin{prp}
Let $\surf$ be a smooth projective complex K3 surface, and $D$ an effective $\mbb{Z}$-divisor not on the boundary of the Mori cone with Zariski decomposition $D = P + N$. Then
\be
h^0 \big(\surf, \mc O_\surf(D) \big) = \ind\big(\surf,\mc{O}_\surf(\ceil{P})\big) \,.
\ee
Hence every Zariski chamber, excluding its intersection with the boundary of the Mori cone, is upgraded to a cohomology chamber. Explicitly, if $D$ lies in the Zariski chamber $\Sigma_{i_1,\ldots i_n}$, then
\be
h^0\big(S, \mc{O}_S(D)\big) = \ind\bigg(S, \mc{O}_S\Big(D - \sum_{k=1}^n\, \floor{-D\cdot \cM_{i_k,\{i_1,\ldots,i_n\}}^\vee} \,\cM_{i_k}\Big)\bigg) \,.
\ee
\end{prp}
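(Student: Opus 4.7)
\begin{prf}
The plan is to assemble three ingredients that have already been established earlier in the paper: the round-up cohomology preservation relation of Corollary~\ref{crl:zar_coh_rel_cl}, the Kawamata--Viehweg vanishing specialised to K3 (Corollary~\ref{crl:kaw_vieh_k3}), and the control over the positive part provided by Lemma~\ref{lma:preimage}. Applied in the right order, these three facts collapse onto the desired identity.

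First I would apply Corollary~\ref{crl:zar_coh_rel_cl} to obtain $H^0(\surf,\mc{O}_\surf(D)) \cong H^0(\surf,\mc{O}_\surf(\ceil{P}))$, which holds for any smooth projective surface and so in particular for $\surf$. The task then reduces to showing that the higher cohomologies of $\mc{O}_\surf(\ceil{P})$ vanish, so that $h^0(\surf,\mc{O}_\surf(\ceil{P})) = \ind(\surf,\mc{O}_\surf(\ceil{P}))$. For this I would verify the two hypotheses of Corollary~\ref{crl:kaw_vieh_k3} in turn. Nefness of $P$ is built into the Zariski decomposition. Bigness of $P$ is precisely where the hypothesis that $D$ does not lie on the boundary of the Mori cone enters: since linear and numerical equivalence coincide on projective K3 surfaces, Lemma~\ref{lma:preimage} applies and forces $P$ to also lie in the interior of the Mori cone, i.e.\ to be big. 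For the simple normal crossing condition on $\ceil{P} - P$, I would use that on a K3 surface the only negative curves are the smooth $(-2)$-curves; Proposition~\ref{prp:norm_cros} then yields normal crossing support, and smoothness of the components upgrades this to simple normal crossing. With both hypotheses verified, Corollary~\ref{crl:kaw_vieh_k3} gives $H^q(\surf, \mc{O}_\surf(\ceil{P})) = 0$ for $q>0$, and chaining with the first step produces the stated equality $h^0(\surf,\mc{O}_\surf(D)) = \ind(\surf,\mc{O}_\surf(\ceil{P}))$.

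For the explicit chamber-wise formula, the final step is to substitute the form of $\ceil{P}$ in the Zariski chamber $\Sigma_{i_1,\ldots,i_n}$, exactly as in the proof of Corollary~\ref{crl:clp_gen} and Proposition~\ref{prp:zar_to_coh}. Proposition~\ref{prp:n_gen} fixes the shape of $N$ throughout the chamber, and because $D$ is integral the round-up of $P = D - N$ modifies only the coefficients of the $\cM_{i_k}$, replacing each rational coefficient by its floor. The only step requiring real attention is the bigness of $P$, which is where the boundary-exclusion hypothesis is actually consumed; the remainder is a direct specialisation of the framework already in place for the general and del Pezzo cases.
\end{prf}
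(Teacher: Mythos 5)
Your proposal is correct and follows essentially the same route as the paper: cohomology preservation under the round-up map, bigness of $P$ via Lemma~\ref{lma:preimage} (which is where the boundary-exclusion hypothesis is used), simple normal crossing of $\ceil{P}-P$ from Proposition~\ref{prp:norm_cros}, Kawamata--Viehweg vanishing as in Corollary~\ref{crl:kaw_vieh_k3}, and finally the chamber-wise form of $\ceil{P}$ from Proposition~\ref{prp:zar_to_coh}. If anything, you are slightly more careful than the paper in explicitly invoking the round-up version (Corollary~\ref{crl:zar_coh_rel_cl}) of the cohomology preservation relation rather than the round-down statement of Theorem~\ref{thm:zar_coh_rel}.
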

\begin{proof}
Since $P$ is big from Lemma~\ref{lma:preimage}, combining the cohomology preserving property in Theorem~\eqref{thm:zar_coh_rel} with Corollary~\ref{crl:kaw_vieh_k3} implies that $h^0 \big(\surf, \mc O_\surf(D) \big) = \ind\big(\surf,\mc{O}_\surf(\ceil{P})\big)$. The form of $\ceil{P}$ is then as in Proposition~\ref{prp:zar_to_coh}.
\end{proof}

For integral divisors lying on the boundary of the Mori cone, the zeroth cohomology is in general not determined by the current framework of combining Zariski decomposition with vanishing theorems, and these require a separate discussion, which we will not attempt here. However, in the special case of integral divisors on the boundary whose positive part is trivial, there is the following simple result.

\begin{prp}
Let $\surf$ be a smooth projective surface and $D$ an effective $\mbb{Z}$-divisor on the boundary of the Mori cone. If the intersection form is negative definite on the support of $D$, then 
$ h^0\big(S, \mc{O}_\surf(D)\big) = h^0\big(S,\mc{O}_\surf\big) \,.$
\end{prp}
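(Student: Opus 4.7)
The plan is to observe that under the given hypotheses the trivial decomposition $D = 0 + D$ is the (unique) Zariski decomposition of $D$, and then to apply the zeroth cohomology preservation theorem immediately.

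To verify the Zariski decomposition, I would check the three defining conditions of Theorem~\ref{thm:Zariski} with $P := 0$ and $N := D$: the positive part $P = 0$ is trivially nef; the negative part $N = D$ is effective by assumption and has negative definite intersection matrix on its support by hypothesis; and the orthogonality condition $P \cdot C_i = 0$ for each $C_i \in \mathrm{Supp}(N)$ is automatic since $P = 0$. By the uniqueness clause of Zariski's theorem this is therefore \emph{the} Zariski decomposition of $D$, and in particular $\floor{P} = 0$, so that $\mc{O}_\surf(\floor{P}) = \mc{O}_\surf$.

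Applying Theorem~\ref{thm:zar_coh_rel} to this decomposition then yields
\[
h^0\big(\surf, \mc{O}_\surf(D)\big) = h^0\big(\surf, \mc{O}_\surf(\floor{P})\big) = h^0\big(\surf, \mc{O}_\surf\big),
\]
which is the claim. The argument is essentially formal, with the only real input being the uniqueness part of Zariski's theorem; there is no significant obstacle to overcome. Note that no features specific to K3 surfaces (or to any particular surface class) are used---the statement is a general fact about effective $\mbb{Z}$-divisors on smooth projective surfaces whose support carries a negative definite intersection form.
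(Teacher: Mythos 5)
Your proof is correct and follows essentially the same route as the paper's, which simply notes that in this situation the negative part of $D$ is $D$ itself, so the positive part is trivial, and then invokes the preservation of zeroth cohomology. Your version merely makes explicit the verification of the three Zariski conditions and the appeal to uniqueness, which the paper leaves implicit.
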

\begin{proof}
This is immediate, since in this case the negative part of $D$ is $D$ itself, so its positive part is trivial.
\end{proof}

\smlhdg{Example: quartic hypersurface in $\IP^3$ with Picard number $3$}

\noindent In Ref.~\cite{bauer2010K3} it was shown that there exist K3 surfaces $\surf$ constructed as smooth quartic surfaces in $\IP^3$ with Picard number $3$ and three $(-2)$-curves $L_1$, $L_2$ and $C$ (two lines and a conic). These are the generators of the Mori cone and we write $L_1 = {\cM}_1$, $L_2 = {\cM}_2$, and $C = {\cM}_3$. The intersection form is
\begin{equation*}
(\cM_i \cdot \cM_j) =
\left( \begin{array}{rrr}
\!\!-2 & 0 & 2\\
0 & -2 &2 \\
2 & 2& -2
\end{array}\right) \,,
\end{equation*}
and in this basis the dual nef cone is generated by ${\cN}_1=(0,1,1)$, ${\cN}_2=(1,0,1)$ and ${\cN}_3=(1,1,1)$. There are four subsets of $\{\cM_1,\cM_2,\cM_3\}$ on which the intersection form is negative definite, namely, $\{\cM_1\}$, $\{\cM_2\}$, $\{\cM_3\}$ and $\{\cM_1,\cM_2\}$, giving rise to four Zariski chambers. Together with the nef cone $\Sigma_{\rm nef}$, this makes a total of five Zariski chambers.

Hence, for any divisor $D$ not on the boundary of the Mori cone, the zeroth cohomology is given by the formulae in the table below. We also include as an additional line the case of divisors on faces of the Mori cone which project to the origin under the map $D\rightarrow P$:
\begin{equation}
\begin{tabular}{L | L }
\Sigma &~ h^0(S, \cO_\surf(D)) \\
\hline
\Sigma_\mathrm{nef}\cap{\rm Big}(S)%\setminus(\langle\cN_1\rangle_{\mathbb R_{\geq 0}} \cup \langle\cN_2\rangle_{\mathbb R_{\geq 0}}) 
&~ {\rm ind}\big(S, \cO_\surf(D)\big) \\[4pt]
\Sigma_1  \cap{\rm Big}(S) %\setminus \langle \cM_1 \rangle_{\mbb{R}_{\geq0}} 
&~ {\rm ind}\,\Big(S, \cO_\surf\big(D-\ceil{-\frac{1}{2}D\cdot\cM_1}\cM_1\big)\Big) \\[4pt]
\Sigma_2\cap{\rm Big}(S) %\setminus \langle \cM_2 \rangle_{\mbb{R}_{\geq0}} 
&~ {\rm ind}\Big(S, \cO_\surf\big(D-\ceil{-\frac{1}{2}D\cdot\cM_2}\cM_2\big)\Big) \\[4pt]
\Sigma_3\cap{\rm Big}(S)%\setminus( \langle\cN_1,\cM_3\rangle_{\mathbb R_{\geq 0}} \cup  \langle\cN_2,\cM_3\rangle_{\mathbb R_{\geq 0}}) 
&~ {\rm ind}\Big(S, \cO_\surf\big(D-\ceil{-\frac{1}{2}D\cdot\cM_3}\cM_3\big)\Big) \\[4pt]
\Sigma_{1,2}\cap{\rm Big}(S) &~ {\rm ind}\Big(S, \cO_\surf\big(D-\ceil{-\frac{1}{2}D\cdot\cM_1}\cM_1-\ceil{-\frac{1}{2}D\cdot\cM_2}\cM_2\big)\Big) \\[4pt]
\langle \cM_1, \cM_2 \rangle_{\mbb{R}_{\geq0}},  \langle \cM_3 \rangle_{\mbb{R}_{\geq0}}
&~ h^0\big(S, \mc{O}_\surf \big) = 1
% \langle \cM_1 \rangle_{\mbb{R}_{\geq0}} \cup \langle \cM_2 \rangle_{\mbb{R}_{\geq0}} \cup \langle \cM_3 \rangle_{\mbb{R}_{\geq0}} & ~h^0\big(S,\cO_S) = 1 \\ 
\end{tabular}
\label{eq:firstK3_CohFormula}
\end{equation}
The zeroth cohomology is undetermined on the remaining parts of the Mori cone boundary, which are $\langle \cM_2, \cM_3 \rangle_{\mbb{R}_{>0}}$ and $\langle \cM_1, \cM_3 \rangle_{\mbb{R}_{>0}}$. Note that in this example the Zariski chambers are simple Weyl chambers and as such, in the region $\Sigma_{1,2}$, the duals $\cM_{1,\{1,2\}}^\vee$ and $\cM_{2,\{1,2\}}^\vee$ are simply given by $\cM_{i,\{1,2\}}^\vee= \frac{1}{|\cM_i\cdot\cM_i|}\cM_i$.

\begin{figure}[H]
\begin{center}
\includegraphics[width=7cm]{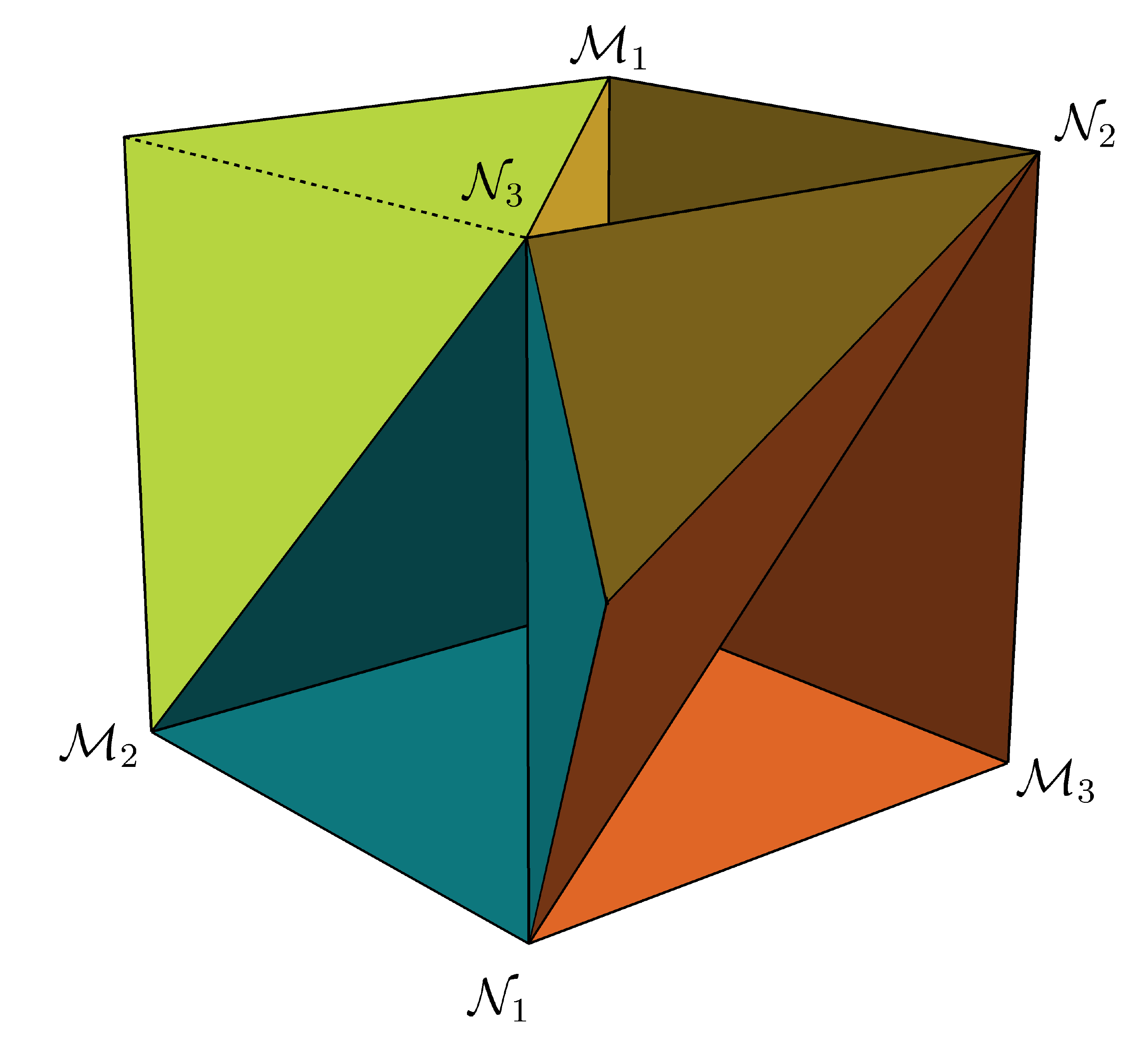}
\capt{5.8in}{fig:k3_3dex_regs}{The delineation of the effective cone into Zariski chambers on the K3 surface described by a quartic hypersurface in $\mbb{P}^3$ with Picard number 3. We have labelled the rays of the Mori cone generators $\moricn_i$ and the nef cone generators $\nefcn_j$.}
\end{center}
\end{figure}

It is sometimes useful to recast the above cohomology formulae in a basis. For numerical classes in the effective cone we write $D=k_1 L_1+k_2L_2+k_3C$, with $k_1,k_2,k_3\geq 0$. In this basis, the index formula in Equation~\eqref{eq:hirz_riem_roch} with $K_S=\cO_S$ and ${\rm ind}\big(S, \cO_\surf\big)=2$ becomes
\begin{equation}
{\rm ind}\big(S, \cO_\surf(D)\big) = 2+\tfrac{1}{2}D\cdot D = 2-(k_1-k_3)^2-(k_2-k_3)^2+k_3^2 \,.
\end{equation}
\noindent Using this expression, the zeroth cohomology formulae in Equation~\eqref{eq:firstK3_CohFormula} become
\begin{equation}
\begin{tabular}{L | L }
\Sigma &~ h^0(S, \cO_\surf(D)) \\
\hline
\Sigma_\mathrm{nef}\cap{\rm Big}(S)%\setminus(\langle\cN_1\rangle_{\mathbb R_{\geq 0}} \cup \langle\cN_2\rangle_{\mathbb R_{\geq 0}}) 
&~2-(k_1-k_3)^2-(k_2-k_3)^2+k_3^2\\[4pt]
\Sigma_1\cap{\rm Big}(S) %\setminus \langle \cM_1 \rangle_{\mbb{R}_{\geq0}} 
&~ 2 + (2 k_3 - k_2)k_2 \\[4pt]
\Sigma_2\cap{\rm Big}(S) %\setminus \langle \cM_2 \rangle_{\mbb{R}_{\geq0}} 
&~2 + (2 k_3 - k_1)k_1\\[4pt]
\Sigma_3\cap{\rm Big}(S)%\setminus( \langle\cN_1,\cM_3\rangle_{\mathbb R_{\geq 0}} \cup  \langle\cN_2,\cM_3\rangle_{\mathbb R_{\geq 0}}) 
&~ 2+2k_1k_2\\[4pt]
\Sigma_{1,2}\cap{\rm Big}(S) &~ 2+k_3^2 \\[4pt]
\langle \cM_1, \cM_2 \rangle_{\mbb{R}_{\geq0}},  \langle \cM_3 \rangle_{\mbb{R}_{\geq0}} &  ~1
% \langle \cM_1 \rangle_{\mbb{R}_{\geq0}} \cup \langle \cM_2 \rangle_{\mbb{R}_{\geq0}} \cup \langle \cM_3 \rangle_{\mbb{R}_{\geq0}} & ~h^0\big(S,\cO_S) = 1 \\ 
\end{tabular}
\end{equation}
\vspace{21pt}

\smlhdg{Example: Weierstrass model}

\noindent We now discuss a K3 surface $S$ realised as a Weierstrass fibration of an elliptic curve over $\mathbb P^1$. The example we take can be realised as a hypersurface in a three-dimensional toric variety, whose fan is given by a triangulation of the surface of the polytope shown in \fref{fig:K3polytope}.
\begin{figure}[h]
\begin{center}
\raisebox{.1in}{\includegraphics[width=4.8cm]{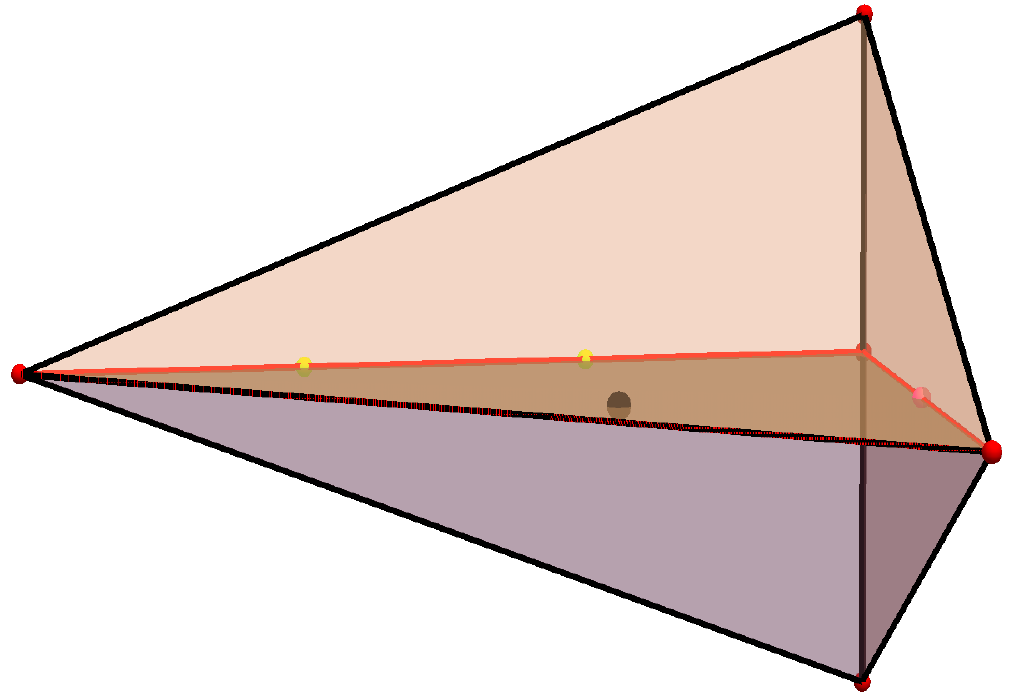}}
\capt{5.8in}{fig:K3polytope}{The polytope giving the ambient toric variety. The ambient space for the Weierstrass elliptic curve, $\mathbb{P}_{231}$, corresponds to the `slice', while the vertical direction corresponds to the $\mathbb P^1$ base. The vertices of the polytope are $\{(-1, 1, 1), (0, -2, 1), (1, 1, 1), (0, 1, -1)\}$.}
\end{center}
\end{figure}

The fibration  $\pi\colon \surf \to \mbb{P}^1$ has a single zero-section $\sigma$. The Mori cone is generated by $\cM_1 = \pi^*(H)$ and $\cM_2 = \sigma$, where $\pi^*(H)$ is the pullback of the hyperplane class (point) on the $\mbb{P}_1$ base. In this basis, the intersection form is
\begin{equation*}
(\cM_i \cdot \cM_j) = 
\left( \begin{array}{rr}
\!\!0 & 1 \\
1 & -2
\end{array}\right) \,.
\end{equation*}

In this basis the dual nef cone is generated by $\cN_1 = (1,0)$ and $\cN_2 = (2,1)$. There is only one subset of $\{\cM_1,\cM_2\}$ on which the intersection form is negative definite, which is $\{\cM_2\}$. As such, apart from the nef cone, there is only one Zariski chamber, $\Sigma_{2}$, obtained by extending the face $\langle\cN_2\rangle_{\mathbb R_{\geq 0}}$ along $\cM_2$. The other face of the nef cone, $\langle\cN_1\rangle_{\mathbb R_{\geq 0}}$ is on the boundary of the effective cone, and is not covered by our present cohomology discussion. We then obtain the following formula for the zeroth cohomology of effective line bundles.

\begin{equation*}
\begin{tabular}{L | L }
\Sigma &~ h^0(S, \cO_\surf(D)) \\
\hline
\Sigma_\mathrm{nef} \setminus \langle \cM_1 \rangle_{\mbb{R}_{\geq0}} &~{\rm ind}\big(S, \cO_\surf(D)\big) \\
\Sigma_2 \setminus \langle \cM_2 \rangle_{\mbb{R}_{\geq0}} &~{\rm ind}\Big(S, \cO_\surf\big(D-\ceil{-\frac{1}{2}D\cdot\cM_2}\cM_2\big)\Big) \\
 \langle \cM_2 \rangle_{\mbb{R}_{\geq0}} & ~h^0\big(S,\cO_S) = 1 \\ 
\end{tabular}
\end{equation*}

\noindent In fact in this present simple case of a Weierstrass K3 surface, it is straightforward to find formulae describing cohomology by using the Leray spectral sequence to lift those on the base $\mbb{P}^1$, analogously to the discussion in Section~\ref{sec:lift_coh} below for three-folds. In particular, we can then find the formula for the zeroth cohomology on the remaining region, $\langle \cM_1 \rangle_{\mbb{R}_{\geq0}}$, which we include here for completeness to complement the above table.
\begin{equation*}
\begin{tabular}{L | L }
\Sigma &~ h^0(S, \cO_\surf(D)) \\
\hline
\langle \cM_1 \rangle_{\mbb{R}_{\geq0}} &~ \mathrm{ind}\big(S,\cO_\surf (D+\cM_2)\big) \\ 
\end{tabular}
\hspace{1.3cm}
\end{equation*}

As in the previous example, we can recast these formulae in a basis. Writing $D=k_1\cM_1+k_2\cM_2$, the index is
${\rm ind}\big(S, \cO_\surf(D)\big) = 2+\tfrac{1}{2}D\cdot D = 2+(k_1-k_2)k_2 \,.$
\noindent The formulae above then become the following.
\begin{equation}
\begin{tabular}{L | L }
\Sigma &~ h^0(S, \cO_\surf(D)) \\
\hline
\Sigma_\mathrm{nef} \setminus \langle \cM_1 \rangle_{\mbb{R}_{\geq0}} &~ 2+(k_1-k_2)k_2~ \\
\Sigma_2 \setminus \langle \cM_2 \rangle_{\mbb{R}_{\geq0}} &~2+\floor{\tfrac{1}{2}k_1}\ceil{\tfrac{1}{2}k_1} \\
\langle \cM_1 \rangle_{\mbb{R}_{\geq0}} &~ 1+k_1 \\
\langle \cM_2 \rangle_{\mbb{R}_{\geq0}} & ~1
\end{tabular}
\end{equation}

\begin{figure}[t]
\begin{center}
	\includegraphics[scale=.5]{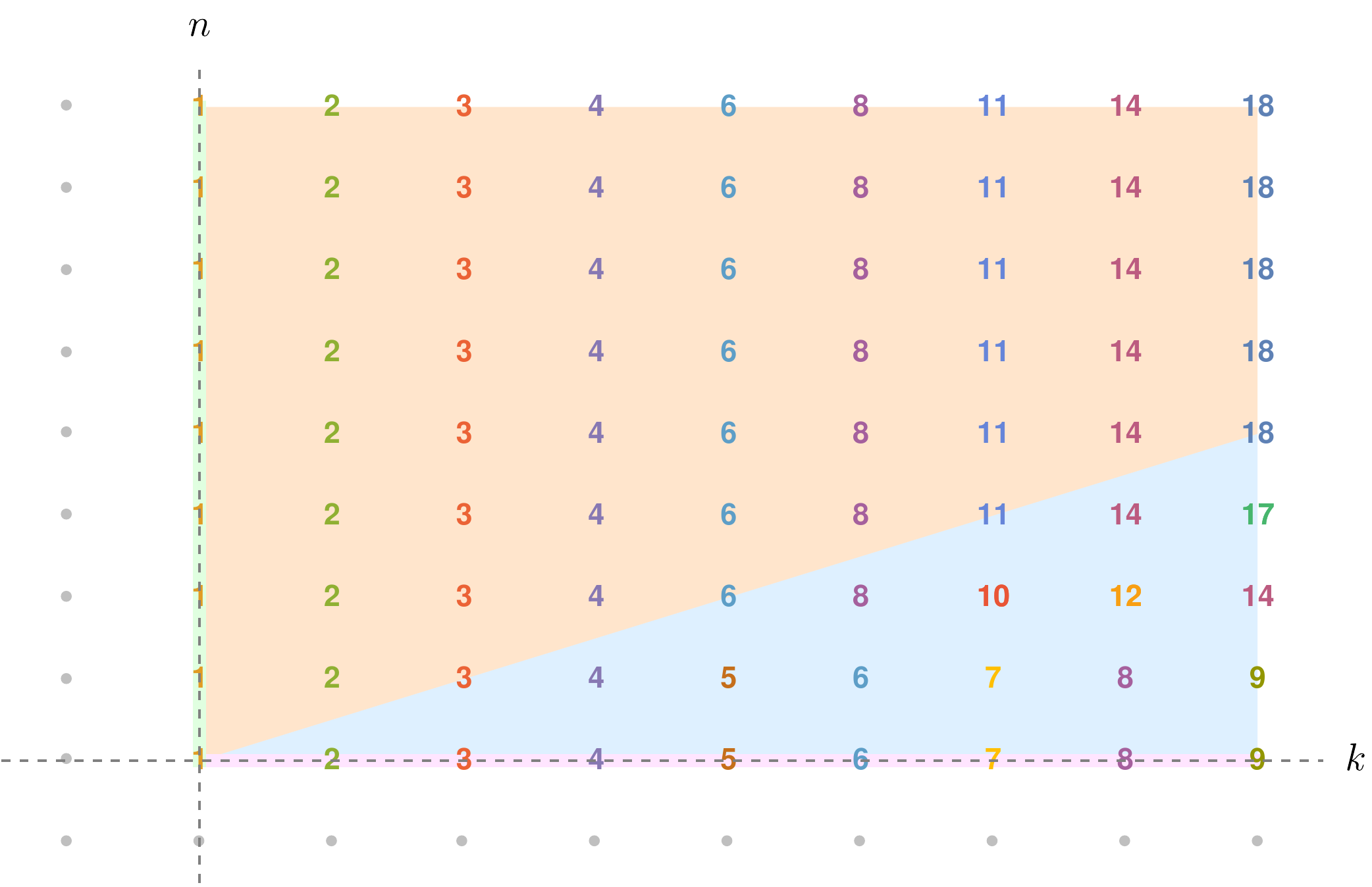} \\
	\raisebox{.1in}{\includegraphics[scale=.55]{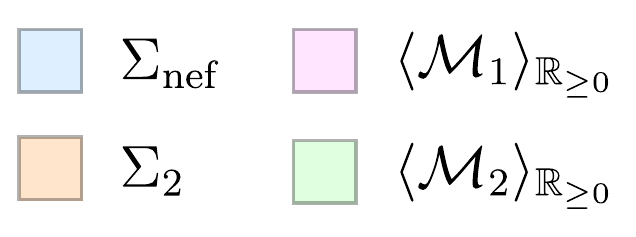}}\\
	\capt{5.8in}{fig:k3_2dex_regions}{Zeroth line bundle cohomologies $\mc{O}_{\surf}\big(n\fibsec+k\fibprj^*(H)\big)$ on the K3 surface $\surf$ given by a generic Weierstrass model with a single section. We show the cohomologies and the regions where different formulae apply.}
\end{center}
\end{figure}

It is not a surprising fact that the formula along the $\langle \cM_1 \rangle_{\mbb{R}_{\geq0}}$ is linear, rather than quadratic in $k_1$. This comes in agreement with the holomorphic Morse inequalities (see Remark 2.2.20 in Ref.~\cite{lazarsfeld2004positivity1}), according to which on a projective variety $X$ of dimension $n$, if $D$ is a nef divisor, then for every integer $q\in[0,n]$ one has $h^q(X, \cO_X(m D))\leq O(m^{n-1})$.

%%%
%%%%%
%% SECTION
%%%%%
%%%
\newpage
\section{Cohomology chambers on elliptic Calabi-Yau three-folds}

In the previous section we obtained formulae for line bundle cohomology on surfaces. One immediate application is to lift these formulae to higher-dimensional manifolds which use these surfaces as building blocks. 
An obvious construction of this kind is to consider fibrations over the surfaces studied above. In these constructions, the lift of cohomologies can be computed straightforwardly through the Leray spectral sequence. A class of fibrations which are both simple and have many applications in string theory are elliptically fibered Calabi-Yau three-folds, and we study these in the present section. We will consider the simplest setting, in which the generic fibration is smooth, since in this case the lift by the Leray spectral sequence is straightforward\footnote{When the elliptic fibration is singular, one can often resolve the singularities to give a smooth Calabi-Yau. These singularities are very important in the context of F-theory, where they determine gauge and matter fields, as well as couplings. However in this case it is more involved to lift cohomologies with the Leray spectral sequence.}.

\subsection{Elliptically fibered Calabi-Yau three-folds}
\label{sec:ell_tf}

\noindent In this section we provide a brief summary on the construction and properties of elliptically fibered Calabi-Yau three-folds. This is provided as a reminder and can be skipped by a reader familiar with this material.

\smlhdg{Weierstrass models}

\noindent An elliptically fibered manifold $\cy{}$ consists of a fibration $\fibprj : \cy{} \to \base{}$ of an elliptic curve over a base manifold~$\base{}$, with a section $\fibsec : \base{} \xhookrightarrow{} \cy{}$ that embeds the base into the total space. 

We focus on elliptic fibrations which can be constructed with a Weierstrass model. Note that every elliptic fibration is birationally equivalent to a Weierstrass one. In a Weierstrass model, the elliptic curve $E(b)$ over a point $b \in \base{}$ is described as a hypersurface in an ambient space. A useful choice for the ambient space, for reasons that will become clear in a moment, is the weighted complex projective space $\mbb{P}_{231}[x:y:z]$. The elliptic curve is defined by a degree six polynomial, which can by coordinate redefinitions be written in the form
\be
y^2 = x^3 + f(b) x z^4 + g(b)z^6 \,.
\label{eq:weierstrass}
\ee
Here $f(b)$ and $g(b)$ are parameters that define the elliptic curve $E(b)$.

A fibration of the elliptic curve is inherited if the space $\mbb{P}_{231}$ is fibered over the base. In order to fiber $\mbb{P}_{231}$ over the base, the homogeneous coordinates $x,y,z$ are taken as sections of powers of a line bundle $\lbb$ on the base: $x \in \Gamma(\lbb^2)$, $y \in \Gamma(\lbb^3)$, $z \in \Gamma(\mc{O}_{\base{2}})$. From the Weierstrass equation this means the parameters $f$ and $g$ must vary over the base as sections $f \in \Gamma(\lbb^4)$, $g \in \Gamma(\lbb^6)$.

Upon fibering, different choices of the ambient space for the elliptic curve are not equivalent, but rather determine the existence of sections of the fibration. In particular, the choice of $\mbb{P}_{231}$ ensures the existence of a single section, given by $z=0$, as one can verify. Note that $\mbb{P}_{231}$ is the Gorenstein Fano toric surface $F_{10}$ whose ray diagram is shown in \fref{fig:refl_poly}. Making a different choice among these gives a different structure of sections. See for example Table~2 of Ref.~\cite{Braun:2013nqa}, or the earlier Ref.~\cite{grassi2012weierstrass}.

One can check using adjunction that for the resulting manifold to be Calabi-Yau, the defining line bundle $\lbb$ must be chosen to be the anti-canonical bundle of the base, $\lbb = K_{\base{}}^{-1}$. This leads to the requirement that the anti-canonical bundle $K_{\base{}}^{-1}$ must have sections, which constrains the possible choices for the base spaces. This condition is true on for example toric surfaces and generalised del Pezzo surfaces. This is also true for K3 surfaces, but in this case the fibration is trivial, and the three-fold is a product $\mathrm{K3} \times \mathrm{T}^2$.

\smlhdg{Smoothness}

\noindent The elliptic curve $E(b)$ described in Weierstrass form in Equation~\eqref{eq:weierstrass} can be singular, depending on the values of $f(b)$ and $g(b)$. In particular, one can check that the elliptic curve $E(b)$ is singular if the discriminant $\Delta(b)$ vanishes, where
\be
\Delta(b) = 4f(b)^3 + 27g(b)^2 \,.
\ee
In the fibration of the elliptic curve, the discriminant varies over the base as a section $\Delta \in \Gamma\big(\lbb^{12}\big)$, and the elliptic curve is singular over the zero locus of this section. In the case of an elliptic fibration giving rise to a Calabi-Yau three-fold, $\Delta \in \Gamma\big(K_{\base{}}^{-12}\big)$.

Importantly however, the elliptic fibration is often smooth despite the singular elliptic fibers. In particular, only severe singularities of the fiber give rise to singularities of the fibration, with the severity of the fiber singularity essentially determined by the vanishing orders of $f$, $g$, and $\Delta$. The singularities in the fibration can occur over loci in the base of various codimension.

In the case of singularities over codimension one loci in the base, the types of singularity in the fibration are summarised in Table~\ref{tab:kod_class}, which is reproduced from Ref.~\cite{Morrison:1996pp}. Singularities over loci of higher codimension are more complicated. More details can be found for example in Ref.~\cite{weig2018tasi} and references contained therein.

\begin{center}
\begin{table}[h]
\begin{center}
\begin{tabular} { | c | c | c | c | c | }
\hline
ord($f$) & ord($g$) & ord($\Delta$) & fiber-type & singularity-type \\
\hline \hline
$\geq0$ & $\geq0$ & 0 & smooth & none \\
\hline
0 & 0 & n & $I_n$ & $A_{n-1}$ \\
\hline
$\geq1$ & 1 & 2 & $II$ & none \\
\hline
1 & $\geq2$ & 3 & $III$ & $A_1$ \\
\hline
$\geq2$ & 2 & 4 & $IV$ & $A_2$ \\
\hline
2 & $\geq3$ & n+6 & $I_n^*$ & $D_{n+4}$ \\
\hline
$\geq2$ & 3 & n+6 & $I_n^*$ & $D_{n+4}$ \\
\hline
$\geq3$ & 4 & 8 & $IV^*$ & $E_6$ \\
\hline
3 & $\geq5$ & 9 & $III^*$ & $E_7$ \\
\hline
$\geq4$ & 5 & 10 & $II^*$ & $E_8$ \\
\hline
\end{tabular}
\end{center}
\caption{\itshape\small Classification of singularities from fiber degeneracy over codimension one loci in the base.}
\label{tab:kod_class}
\end{table}
\end{center}

In the case of a Weierstrass model with $\lbb = -K_{\base{2}}$, chosen to give rise to a Calabi-Yau three-fold $\cy{3}$, a necessary condition for a smooth fibration to exist is that the base $\base{2}$ does not contain curves $C$ of self-intersection $C^2 < -2$. To see this, first note that the self-intersection of a curve $C$ on a surface $\base{2}$ is related to the genus $g_C$ of the curve by $(K_{\base{2}}+C) \cdot C = 2g_C-2$, so that
\be
-K_{\base{2}} \cdot C = C \cdot C - 2g_C+2 \,.
\ee
Since $g_C \geq 0$, the right-hand side is negative if $C \cdot C < -2$. But this negative intersection indicates the presence of $C$ in every element of the complete linear system $\cls{-K_{\base{2}}}$, by the argument in Section~\ref{sec:bas_defs} above. The amount of $C$ detected is even greater for multiples $\cls{-mK_{\base{2}}}$, and specifically any element takes the form
\be
\cls{-mK_{\base{2}}} \ni \left\lceil \frac{m(-K_{\base{2}} \cdot C)}{C \cdot C} \right\rceil C + \ldots
\ee
One can check that if $C \cdot C < -2$, then in the cases $m=4$ and $m=6$, the coefficient on the right is at least two. That is, any sections $f \in \Gamma(K_{\base{2}}^{-4})$ and $g \in \Gamma(K_{\base{2}}^{-6})$ must have vanishing orders of at least two over the curve $C$. Glancing at Table~\ref{tab:kod_class}, we see that in this case the three-fold is singular over this locus.

This condition constrains the set of base spaces which can give rise to smooth fibrations. This condition holds on the generalised del Pezzo surfaces. Among the toric surfaces, the only cases satisfying this condition are the 16 cases which are also generalised del Pezzo - on any other toric surfaces the Weierstrass model will not be generically smooth.

\smlhdg{Properties of the three-fold}

\noindent When the Weierstrass model is smooth, the properties of the elliptically fibered manifold $\cy{}$ follow straightforwardly from those of the base $\base{}$, as we now discuss.

Associated to the projection map $\fibprj: \cy{} \to \base{}$ is a pullback map $\fibprj^*$, which lifts bundles on the base $\base{}$ to bundles on the total space $\cy{}$, and lifts divisors to divisors. This gives an injection $\fibprj^*: \mathrm{div}(\base{}) \to \mathrm{div}(\cy{})$. In addition to the pullback divisors, there is also the section\footnote{We abuse notation and write $\fibsec$ both for the map that embeds the base into the total space and for the image.} $\fibsec$. In particular, a basis of the Picard group on $\cy{}$ is given by $\{\fibsec \,, \fibprj^*( \divb_i) \}$, where $\{\divb_1,\divb_2,\ldots\}$ is a basis of the Picard group on the base. We write $\divtf_0 \equiv \fibsec$ and $\divtf_i \equiv \fibprj^*(\divb_i)$. A line bundle $\lb$ on $X$ can hence be specified by an integer $n$ along the section, and a pullback of a line bundle $\lbb$ on the base,
\be
\lb = \mc{O}_{\base{2}}(n\fibsec) \otimes \fibprj^* \lbb  \,.
\label{eq:gen_lb}
\ee
The cone $\mathrm{Eff}(X)$ of effective divisors on $X$ is trivially related to the Mori cone $\cM(B)$ on the base,
\be
\mathrm{Eff}(X) = \langle \pi^*(\cM(B)) \,, \fibsec \rangle \,.
\ee
This is clear for example from the Leray spectral sequence below. 

As well as the pullback there is the inclusion $\fibsec: \base{} \xhookrightarrow{} \cy{}$. In the case of a three-fold $\cy{3}$ over a complex surface $\base{2}$, this sends divisors on the base to curves in the total space. These give rise to curve classes $\fibsec(D_i)$, which together with the fiber class $F$ give a basis of curves $\{\crvtf_0 \,, \crvtf_i \} \equiv \{ \fibcls \,, \fibsec(D_i) \}$. If the anti-canonical bundle of the base is nef, as in our cases, then like the effective cone the Mori cone $\cM(\cy{3})$ of the three-fold is trivially related to that of the base (see for example the argument in Ref.~\cite{Donagi_1999}),
\be
\cM(X) = \langle \fibsec(\cM(B)) \,, F \rangle \,.
\ee

The intesections between the above curves and divisors are
\be
\begin{tabular}{ C | C C }
				& \sigma								& \fibprj^*(D)			\\  \hline
\sigma			& \sigma(K_{\base{2}})					& \sigma(D)			\\
\fibprj^*(D')		& \sigma(D')							&	 (D \cdot D')F		\\
\end{tabular}
\hspace{1.5cm}
\begin{tabular}{ C | C C }
				& \fibcls								& \sigma(D)			\\ \hline
\sigma			& 1									& K_{\base{2}} \cdot D	\\
\fibprj^*(D')		& 0									& D \cdot D'			\\
\end{tabular}
\ee
The nef cone is the dual of the Mori cone with respect to these intersections. The triple intersection numbers are
\be
d_{000} = K_{\base{2}} \cdot K_{\base{2}} \,, \quad d_{00i} = K_{\base{2}} \cdot D_i \,, \quad d_{0ij} = D_i \cdot D_j \,, \quad d_{ijk} = 0 \,,
\ee
where $d_{0ij} = \divtf_0 \cdot \divtf_i \cdot \divtf_j = \fibsec \cdot \fibsec \cdot \fibprj^*(D_i)$, etc. The second Chern class and the Euler number are \cite{Friedman:1997yq}
\be
\begin{gathered}
c_2(\cy{3}) = \fibprj^*\big(c_2(\base{2})\big) + 11\fibprj^*\big(c_1(\base{2})^2\big) + 12\fibsec\big(c_1(\base{2})\big) \,, \\
\chi(\cy{3}) = -60 \, K_{\base{2}} \cdot K_{\base{2}} \,.
\end{gathered}
\ee

\subsection{Lifting base cohomologies}
\label{sec:lift_coh}

\smlhdgnogap{The Leray spectral sequence}

\noindent On a fibration $\pi: \cy{} \to \base{}$, the cohomology of a bundle $V$ on the total space $\cy{}$ can be computed in terms of cohomologies on the base $\base{}$. In particular, the relevant objects on the base are the cohomologies of the higher direct images $R^q\pi_*V$ of the bundle, where $q = 0 , 1 , \ldots$, which are in general sheaves. Note that $R^0\pi_*V \equiv \pi_*V$ is simply the pushforward. While we will not compute them explicitly, we note that $R^q\pi_*V$ is equal to the sheaf associated to the presheaf
\be
U \to H^q\big(\pi(U),V|_{\pi^{-1}(U)}\big)
\label{eq:hidirim_sheafcoh}
\ee
where $U$ are the open sets on the base, and we refer the reader to Chapter~III.8 of Ref.~\cite{Hartshorne1977} for details.

The relation between the cohomology of $V$ and the cohomologies of the higher direct images is provided by the Leray spectral sequence. The definition of the Leray spectral sequence begins with the definition of the second page $E_2^{p,q}$ as
\be
E_2^{p,q} = H^p ( \base{}, R^q\pi_*V ) \,.
\ee
Within the $r$th page $E_r^{p,q}$ where $r \geq 2$, there are maps $\dd_r : E_r^{p,q} \to E_r^{p+r,q-r+1}$ such that $\dd_r^2=0$. The $(r+1)$th page is defined by the cohomology of these maps on the $r$th page,
\be
E_{r+1}^{p,q} = \frac{\mathrm{ker}(\dd_r : E_r^{p,q} \to E_r^{p+r,q-r+1})}{\mathrm{im}(\dd_r : E_r^{p-r,q+r-1} \to E_r^{p,q})} \,.
\label{eq:ler_spec_rel}
\ee
When this iterative process stabilises, so that $E_r^{p,q} = E_n^{p,q}$~$\forall r \geq n$ for some $n$, we write $E_n^{p,q} \equiv E_\infty^{p,q}$, and the cohomologies of the bundle $V$ on the total space $X$ are given by
\be
H^i ( \cy{} , V ) = \bigoplus_{p+q=i} E_\infty^{p,q} \,.
\label{eq:gen_ler_res}
\ee

\smlhdg{Higher direct images}

\noindent We recall from Section~\ref{sec:ell_tf} above that on a generically smooth Weierstrass Calabi-Yau three-fold $\cy{3}$, a line bundle $\lb$ can be written uniquely as
\be
\lb = \mc{O}_{\cy{3}}(n\fibsec) \otimes \fibprj^* \lbb \,,
\ee
where $\lbb$ is a line bundle on the base $\base{2}$, and $\fibsec$ is the section of the fibration. For such a tensor product of bundles, the direct and higher direct images can be simplified by use of the projection formula (see for example Chapter~III.8 in Ref.~\cite{Hartshorne1977}),
\be
R^i \fibprj_* \lb =  R^i\fibprj_* \left( \mc{O}_{\cy{3}}(n\fibsec) \right) \otimes \lbb  \,.
\label{eq:prj_frm}
\ee
Hence, all higher direct images are determined by knowledge of those of $\mc{O}_{\cy{3}}(n\fibsec)$. Recall also that the higher direct images vanish trivially unless $i=0,1$, so we require only $\fibprj_*\mc{O}_{\cy{3}}(n\fibsec)$ and $R^1 \fibprj_*\mc{O}_{\cy{3}}(n\fibsec)$, for all $n$. These have been worked out elsewhere (see for example Appendix~C of Ref.~\cite{Donagi:2004ia}), and we collect the results below\footnote{That there are special cases for $\fibprj_*\mc{O}_{\cy{3}}(\sigma)$ and $R^1\fibprj_*\mc{O}_{\cy{3}}(-\sigma)$ can be traced to the fact there are no meromorphic functions on the torus with a single pole. We also note that the above formulae reflect `relative duality' (see for example Chapter~III.12 of Ref.~\cite{barth}), which in the Calabi-Yau case states that $R^1\fibprj_*V = (\fibprj_* V^*)^* \otimes K_{\base{2}}$ for any vector bundle $V$.}.
\be
\begin{aligned}
\fibprj_*\mc{O}_{\cy{3}}(n \fibsec) &= 
\begin{cases}
0 																					& \mathrm{for}~n<0 \,, \\
\mc{O}_{\base{2}} 																		& \mathrm{for}~n=0,1 \,, \\
\mc{O}_{\base{2}} \oplus K_{\base{2}}^2 \oplus K_{\base{2}}^3 \oplus \ldots \oplus K_{\base{2}}^n	& \mathrm{for}~n\geq2 \,, \\
\end{cases}
 \\
R^1\fibprj_*\mc{O}_{\cy{3}}(n \fibsec) &= 
\begin{cases}
0 																					& \mathrm{for}~n>0 \,, \\
K_{\base{2}} 																			& \mathrm{for}~n=-1,0 \,, \\
K_{\base{2}} \oplus K_{\base{2}}^{-1} \oplus K_{\base{2}}^{-2} \oplus \ldots \oplus K_{\base{2}}^{1+n}& \mathrm{for}~n\leq-2 \,. \\
\end{cases}
\end{aligned}
\label{eq:hidirim_list}
\ee
Here $0$ is the rank zero bundle, which gives $0$ upon taking the tensor product with any other bundle, and for which all cohomologies are trivial.

For the below, it will be important that $R^i \fibprj_* L$ is always a sum of line bundles. As the $i$th cohomology of a sum of line bundles is the sum of the $i$th cohomologies of the line bundles, it is straightforward to determine $H^i(\base{2},R^j \fibprj_* \lb )$ and hence the second page $E_2^{p,q}$ given knowledge of line bundle cohomology on the base.

\smlhdg{Degeneration of the Leray spectral sequence}

\noindent In the present context, the Leray spectral sequence simplifies dramatically.

\begin{prp}
On a generically smooth Weierstrass Calabi-Yau three-fold $\cy{3}$ over a smooth complex projective base $\base{2}$, the Leray spectral sequence for a line bundle $\lb$ on $\cy{3}$ degenerates at the second page, i.e.\ $E_2^{p,q} = E_\infty^{p,q}$, so that
\be
H^i ( \cy{3} , \lb ) = \bigoplus_{p+q=i} H^p ( \base{2}, R^q\pi_*\lb ) \,.
\ee
\end{prp}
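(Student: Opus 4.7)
The plan is to show that all differentials on pages $r\geq 2$ of the Leray spectral sequence vanish. A short dimension count reduces this to a single potentially nonzero differential, which I then handle by cases depending on the structure of the line bundle $L$.

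First I would establish that the only differential which can be nonzero is $d_2\colon H^0(\base{2},R^1\fibprj_* L)\to H^2(\base{2},\fibprj_* L)$. Since the fibers of $\fibprj$ are one-dimensional, $R^q\fibprj_* L=0$ for $q\geq 2$, so $E_2^{p,q}$ is supported in the two rows $q\in\{0,1\}$. For $r\geq 3$ the codomain of $d_r\colon E_r^{p,q}\to E_r^{p+r,q-r+1}$ has $q$-index $q-r+1<0$ and so vanishes. Among the remaining $d_2\colon E_2^{p,1}\to E_2^{p+2,0}$, the codomain vanishes for $p\geq 1$ since $\dim\base{2}=2$, leaving only the stated map. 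Writing $L=\mc{O}_{\cy{3}}(n\fibsec)\otimes\fibprj^*\lbb$ as in Equation~\eqref{eq:gen_lb}, the explicit higher direct images in Equation~\eqref{eq:hidirim_list} combined with the projection formula Equation~\eqref{eq:prj_frm} then show that $R^1\fibprj_* L=0$ whenever $n\geq 1$ and $\fibprj_* L=0$ whenever $n\leq -1$; in either case the source or target of $d_2$ vanishes trivially.

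The only substantive case is $n=0$, where $L=\fibprj^*\lbb$, $\fibprj_* L=\lbb$ and $R^1\fibprj_* L=\lbb\otimes K_{\base{2}}$. Here I would exploit the section. The relation $\fibprj\circ\fibsec=\mrm{id}_{\base{2}}$ gives $\fibsec^*\fibprj^*=\mrm{id}$ on $H^*(\base{2},\lbb)$, so the pullback $\fibprj^*\colon H^2(\base{2},\lbb)\to H^2(\cy{3},\fibprj^*\lbb)$ is injective. But for $L=\fibprj^*\lbb$ the Leray edge map
\be
H^2(\base{2},\lbb)=E_2^{2,0}\twoheadrightarrow E_\infty^{2,0}\hookrightarrow H^2(\cy{3},L)
\ee
can be identified with $\fibprj^*$ via the naturality of $R\Gamma(\cy{3},-)=R\Gamma(\base{2},R\fibprj_*(-))$ applied to the unit $\lbb\to\fibprj_*\fibprj^*\lbb$, which is an isomorphism because $\fibprj_*\mc{O}_{\cy{3}}=\mc{O}_{\base{2}}$. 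Injectivity of the composition forces the surjection $E_2^{2,0}\twoheadrightarrow E_\infty^{2,0}$ to be an isomorphism, so the image of $d_2\colon H^0(\base{2},\lbb\otimes K_{\base{2}})\to H^2(\base{2},\lbb)$ must vanish.

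The hardest step is really the $n=0$ case: identifying the Leray edge map with the pullback $\fibprj^*$ and then leveraging the section to obtain injectivity. Everything else amounts to dimension bounds from the fiber dimension and $\dim\base{2}=2$, plus bookkeeping with the explicit higher direct images already tabulated in the paper; once $d_2=0$ is in hand, the expansion in Equation~\eqref{eq:3fold_coh} is immediate.
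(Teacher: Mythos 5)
Your proof is correct, and the reduction to the single differential $d_2\colon E_2^{0,1}\to E_2^{2,0}$ together with the disposal of the cases $n\neq 0$ via the explicit higher direct images coincides with what the paper does. Where you genuinely diverge is the remaining case $n=0$, $\lb=\fibprj^*\lbb$. The paper never touches the edge maps: it computes $E_2^{0,1}=H^0(\base{2},\lbb\otimes K_{\base{2}})$ and, via Serre duality, $E_2^{2,0}=H^0(\base{2},\dual{\lbb}\otimes K_{\base{2}})$, and then argues that these two groups can never be simultaneously nonzero, since otherwise both $\lbb\otimes K_{\base{2}}$ and $\dual{\lbb}\otimes K_{\base{2}}$ would be effective, forcing $2K_{\base{2}}$ to be effective, which contradicts strong convexity of the Mori cone given that $-K_{\base{2}}$ is effective (and nontrivial). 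So for the paper the differential vanishes because its source or target is always zero. You instead use the section: $\fibsec^*\circ\fibprj^*=\mathrm{id}$ makes $\fibprj^*$ injective on $H^2(\base{2},\lbb)$, and identifying the Leray edge map $E_2^{2,0}\twoheadrightarrow E_\infty^{2,0}\hookrightarrow H^2(\cy{3},\lb)$ with $\fibprj^*$ (legitimate, since the unit $\lbb\to\fibprj_*\fibprj^*\lbb$ is an isomorphism) forces $\mathrm{im}(d_2)=0$ even when both $E_2^{0,1}$ and $E_2^{2,0}$ are nonzero. Your route is the classical degeneration argument for a fibration with a section; it is more robust (it does not need $-K_{\base{2}}$ to be nontrivial, so it also covers the degenerate product case where the paper's separation argument would fail), at the cost of invoking the functorial identification of the edge map. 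The paper's route stays entirely within the effective-cone language used throughout the rest of the text and requires no spectral-sequence functoriality beyond the definition of the pages.
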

\begin{proof}
First note from the dimension of the base the trivial vanishings $H^p\big(\base{2},R^q\fibprj_*\lb\big) = 0$ unless $p=0,1,2$. Additionally, glancing at the expression in Equation~\eqref{eq:hidirim_sheafcoh}, from the dimension of the fiber $R^q\fibprj_*\lb = 0$ unless $q=0,1$. Inserting these into Equation~\eqref{eq:ler_spec_rel}, one finds that every element in the third page is trivially related to the second page, $E_3^{p,q} = E_2^{p,q}$, except for two cases: $E_3^{0,1}$ and $E_3^{2,0}$. The relations for these remaining terms are
\be
E_{3}^{0,1} = \mathrm{ker}(\dd_2 : E_2^{0,1} \to E_2^{2,0}) ~~\mathrm{and}~~ E_{3}^{2,0} = \frac{E_2^{2,0}}{\mathrm{im}(\dd_2 : E_2^{0,1} \to E_2^{2,0})} \,.
\ee
It is easy to check that if either one of $E_2^{0,1}$ and $E_2^{2,0}$ is zero, then these final relations too are trivial, i.e.\ $E_3^{0,1}=E_2^{0,1}$ and $E_3^{2,0}=E_2^{2,0}$.

The line bundle $\lb$ can be written $\lb = \mc{O}_{\cy{3}}(n\fibsec) \otimes \fibprj^* \lbb$ where $\lbb$ is a line bundle on the base. Consider the cases $n \neq 0$. From Equations~\eqref{eq:prj_frm} and \eqref{eq:hidirim_list}, we see immediately that either $\pi_*\lb = 0$ or $R^1\pi_*\lb = 0$. So in these cases either $E_2^{0,1}=0$ or $E_2^{2,0}=0$. The only remaining case to check is $n=0$. Here we have
\be
\begin{aligned}
E_2^{0,1}|_{n=0} &~=~ H^0\big(\base{2},R^1\fibprj_*(\mc{O}_{\base{2}}\otimes\mc{L})\big)					&&=~~ H^0(\base{2},\lbb \hspace{5pt} \otimes K_{\base{2}}) \,, \\
E_2^{2,0}|_{n=0} &~=~ H^2\big(\base{2},\fibprj_*(\mc{O}_{\base{2}}\otimes\mc{L})\big)	= H^2(\base{2},\lbb ) 	&&=~~ H^0(\base{2},\dual{\lbb} \otimes K_{\base{2}} ) \,,
\end{aligned}
\ee
where in the final equality we used Serre duality. The first term is non-zero only when $\lbb \otimes K_{\base{2}}$ is in the effective cone, while the second is non-zero only when $\lbb^* \otimes K_{\base{2}}$ is in the effective cone. For both to be non-zero, the Mori cone $\cM$ must overlap its reflection $-\cM$ through the origin after being shifted by $-2K_{\base{2}}$. But $-K_{\base{2}}$ is effective for the Weierstrass model to exist. If the Mori cone is strongly convex, an effective shift will separate $\cM$ from $-\cM$ without overlap. This holds on a projective surface, which proves the result.
\end{proof}

\noindent More explicitly, the cohomology of $L$ on the three-fold $X_3$ is given by the following relations
\be
\begin{aligned}
H^0(\cy{3},\lb) &= H^0(\base{2},\fibprj_* \lb) \,, \\
H^1(\cy{3},\lb) &= H^1(\base{2},\fibprj_* \lb) \oplus H^0(\base{2},R^1\fibprj_* \lb) \,, \\
H^2(\cy{3},\lb) &= H^2(\base{2},\fibprj_* \lb) \oplus H^1(\base{2},R^1\fibprj_* \lb) \,, \\
H^3(\cy{3},\lb) &= H^2(\base{2},R^1\fibprj_* \lb) \,.
\end{aligned}
\label{eq:final_coh_rels}
\ee

\smlhdg{Explicit expressions}

\noindent Given the relations~\eqref{eq:final_coh_rels}, it only remains to plug in the expressions from Equation~\eqref{eq:hidirim_list} for the higher direct images. Note it is sufficient to determine only the zeroth and first cohomologies, since the second and third are trivially related to these by Serre duality. Again writing $\lb = \mc{O}_{\cy{3}}(n\fibsec) \otimes \fibprj^* \lbb$, one has
\be
\begin{gathered}
h^0(\cy{3},\lb) = h^0(\base{2},\fibprj_*L) ~~ \mathrm{and} ~~ 
h^1(\cy{3},\lb) = h^0(\base{2},R^1\fibprj_*L) + h^1(\base{2},\fibprj_*L) \,,
\\ \mathrm{where}
\\
\begin{tabular}{L C L L}
h^0(\base{2},\fibprj_*L) &=& 
\begin{cases}
0 																		\\
h^0(\base{2},\lbb)															\\
h^0(\base{2},\lbb) + \sum_{i=2}^n h^0(\base{2},\lbb \otimes K_{\base{2}}^i)			\\
\end{cases}
	&
	\begin{tabular}{L}
	\mathrm{for}~n<0 \\
	\mathrm{for}~n=0,1 \\
	\mathrm{for}~n\geq2
	\end{tabular}
\vspace{10pt} \\
h^0(\base{2},R^1\fibprj_*L) &=&
\begin{cases}
0 																							\\
h^0(\base{2},\lbb\otimes K_{\base{2}})															\\
h^0(\base{2},\lbb\otimes K_{\base{2}}) + \sum_{i=2}^{-n} h^0(\base{2},\lbb \otimes K_{\base{2}}^{1-i})	\\
\end{cases}
	&
	\begin{tabular}{L}
	\mathrm{for}~n>0 \\
	\mathrm{for}~n=-1,0 \\
	\mathrm{for}~n\leq-2
	\end{tabular}
\vspace{10pt} \\
h^1(\base{2},\fibprj_*L) &=& 
\begin{cases}
0 																		\\
h^1(\base{2},\lbb)															\\
h^1(\base{2},\lbb) + \sum_{i=2}^n h^1(\base{2},\lbb \otimes K_{\base{2}}^i)			\\
\end{cases}
	&
	\begin{tabular}{L}
	\mathrm{for}~n<0 \\
	\mathrm{for}~n=0,1 \\
	\mathrm{for}~n\geq2
	\end{tabular}
\end{tabular}
\end{gathered}
\label{eq:terms_in_dir_sums}
\ee

When line bundle cohomologies on the base are described by simple formulae along the lines of Section~\ref{sec:bun_coh}, the above give expressions for line bundle cohomology on the three-fold. These expressions provide a simple and fast method to determine any line bundle cohomology. We note that, while these expressions can be guessed from raw data, for example by equation fitting or machine learning methods, as pursued in Refs.~\cite{Klaewer:2018sfl,Brodie:2019dfx,Larfors:2019sie}, the present approach has the advantages of giving a proof, and the knowledge there are no missed edge-cases.

\subsection{Example}
\label{sec:thfld_ex}

On a given base $\base{2}$, we expect the expressions in Equation~\eqref{eq:terms_in_dir_sums} to simplify substantially, to give compact regions and formulae describing all line bundle cohomology on the three-fold, analogous to the surface case. Here we consider the simplest example, of a Weierstrass three-fold with base a projective plane, $\base{2} = \mbb{P}^2$.

\smlhdg{Properties of the three-fold}

\noindent We recall the discussion in Section~\ref{sec:ell_tf}, of the properties of the Weierstrass three-fold for a given base. In the present case, the base space $\base{2} = \mbb{P}^2$ has Picard number 1, with the Picard lattice spanned by the hyperplane class $H$. We can write a line bundle $\lbb$ on $\base{2}$ as $\lbb = \mc{O}_{\base{2}}(k H)$, where $k \in \mbb{Z}$. The anti-canonical bundle is $-K_{\base{2}} = \mc{O}_{\base{2}}(3H)$. The intersections are determined by $H \cdot H = 1$, and the Mori cone is generated by $H$.

The properties of the elliptic three-fold are then as follows. The divisor and curve bases are respectively $\{\divtf_0 \,, \divtf_1 \}  \equiv \{\fibsec \,, \fibprj^*( H ) \}$ and $\{\crvtf_0 \,, \crvtf_1 \} \equiv \{ \fibcls \,, \fibsec(H) \}$. The intersections on the three-fold are
\be
\begin{tabular}{ C | C C }
				& \sigma							& \fibprj^*(H)				\\  \hline
\sigma			& \sigma(-3H)						& \sigma(H)				\\
\fibprj^*(H)		& \sigma(H)						& F						\\
\end{tabular}
\hspace{1.5cm}
\begin{tabular}{ C | C C }
				& \fibcls							& \sigma(H)				\\ \hline
\sigma			& 1								& -3						\\
\fibprj^*(H)		& 0								& 1						\\
\end{tabular}
\ee
The triple intersection numbers follow in the above divisor basis as, up to permutations,
\be
d_{000} = 9 \,, ~~ d_{001} = -3 \,, ~~ d_{011} = 1 \,, ~~ d_{111} = 0 \,.
\ee
The generators of the Mori cone $\moricn(\cy{3})$ and of the dual nef cone $\nefcn(\cy{3})$, are
\be
\{\moricn_i(\cy{3})\} = \{ \fibcls \,, \fibsec(H) \} \,, \quad \{\nefcn_j(\cy{3})\} = \{ \fibsec + 3 \fibprj^*(H) \,, \fibprj^*(H) \} \,.
\ee
Finally, the second Chern class is $c_2 = 102\fibcls + 36\,\fibsec(H)$. With the triple intersection numbers and the second Chern class, we can also write the index of a line bundle. On a Calabi-Yau three-fold, the index of a line bundle is given by
\be
\ind\big(\cy{3}, \mc{O}_{\cy{3}}(\divtf) \big) = \tfrac{1}{6}\left( \divtf^3 + \tfrac{1}{2} \divtf \cdot c_2 \right) \,,
\ee
so in our case we have
\be
\ind\big(\cy{3}, \mc{O}_{\cy{3}}(\cy{3},n\fibsec + k\fibprj^*(H)) \big) =  \tfrac{3}{2}n^3 - \tfrac{3}{2} n^2k + \tfrac{1}{2} nk^2 - \tfrac{1}{2} n +  3 k \,.
\label{eq:ex_3f_ind}
\ee

In the following discussion, we write $\lb = \mc{O}_{\cy{3}}\big(n\fibsec + k\fibprj^*(H)\big)$ for a general line bundle $\lb$ on the three-fold~$\cy{3}$. Additionally, we will use the shorthand notation $\mc{O}_{\cy{3}}(n,k)$ for such a line bundle.

\smlhdg{Zeroth cohomology}

\noindent On the projective plane $\mbb{P}^2$, the zeroth cohomology of a line bundle is given by the Bott formula as the binomial coefficient $h^0\big(\mc{O}_{\mbb{P}^2}(kH)\big) = \binom{k+2}{2}$. Hence from Equation~\eqref{eq:terms_in_dir_sums}
\be
h^0(\cy{3},\lb) = 
\begin{cases}
\theta(k)\binom{k+2}{2}											& \mathrm{for}~n=0,1 \,, \\
\theta(k)\binom{k+2}{2} + \sum_{i=2}^n \theta(k-3i)\binom{k-3i+2}{2}		& \mathrm{for}~n\geq2 \,, \\
\end{cases}
\ee
and $h^0(\cy{3},\lb)=0$ otherwise. Here $\theta(\,\cdot\,)$ is a step function, equal to one for $x \geq 0$ and zero otherwise.

We plot the numerical values in \fref{fig:p2tf_h0_fig}. From the expressions or the figure it is clear the effective cone is simply the positive quadrant, $n,k \geq 0$. This cone further naturally splits into two regions (at least for $n \geq 2$): when $k \geq 3n$, all of the step functions in the sum are satisfied, while when $k < 3n$, the sum is cut off by the step functions. In the latter region, notably there will no dependence on $n$.

\begin{figure}[h]
\begin{center}
	\includegraphics[scale=.55]{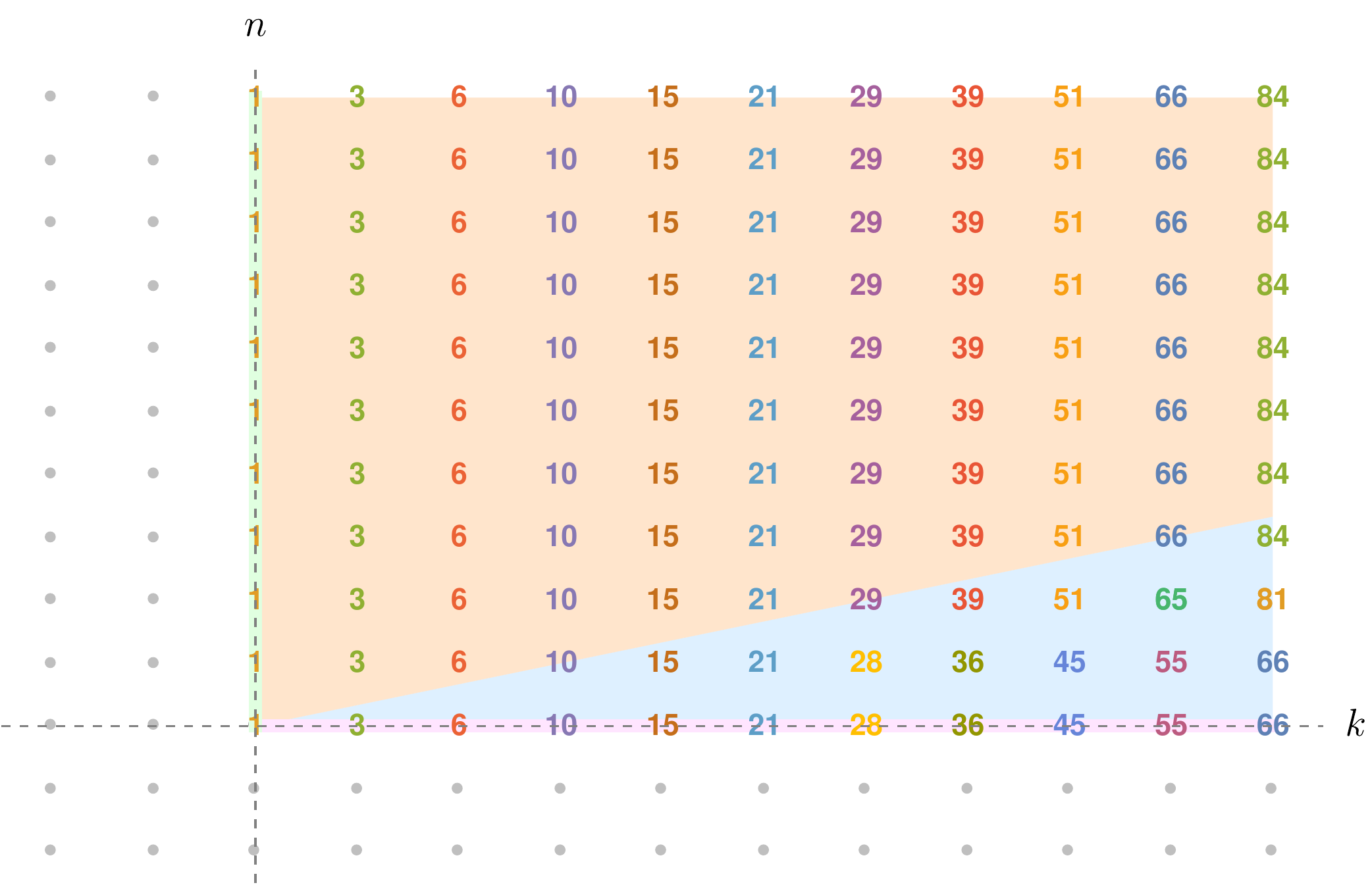} \\
	\includegraphics[scale=.55]{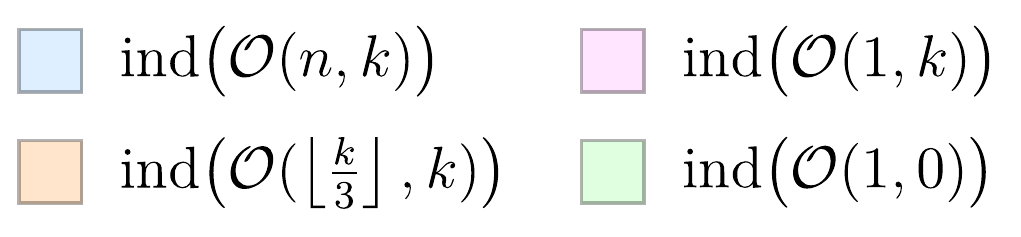}\\[4pt]
	\capt{5.8in}{fig:p2tf_h0_fig}{Zeroth line bundle cohomologies $h^0 \big( \mc{O}(n,k) \big)$ on the Weierstrass Calabi-Yau with base $\mbb{P}^2$. We show the cohomologies and the regions where different formulae apply. Here $\mc{O}(n,k)$ is shorthand for $\mc{O}_{\cy{3}}\big(n\fibsec+k\fibprj^*(H)\big)$.}
	\end{center}
\end{figure}

In the $k \geq 3n$ region of the effective cone, the above sum that appears for $n \geq 2$ has unit coefficients, giving
\be
\sum_{i=2}^n \binom{k-3i+2}{2} = \frac{1}{2}(n-1)\left(3(n+1)(n-k)+k^2+2\right) \,.
\label{eq:binom_sum_h0}
\ee
Note that this expression happens to be zero when $n=1$ (but not when $n=0$), so we may include it in the $n=1$ case as well. The full expression for $h^0(\cy{3},\lb)$ in the $n > 0 \,, k \geq 3n$ region is then
\be
\binom{k+2}{2} + \sum_{i=2}^n \binom{k-3i+2}{2} = \frac{1}{2}(3n^3-3n^2k+nk^2-n+6k)  \,.
\label{eq:comp_to_ind}
\ee
There remains only the $n=0$ boundary. Here the cohomologies are given by $\binom{k+2}{2} = \frac{1}{2}(1+k)(2+k)$.

When $k < 3n$, some step functions are not satisfied. In particular, the upper limit on the sum becomes $\floor{\frac{k}{3}}$. Since the only appearance of $n$ in the sum is in the limit, we can simply make the replacement $n \to \floor{\frac{k}{3}}$ in Equation~\eqref{eq:binom_sum_h0}. From the lower limit, we should only include the sum when $\floor{\frac{k}{3}} \geq 2$. However one can check the sum is anyway zero when $n=1$, so the replacement is still correct when $\floor{\frac{k}{3}} = 1$, i.e.\ $k=3,4,5$. Additionally, one can check that $\sum_{i=2}^{\floor{\frac{k}{3}}} \binom{k-3i+2}{2} = 0$ when $k=1,2$ (but not when $k=0$). Hence the replacement $n \to \floor{\frac{k}{3}}$ in Equation~\eqref{eq:binom_sum_h0} gives the correct expression for $h^0$ in the region $k < 3n$, $n > 0$, but not on the boundary $k=0$, $n\geq0$. On the boundary $h^0(\cy{3},\lb)$ is given simply by the non-sum term $\binom{k+2}{2}|_{k=0} = 1$.

Glancing at the properties of the three-fold above, we see that the region $n\geq0$, $k \geq 3n$ precisely corresponds to the nef cone. Hence by Kodaira vanishing, we know that in the interior of this cone the zeroth cohomology must be given by the index. We see that this is indeed borne out by our results, from comparison of Equation~\eqref{eq:comp_to_ind} with the expression for the index in Equation~\eqref{eq:ex_3f_ind}.

We can then compactly write all the expressions in terms of the index as follows. We depict the regions where the formulae apply in \fref{fig:p2tf_h0_fig}.
\be
h^0(\cy{3},\lb) = 
\begin{cases}
\ind\big(\cy{3}, \mc{O}_{\cy{3}}(n,k)\big)											& \mathrm{for}~n > 0 \,, k\geq3n \,, \\
\ind\big(\cy{3}, \mc{O}_{\cy{3}}(\floor{\frac{k}{3}},k)\big)								& \mathrm{for}~k > 0 \,, k<3n \,, \\
\ind\big(\cy{3}, \mc{O}_{\cy{3}}(1,k)\big)											& \mathrm{for}~n=0 \,, k > 0 \,, \\	
\ind\big(\cy{3}, \mc{O}_{\cy{3}}(1,0)\big)											& \mathrm{for}~n \geq 0 \,, k = 0 \,, \\					
\end{cases}
\label{eq:p2tf_h0_final_form}
\ee
and $h^0(\cy{3},\lb)=0$ otherwise.

\smlhdg{First cohomology}

\noindent On the projective plane, all first cohomologies vanish. Hence the general expression for first cohomologies on the three-fold in Equation~\eqref{eq:final_coh_rels} simplifies, leaving only one term in the direct sum,
\be
H^1(\cy{3},\lb) = H^0(\base{2},R^1\fibprj_* \lb) \,.
\ee
Using the Bott formula for cohomologies on the projective plane, the expression in Equation~\eqref{eq:terms_in_dir_sums} becomes
\be
h^1(\cy{3},\lb) = 
\begin{cases}
\theta(k-3)\binom{k-3+2}{2}													& \mathrm{for}~n=-1,0 \,, \\
\theta(k-3)\binom{k-3+2}{2} + \sum_{i=2}^{-n} \theta(k+3(i-1))\binom{k+3(i-1)+2}{2}	& \mathrm{for}~n\leq-2 \,, \\
\end{cases}
\ee
and $h^0(\cy{3},\lb)=0$ otherwise.

We plot the numerical values in \fref{fig:p2tf_h1_fig}. The step functions in the sum are $\theta(k+3) \,, \theta(k+6) \,, \ldots \,,$ $ \theta(k+3(-n-1))$, which gives in addition to the line $n=0$ the other boundary of the non-zero region as $k= 3(n+1)$. This cone then naturally splits into two regions (at least for $n \leq -2$): when $k \geq -3$, all step functions in the sum are satisfied, while when $k < -3$ the lower limit is raised by the step functions.

\begin{figure}[ht]
\begin{center}
	\includegraphics[scale=.5]{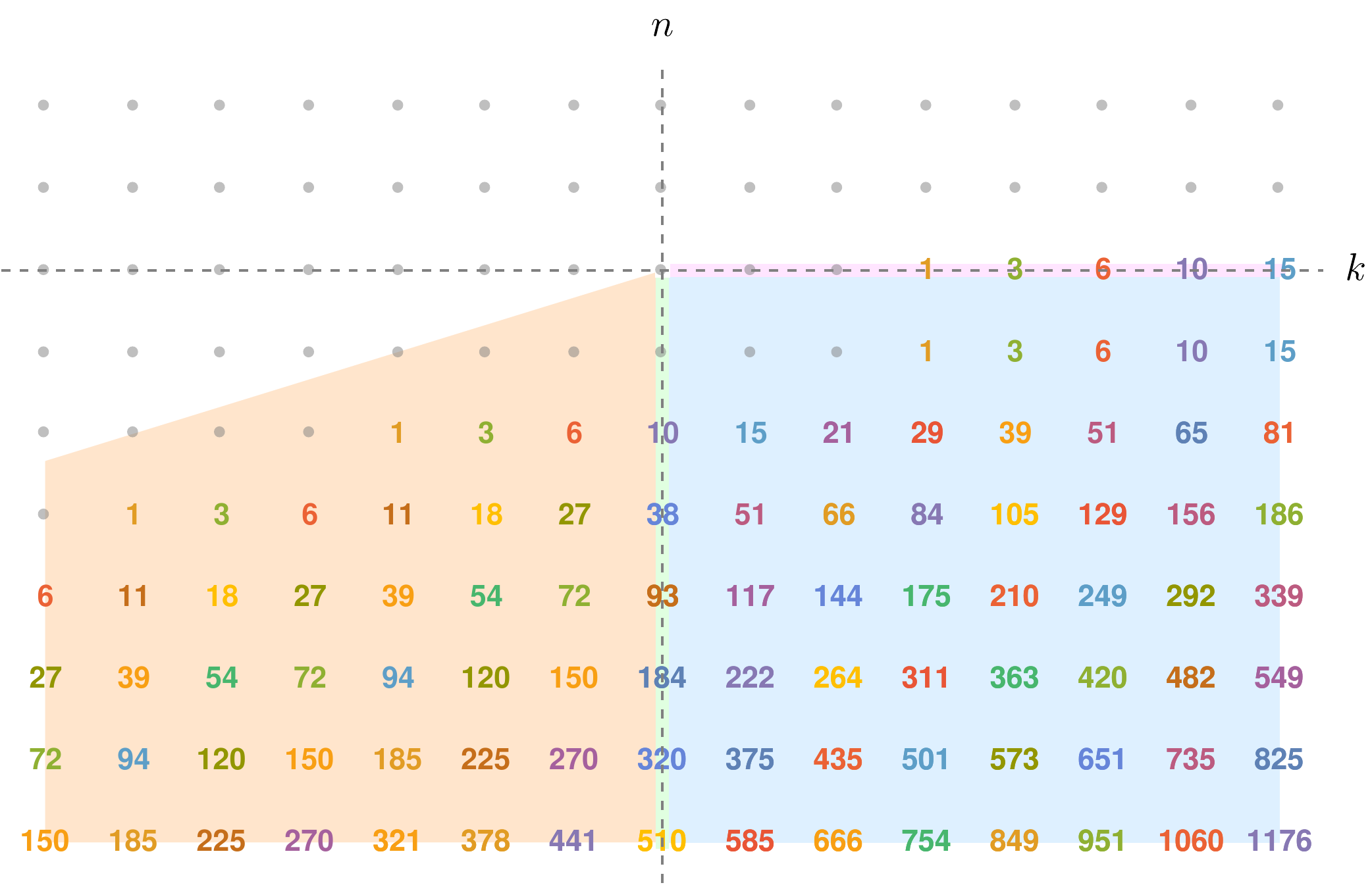} \\[4pt]
	\includegraphics[scale=.55]{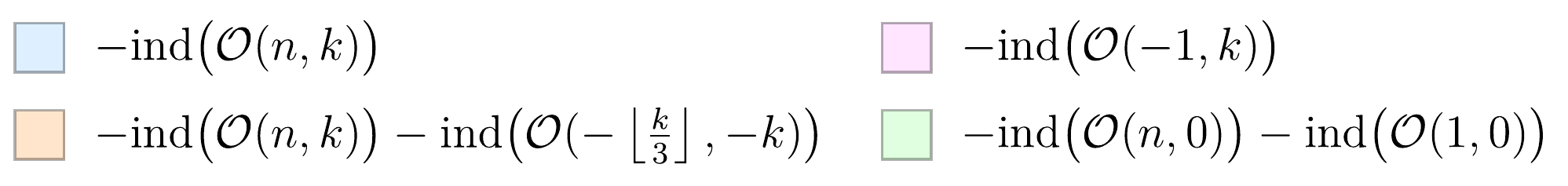}\\[4pt]
	\capt{5.8in}{fig:p2tf_h1_fig}{First line bundle cohomologies $h^1 \big( \mc{O}(n,k) \big)$ on the Weierstrass Calabi-Yau with base $\mbb{P}^2$. We show the cohomologies and the regions where different formulae apply. Here $\mc{O}(n,k)$ is shorthand for $\mc{O}_{\cy{3}}\big(n\fibsec+k\fibprj^*(H)\big)$. 
	}
	\end{center}
\end{figure}

In the $k\geq-3$ region, the sum that appears for $n \leq -2$ has unit coefficients, giving
\be
\sum_{i=2}^{-n} \binom{k+3(i-1)+2}{2} = -\frac{1}{2}(n+1)\left(3(n-1)(n-k)+k^2+2\right)\,.
\label{eq:binom_sum_h1}
\ee
Note this is related to the sum in Equation~\eqref{eq:binom_sum_h0} by a sign and exchanging $(n-1)$ and $(n+1)$. Hence this expression is zero when $n=-1$ (but not when $n=0$), so can be included in the $n=-1$ case. Further, though the non-sum term $\binom{k-3+2}{2}$ is present only for $k \geq 3$, since it is zero when $k=1$ or $k=2$ (but not when $k=-3,-2,-1,0$), the natural region is $k \geq 1$. Hence in the $n < 0$, $k > 0$ region $h^1(\cy{3},\lb)$ is given by
\be
\binom{k-3+2}{2} + \sum_{i=2}^{-n} \binom{k+3(i-1)+2}{2} = \frac{1}{2}(-3n^3+3n^2k-nk^2+n-6k) \,.
\ee
This is simply the negative of the index. It is easy to see this must be so, since by Serre duality and knowledge of the non-zero regions for $h^0$ and $h^1$, only $h^1$ is non-zero in this (open) bottom-right quadrant. Finally, on the boundary region $n=0$, $k > 0$, there is simply the non-sum term, $\binom{k-3+2}{2}$, which we note is $-\ind\big(\mc{O}(-1,k)\big)$. This excludes the origin, on which one has $h^1\big(\mc{O}(0,0)\big)=0$.

In the $k < -3$ region, the lower limit in the sum is $\ceil{\frac{-k+3}{3}}=-\floor{\frac{k}{3}}+1$, and this is also correct for $k=-3,-2,-1$ (but not $k=0$). This gives as the expression for $h^1(\cy{3},\lb)$ in the $k < 0$, $k \geq 3(n+1)$ region,
\be
\sum_{i=-\floor{\frac{k}{3}}+1}^{-n} \binom{k+3(i-1)+2}{2} = -\frac{1}{2}\left(n-\floor{\tfrac{k}{3}}\right)\left(-1+(k-3n)^2+3(k-3n)(n-\floor{\tfrac{k}{3}})+3(n-\floor{\tfrac{k}{3}})^2 \right)\,.
\ee
However from the first factor this expression is anyway zero when $k \in \{3(n+1)-1 \,, 3(n+1)-2 \,, 3(n+1)-3 \}$. So a more natural boundary is $k \geq 3n$. We also note that this expression is manifestly invariant under the shift $(k,n) \to (k + 3, n +1)$, which is obvious in the data in \fref{fig:p2tf_h1_fig}. Since $h^0(\cy{3},\lb)=h^2(\cy{3},\lb)=0$ in this region, the expresssion can also be written as
\be
\begin{aligned}
h^1(\cy{3},\lb) &= -\ind\big(\cy{3},\mc{O}_{\cy{3}}(n,k)\big)-h^3(\cy{3},\lb) \\
& = -\ind(\cy{3}, \mc{O}_{\cy{3}}(n,k)\big)-\ind\big(\cy{3},\mc{O}_{\cy{3}}(-\floor{\tfrac{k}{3}},-k)\big) \,,
\end{aligned}
\ee
where in the second equality we used Serre duality and the expression for $h^0(\cy{3},\lb)$ from above. Finally, on the boundary $n \leq -1$, $k=0$, the expression is just given by Equation~\eqref{eq:binom_sum_h1} in the case $k=0$. We can also note that by Serre duality and the expression for $h^0(\cy{3},\lb)$ above this is $$h^1(\cy{3},\lb)= -\ind\big(\cy{3}, \mc{O}_{\cy{3}}(n,0)\big)-\ind\big(\cy{3},\mc{O}_{\cy{3}}(1,0)\big)$$.

Below we collect the above results using the index expressions, and depict the regions in \fref{fig:p2tf_h1_fig}.
\be
h^1(\cy{3},\lb) = 
\begin{cases}
-\ind\big(\cy{3}, \mc{O}_{\cy{3}}(n,k)\big)											& \mathrm{for}~n < 0 \,, k > 0 \,, \\	
-\ind\big(\cy{3}, \mc{O}_{\cy{3}}(n,k)\big)-\ind\big(\cy{3},\mc{O}_{\cy{3}}(-\floor{\tfrac{k}{3}},-k)\big)		& \mathrm{for}~k \geq 3n \,, k < 0 \,, \\	
-\ind\big(\cy{3}, \mc{O}_{\cy{3}}(-1,k)\big)											& \mathrm{for}~n = 0 \,, k > 0 \,, \\	
-\ind\big(\cy{3}, \mc{O}_{\cy{3}}(n,0)\big)-\ind\big(\cy{3},\mc{O_{\cy{3}}}(1,0)\big)						& \mathrm{for}~n < 0 \,, k = 0 \,, \\	
~0																& \mathrm{for}~n = 0 \,, k = 0 \,, \\	
\end{cases}
\label{eq:p2tf_h1_final_form}
\ee
and $h^1(\cy{3},\lb)=0$ otherwise.

\smlhdg{Insight into general structure}

\noindent Above we have used the Leray spectral sequence to determine the regions and formulae describing all line bundle cohomology on a simple Weierstrass Calabi-Yau three-fold. However, it is clear that this route does not reflect the natural structure of the three-fold cohomology: the naive regions from the lift had to be redrawn, and several special cases had to be taken into account. Hence, while this method is viable for any given smooth Weierstrass model, it does not appear to provide insight into the structure of line bundle cohomology on Calabi-Yau three-folds more generally.

\section{Conclusions}
The central message of the present work is that there are large classes of complex projective manifolds for which the computation of line bundle cohomology is not only tractable, but can be captured throughout the Picard group by closed-form expressions. In the case of surfaces, the idea we have emphasised above is that the combination of Zariski decomposition with vanishing theorems is sufficient on many examples to determine these cohomology formulae.

The role of Zariski decomposition (followed by a round-down operation) is to map effective line bundles to effective line bundles with the same zeroth cohomology. The effective cone naturally splits into chambers inside which the Zariski decomposition retains the same form. For certain classes of surfaces, which include toric surfaces, del Pezzo and generalised del Pezzo surfaces as well as K3 surfaces, there exist theorems strong enough to guarantee the vanishing of all higher cohomologies of any line bundle resulting from the application of a single Zariski decomposition and round-down. In such cases, the Zariski chambers are also cohomology chambers. This gives a complete description of the zeroth cohomology function (except, in the case of K3 surfaces, on the boundary of the Mori cone), and combined with the index formula and Serre duality, this also gives closed-form expressions for the higher cohomologies. The input data required for these formulae is the same as what is needed for building up Zariski chambers: the intersection form, the Mori cone, and the nef cone. 

In the second part of the paper our analysis of simple elliptically fibered Calabi-Yau three-folds leads to closed-form expressions for all line bundle cohomology groups in terms of the cohomology on the two-dimensional base, providing an efficient means to compute line bundle cohomology. On the other hand, the cohomology regions on the three-fold are related to the underlying cohomology chambers on the surface in a complicated way, as illustrated by our analysis of an elliptically fibered three-fold over $\mathbb P^2$. 

\smlhdg{Outlook}

\noindent We would like to end by outlining a few directions of future research. For Physics, the case of Calabi-Yau three-folds has the greatest potential for applications. Our analysis of elliptically fibered Calabi-Yau three-folds has immediate applications in the context of heterotic/F-theory duality, along the lines of research initiated in Refs.~\cite{Braun:2017feb, Braun:2018ovc, Anderson:2019axt}. However, more needs to be understood about the general structure of cohomology formulae on three-folds. If detailed enough, such an understanding will unlock new techniques for working out the topology of the extra dimensions and their gauge degrees of freedom by starting from the raw data of experimental Physics (bottom-up model building). This is certainly an important question for String Phenomenology. 

There are also interesting questions related to the two-dimensional case, the exploration of cohomology formulae for surfaces being far from complete. It would be important to understand what happens in the case of surfaces where the available vanishing theorems are not strong enough to guarantee that Zariski chambers are also zeroth cohomology chambers. The reasonable expectation is that nothing can be said in general, however, it would be interesting to look at other classes of surfaces. 

There is also the question about the higher cohomology groups, in particular $H^1(S,\mathcal L)$. While a formula for $h^1(S,\mathcal L)$ can be derived in terms of the index and the zeroth cohomology, it would be interesting to derive independently the origin of the emerging formulae.

\section*{Acknowledgements}
We are grateful to James Halverson, Andre Lukas, Fabian R\"{u}hle, and Yinan Wang for helpful discussions. 
%
%\newpage

\newpage

\appendix

%%%
%%%%%
%% SECTION
%%%%%
%%%
\section{Toric surfaces and elliptic three-folds}
\label{app:ell_info}
Toric manifolds are used in various examples throughout the text. Moreover, they appear in many string theory applications. For these reasons, in this appendix we provide a brief summary of toric technology, as well as an outline of how to construct a Weierstrass model as a hypersurface in a toric variety,  complementing the discussion in Section~\ref{sec:ell_tf}. For the sake of brevity we will not be completely precise. See for example Ref.~\cite{cox2011toric} for a general introduction to the subject of toric varieties.

\subsection{Toric surfaces}
\label{app:tor_surf}

\smlhdgnogap{Toric varieties}

\noindent Toric varieties are generalisations of projective spaces. Recall that the projective plane, for example, can be specified by beginning with $\mbb{C}^3[x_1,x_2,x_3]$ and imposing an identification $(x_1,x_2,x_3) \sim (\lambda x_1,\lambda x_2,\lambda x_3) ~ \forall \lambda \in \mbb{C}^*$, and demanding that the three coordinates cannot simultaneously vanish, i.e.\ removing the point $(x_1,x_2,x_3)=(0,0,0)$. A toric variety can be specified by beginning with $\mbb{C}^n$ and imposing more complicated identifications under scalings and conditions on allowed simultaneous vanishing coordinates.

The information defining a toric variety can be encoded in a fan, which is a collection of cones in $\mbb{R}^n$ whose generators correspond to integral points. The complex dimension of the corresponding toric variety equals the real dimension $n$ of the fan. The one-dimensional cones are rays. For the example of $\mbb{P}^2$, the fan is shown in the first diagram in \fref{fig:refl_poly}. There are three one-dimensional cones and three two-dimensional cones.

The fan determines the scaling equivalences and allowed coordinate vanishings as follows. Firstly, the generators $\vec{v}_i$ of the rays are in one-to-one correspondence with the complex coordinates $x_i$. A scaling identification between these coordinates corresponds to a linear dependency relation between the generators. Specifically, if
\be
c_i \vec{v}_i = \vec{0} \,,
\ee
then there is an identification $(x_1 , \ldots , x_m) \sim (\lambda^{c_1} x_1 , \ldots, \lambda^{c_m} x_m) ~ \forall \lambda \in \mbb{C}^*$. In the example of $\mbb{P}^2$, one sees from \fref{fig:refl_poly} that the three generators simply add to zero. One typically writes the weights $c_i$ in a table, called the `weight system'.

The allowed simultaneous vanishings of the coordinates are determined by the full structure of the fan: if the generators corresponding to a number of coordinates share a common cone, then the coordinates are allowed to simultaneously vanish, otherwise they are not. In the example of $\mbb{P}^2$, any two generators share a common cone, but all three do not. One typically writes the allowed vanishings in the `Stanley-Reisner ideal'.

\smlhdg{Divisors: intersections and linear equivalences}

\noindent To each complex coordinate $x_i$ one can associate a toric divisor $D_i$ corresponding to the vanishing locus of $x_i$. Hence there are natural divisors corresponding to the rays in the fan. One particularly simple aspect of toric varieties is that the anti-canonical divisor class is given by the sum of the toric divisor classes, $-K = \sum_i \divcls{D_i}$.

The toric divisors are not all linearly inequivalent. In particular, for any vector $\vec{u}$ in the space of the fan, taking the dot product with all generators $\vec{v}_i$ in the toric diagram gives a linear equivalence relation,
\be
0 \sim (\vec{v}_i \cdot \vec{u}) D_i \,.
\ee
It is clear that there are as many independent such relations as the dimension of the fan. Hence on a toric variety, the dimension $h^{1,1}$ of the Picard group equals the number of rays minus the dimension of the fan.

The question of whether distinct toric divisors mutually intersect is the same as the question of whether the corresponding coordinates can simultaneously vanish. This is hence determined by whether the corresponding rays share a cone, as discussed above. Self-intersections are then determined by rewriting a toric divisor in terms of others using a linear equivalence relation, and using mutual intersections.

\medskip

The fan of a toric surface is two-dimensional. In this case, the mutual intersections can be found straightforwardly: if two toric rays are neighbours, then the corresponding toric divisors have a mutual intersection of $1$, otherwise it is~$0$. Self-intersections are determined as usual via the linear equivalence relations.

\smlhdg{Surface case: Mori cone and irreducible negative self-intersection divisors}

\noindent We are particularly interested in the case of toric surfaces, on which divisors and curves can be identified. In the main text we are interested in the Mori cone of these surfaces, and the irreducible, negative self-intersection divisors.

On a compact toric surface, it is straightforward to determine the Mori cone. The Mori cone generators are a subset of the toric divisor classes (see for example Theorem~6.3.20 of Ref.~\cite{cox2011toric}). Since a Mori cone generator has non-positive self-intersection, these must be a subset of those with non-positive self-intersection. This makes it straightforward to determine the generators from the ray diagram: take the toric divisor classes with non-positive self-intersection, and throw away any that can be expressed as a non-negative sum of the others. The nef cone follows as the dual of the Mori cone.

In Zariski decomposition, one is interested in the set of irreducible, negative self-intersection divisors. On a compact toric surface, these are precisely the toric divisor classes with negative self-intersection. To see this, note again that every irreducible, negative self-intersection divisor class is a generator of the Mori cone, and that the Mori cone is generated by toric divisor classes.

\subsection{Toric description of Weierstrass models}

\noindent Since in the simple Weierstrass models utilised in the main text the elliptic fiber is described by a polynomial in the coordinates of the weighted projective plane, which is toric, when the base is also toric it is easy to give a description of the three-fold as a hypersurface in a toric ambient space. In this section we provide a summary of this construction.

The ambient space is a fiber bundle of the weighted projective plane $\mbb{P}_{(2,3,1)}$ over the base space, with the fibering determined by the choice of coordinates for the toric rays. The fan of $\mbb{P}_{(2,3,1)}$ has rays with coordinates $(-1,0)$, $(0,-1)$, $(2,3)$, and in the fan of the four-fold we lift these simply as
\be
\vec{x} = (0,0,-1,0) \,, \quad \vec{y} = (0,0,0,-1) \,, \quad \vec{z} = (0,0,2,3) \,.
\label{eq:fiber_rays}
\ee
In a trivial product $\base{2} \times \mbb{P}_{(2,3,1)}$, we would lift the rays of the base-space analogously, putting zeroes in the final two entries. The non-trivial fibering is achieved by using the non-zero entries
\be
\vec{b} = (b_1,b_2,2,3) \,,
\ee
for any ray $(b_1,b_2)$ of the base-space fan. The projection of the four-fold to the base corresponds to sending the last two coordinates to zero.

By taking the dot product of the above ray vectors with the vectors $(0,0,1,0)$ and $(0,0,0,1)$, we get the divisor equivalences
\be
D_x = 2 \sum_b D_b + 2 D_z \quad \mathrm{and} \quad D_y = 3 \sum_b D_b + 3 D_z \,.
\ee
where the sums run over all rays of the base fan. These equivalences correspond to the statement that the coordinates $x$ and $y$ are sections of $K_{\base{2}}^{-2}$ and $K_{\base{2}}^{-3}$ respectively. The coordinates of these rays determine the weight system, which tabulates the equivalences of coordinates under the various projective scalings. The rows correspond to linear combinations of the rays that sum to zero.
\be
\hspace{2cm}
\begin{tabular}{ C C C C }
b_i 		& x 				& y 				& z		\\ \hline
0		& 2				& 3				& 1		\\
s^1_i	& 2\sum_i s^1_i	& 3\sum_i s^1_i	& 0		\\
\vdots	& \vdots			& \vdots			& \vdots 	\\
\end{tabular}
\hspace{1cm}
\begin{tabular}{ C  }
\sum			\\ \hline
6				\\
6\sum_i s^1_i		\\
\vdots 			\\
\end{tabular}
\ee
Here $b_i$ are the coordinates on the base, and for each $a$, $(s^a_i)$ is a row in the weight system of the base. We have also included on the right the sum of each row, which is the scaling of sections of the anti-canonical bundle of the toric four-fold.

In addition to the rays, one must specify the triangulation of the resulting polytope, i.e.\ the top-dimensional cones of the fan. To describe a fiber bundle, the triangulation should be taken to be simply the product triangulation, i.e.\ the same triangulation as for the fan of the product space $\base{2} \times \mbb{P}_{(2,3,1)}$.
\vspace{12pt}

The elliptic Calabi-Yau is then described as a hypersurface inside this toric ambient space by a Weierstrass equation. One can check that the Weierstrass polynomial is a section of the anti-canonical bundle of the ambient space, so that the three-fold is indeed Calabi-Yau. This corresponds to each monomial having the scaling of the sum of all the columns in the weight system above.

\smlhdg{Example}

\noindent The simplest toric base space is $\mbb{P}^2$. Writing $u$, $v$, and $w$ for the homogeneous coordinates, the rays of the toric four-fold are
\be
\vec{u} = (-1,0,2,3) \,, \quad \vec{v} = (0,-1,2,3) \,, \quad \vec{w} = (1,1,2,3) \,,
\ee
as well as the three rays in Equation~\eqref{eq:fiber_rays}. The weight system is
\be
\begin{tabular}{ C C C C C C }
u 			& v 				& w 				& x 				& y 				& z		\\ \hline
0			& 0				& 0				& 2				& 3				& 1		\\
1			& 1				& 1				& 6				& 9				& 0		\\
\end{tabular}
\ee
The top-dimensional cones have generators given by the product cones. Schematically,
\be
\begin{gathered}
\{ \langle c_1, c_2 \rangle \,|\, c_1 \in \mathrm{Cones}_2(\mbb{P}^2) \,\mathrm{and}\, c_2 \in \mathrm{Cones}_2(\mbb{P}_{(2,3,1)}) \} \,, \\
\mathrm{where}~
\mathrm{Cones}_2(\mbb{P}^2) =  \{ \langle \vec{u} \,, \vec{v} \rangle \,, \langle \vec{u} \,, \vec{w} \rangle \,, \langle \vec{v} \,, \vec{w} \rangle \}
~ \mathrm{and}~ 
\mathrm{Cones}_2(\mbb{P}_{(2,3,1)}) = \{ \langle \vec{x} \,, \vec{y} \rangle \,, \langle \vec{x} \,, \vec{z} \rangle \,, \langle \vec{y} \,, \vec{z} \rangle \} \,.
\end{gathered}
\ee
The polynomials $f$ and $g$ in the Weierstrass equation are in this case homogeneous polynomials of degree 12 and 18 respectively in $u$, $v$, and $w$.

%%%
%%%%%
%% SECTION
%%%%%
%%%

\newpage
\section{The sixteen reflexive polytopes}
\label{app:tor_surf_dat}

\noindent A set of toric surfaces with many applications in string theory are the 16 Gorenstein Fano toric surfaces. We also frequently take these as examples in the main text. In this appendix we collect for each of these surfaces the properties required to determine the Zariski chambers. These properties can be straightforwardly determined using the methods in described in Appendix~\ref{app:tor_surf}.

The ray diagrams of the 16 Gorenstein Fano toric surfaces are given by the 16 reflexive polytopes in \fref{fig:refl_poly}. The rank $\rho(\surf)$ of the Picard group in each case is as follows.
\be
\begin{array}{c | c c c c c c c}
\rho(S) & 1 & 2 & 3 & 4 & 5 & 6 & 7
\\ \hline
S&~F_1~&~F_2\,,F_3\,,F_4~&~F_5\,,F_6~&~F_7\,,F_8\,,F_9\,,F_{10}~&~F_{11}\,,F_{12}~&~F_{13}\,,F_{14}\,,F_{15}~&~F_{16}~
 \end{array}
\ee
Several of these spaces are isomorphic to Hirzebruch or ordinary del Pezzo surfaces. Specifically
\be
F_1 = \mbb{P}^2 \,, \quad F_2 = \mbb{P}^1\times\mbb{P}^1 \,, \quad F_3 = \mathrm{dP}_1 = \mbb{F}_1 \,, \quad F_4 = \mbb{F}_2 \,, \quad F_5 = \mathrm{dP}_2 \,, \quad F_7 = \mathrm{dP}_3 \,,
\ee
and, further, all of the others can be seen as blow-ups of Hirzebruch surfaces. Below we skip the spaces corresponding to the first two reflexive polytopes, which are isomorphic to $\mbb{P}^2$ and $\mbb{P}^1 \times \mbb{P}^1$ and so are trivial.

\begin{figure}[H]
\begin{center}
\raisebox{0in}{\includegraphics[width=10cm]{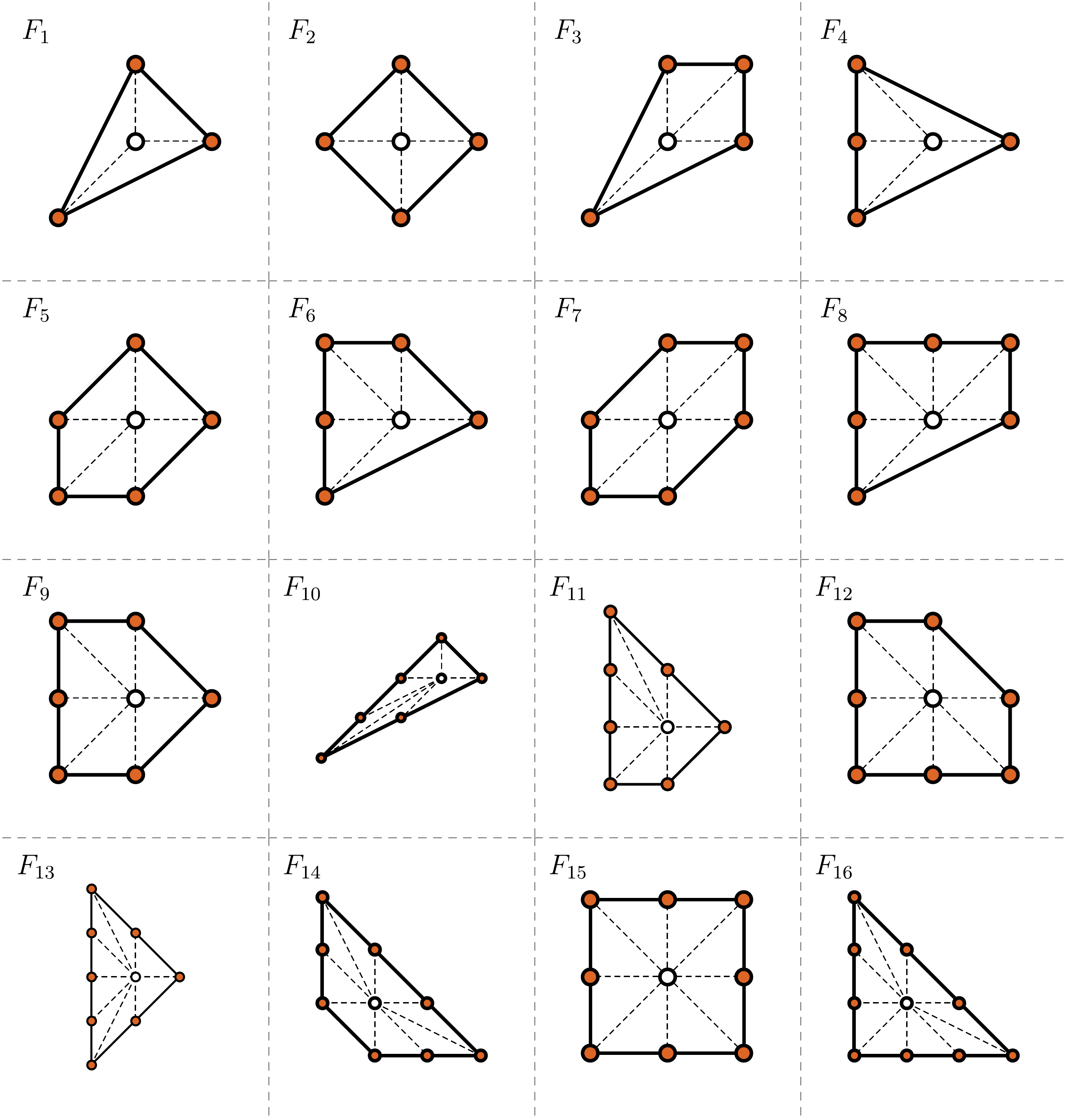}} 
\capt{5.8in}{fig:refl_poly}{The 16 equivalence classes of reflexive lattice polygons in $\mathbb R^2$}
\vspace{-12pt}
\end{center}
\end{figure}

\smlhdg{Data for $F_{3}$}

\noindent The ray diagram for $F_{3}$ has 4 rays, hence its Picard number is $2$. We can use 2 of the toric divisors as a basis for the Picard lattice, and the remaining toric divisors can be expressed in this basis. The choice we use is
\begin{longequation}
\begin{gathered}
\vspace{.3cm}
\begin{tabular} { C  C  C  C } 
\vec{v}_{1}=(0, 1)&\vec{v}_{2}=(1, 1)&\vec{v}_{3}=(1, 0)&\vec{v}_{4}=(-1, -1)\\ 
D_{1}=(1, 0)&D_{2}=(0, 1)&D_{3}=(1, 0)&D_{4}=(1, 1)\\ 
\end{tabular}\\ 
\end{gathered}
\end{longequation}
In this basis the intersection form and the anti-canonical divisor are given by
\begin{equation*}
G:=(D_i\cdot D_j)=
\begin{pmatrix}
0 & ~~1 \\
1 & -1 \\
\end{pmatrix}
\\,\quad-K= \sum_i D_i= (3 , 2)\,.
\end{equation*}
The Mori cone generators $\moricn_i$ and the nef cone generators $\nefcn_j$, which satisfy $\moricn_i \cdot \nefcn_j = \delta_{ij}$,  are given by
\begin{equation*}
\{\moricn_i\} = \{(1, 0)\,, ~(0, 1)\}\,, \quad
\{\nefcn_j\} = \{(1, 1)\,, ~(1, 0)\}\,.
\end{equation*}
There is a single irreducible rigid divisor, namely $\moricn_2$.

\smlhdg{Data for $F_{4}$}

\noindent The ray diagram for $F_{4}$ has 4 rays, hence its Picard number is $2$. We can use 2 of the toric divisors as a basis for the Picard lattice, and the remaining toric divisors can be expressed in this basis. The choice we use is
\begin{longequation}
\begin{gathered}
\vspace{.3cm}
\begin{tabular} { C  C  C  C } 
\vec{v}_{1}=(-1, 0)&\vec{v}_{2}=(-1, 1)&\vec{v}_{3}=(1, 0)&\vec{v}_{4}=(-1, -1)\\ 
D_{1}=(1, 0)&D_{2}=(0, 1)&D_{3}=(1, 2)&D_{4}=(0, 1)\\ 
\end{tabular}\\ 
\end{gathered}
\end{longequation}
In this basis the intersection form and the anti-canonical divisor are given by
\begin{equation*}
G:=(D_i\cdot D_j)=
\begin{pmatrix}
-2 & 1 \\
~~1 & 0 \\
\end{pmatrix}
\\,\quad-K= \sum_i D_i= (2 , 4)\,.
\end{equation*}
The Mori cone generators $\moricn_i$ and the nef cone generators $\nefcn_j$, which satisfy $\moricn_i \cdot \nefcn_j = \delta_{ij}$,  are given by
\begin{equation*}
\{\moricn_i\} = \{(1, 0)\,, ~(0, 1)\}\,, \quad 
\{\nefcn_j\} = \{(0, 1)\,, ~(1, 2)\}\,.
\end{equation*}
There is a single irreducible rigid divisor, namely $\moricn_1$.

\smlhdg{Data for $F_{5}$}

\noindent The ray diagram for $F_{5}$ has 5 rays, hence its Picard number is $3$. We can use 3 of the toric divisors as a basis for the Picard lattice, and the remaining toric divisors can be expressed in this basis. The choice we use is
\begin{longequation}
\begin{gathered}
\vspace{.3cm}
\begin{tabular} { C  C  C  C  C } 
\vec{v}_{1}=(-1, 0)&\vec{v}_{2}=(0, 1)&\vec{v}_{3}=(1, 0)&\vec{v}_{4}=(0, -1)&\vec{v}_{5}=(-1, -1)\\ 
D_{1}=(1, 0, 0)&D_{2}=(0, 1, 0)&D_{3}=(0, 0, 1)&D_{4}=(1, 1, -1)&D_{5}=(-1, 0, 1)\\ 
\end{tabular}
\end{gathered}
\end{longequation}
In this basis the intersection form and the anti-canonical divisor are given by
\begin{equation*}
G:=(D_i\cdot D_j)=
\begin{pmatrix}
-1 & 1 & 0 \\
~~1 & 0 & 1 \\
~~0 & 1 & 0 \\
\end{pmatrix}
\\,\quad-K= \sum_i D_i= (1 , 2 , 1)\,.
\end{equation*}
The Mori cone generators $\moricn_i$ and the nef cone generators $\nefcn_j$, which satisfy $\moricn_i \cdot \nefcn_j = \delta_{ij}$,  are given by
\begin{equation*}
\{\moricn_i\} = \{(1, 0, 0)\,, ~(1, 1, -1)\,, ~(-1, 0, 1)\}\,, \quad
\{\nefcn_j\} = \{(0, 1, 0)\,, ~(0, 0, 1)\,, ~(1, 1, 0)\}\,.
\end{equation*}
The irreducible rigid divisors $\indiv_k$ are just the Mori cone generators.

\smlhdg{Data for $F_{6}$}

\noindent The ray diagram for $F_{6}$ has 5 rays, hence its Picard number is $3$. We can use 3 of the toric divisors as a basis for the Picard lattice, and the remaining toric divisors can be expressed in this basis. The choice we use is
\begin{longequation}
\begin{gathered}
\vspace{.3cm}
\begin{tabular} { C  C  C  C  C } 
\vec{v}_{1}=(-1, 0)&\vec{v}_{2}=(-1, 1)&\vec{v}_{3}=(0, 1)&\vec{v}_{4}=(1, 0)&\vec{v}_{5}=(-1, -1)\\ 
D_{1}=(1, 0, 0)&D_{2}=(0, 1, 0)&D_{3}=(0, 0, 1)&D_{4}=(1, 2, 1)&D_{5}=(0, 1, 1)\\ 
\end{tabular}
\end{gathered}
\end{longequation}
In this basis the intersection form and the anti-canonical divisor are given by
\begin{equation*}
G:=(D_i\cdot D_j)=
\begin{pmatrix}
-2 & ~~1 &~~ 0~ \\
~~1 & -1 & ~~1~ \\
~~0 & ~~1 & -1~ \\
\end{pmatrix}
\\,\quad-K= \sum_i D_i= (2 , 4 , 3)\,.
\end{equation*}
The Mori cone generators $\moricn_i$ and the nef cone generators $\nefcn_j$, which satisfy $\moricn_i \cdot \nefcn_j = \delta_{ij}$,  are given by
\begin{equation*}
\{\moricn_i\} = \{(1, 0, 0)\,, ~(0, 1, 0)\,, ~(0, 0, 1)\}\,, \quad
\{\nefcn_j\} = \{(0, 1, 1)\,, ~(1, 2, 2)\,, ~(1, 2, 1)\}\,.
\end{equation*}
The irreducible rigid divisors $\indiv_k$ are just the Mori cone generators.

\smlhdg{Data for $F_{7}$}

\noindent The ray diagram for $F_{7}$ has 6 rays, hence its Picard number is $4$. We can use 4 of the toric divisors as a basis for the Picard lattice, and the remaining toric divisors can be expressed in this basis. The choice we use is
\begin{longequation}
\begin{gathered}
\vspace{.3cm}
\begin{tabular} { C  C  C  C  C  C } 
\vec{v}_{1}=(-1, 0)&\vec{v}_{2}=(0, 1)&\vec{v}_{3}=(1, 1)&\vec{v}_{4}=(1, 0)&\vec{v}_{5}=(0, -1)&\vec{v}_{6}=(-1, -1)\\ 
D_{1}=(1, 0, 0, 0)&D_{2}=(0, 1, 0, 0)&D_{3}=(0, 0, 1, 0)&D_{4}=(0, 0, 0, 1)&D_{5}=(1, 1, 0, -1)&D_{6}=(-1, 0, 1, 1)\\ 
\end{tabular}
\end{gathered}
\end{longequation}
In this basis the intersection form and the anti-canonical divisor are given by
\begin{equation*}
G:=(D_i\cdot D_j)=
\begin{pmatrix}
-1 & 1 & 0 & 0 \\
1 & -1 & 1 & 0 \\
0 & 1 & -1 & 1 \\
0 & 0 & 1 & -1 \\
\end{pmatrix}
\\,\quad-K= \sum_i D_i= (1 , 2 , 2 , 1)\,.
\end{equation*}
The Mori cone generators $\moricn_i$ and the nef cone generators $\nefcn_j$ are given by
\begin{equation*}
\begin{gathered}
\{\moricn_i\} = \{(1, 0, 0, 0)\,, ~(0, 1, 0, 0)\,, ~(0, 0, 1, 0)\,, ~(0, 0, 0, 1)\,,  ~(1, 1, 0, -1)\,, ~(-1, 0, 1, 1)\}\,,
\\
\{\nefcn_j\} = \{(1, 1, 0, 0)\,, ~(0, 0, 1, 1)\,, ~(1, 1, 1, 0)\,, ~(0, 1, 1, 0)\,, ~(0, 1, 1, 1)\}\,.
\end{gathered}
\end{equation*}
The irreducible rigid divisors $\indiv_k$ are just the Mori cone generators.

\smlhdg{Data for $F_{8}$}

\noindent The ray diagram for $F_{8}$ has 6 rays, hence its Picard number is $4$. We can use 4 of the toric divisors as a basis for the Picard lattice, and the remaining toric divisors can be expressed in this basis. The choice we use is
\begin{longequation}
\begin{gathered}
\vspace{.3cm}
\begin{tabular} { C  C  C  C  C  C } 
\vec{v}_{1}=(-1, 0)&\vec{v}_{2}=(-1, 1)&\vec{v}_{3}=(0, 1)&\vec{v}_{4}=(1, 1)&\vec{v}_{5}=(1, 0)&\vec{v}_{6}=(-1, -1)\\ 
D_{1}=(1, 0, 0, 0)&D_{2}=(0, 1, 0, 0)&D_{3}=(0, 0, 1, 0)&D_{4}=(0, 0, 0, 1)&D_{5}=(1, 2, 1, 0)&D_{6}=(0, 1, 1, 1)\\ 
\end{tabular}
\end{gathered}
\end{longequation}
In this basis the intersection form and the anti-canonical divisor are given by
\begin{equation*}
G:=(D_i\cdot D_j)=
\begin{pmatrix}
-2 & 1 & 0 & 0 \\
1 & -1 & 1 & 0 \\
0 & 1 & -2 & 1 \\
0 & 0 & 1 & -1 \\
\end{pmatrix}
\\,\quad-K= \sum_i D_i= (2 , 4 , 3 , 2)\,.
\end{equation*}
The Mori cone generators $\moricn_i$ and the nef cone generators $\nefcn_j$, which satisfy $\moricn_i \cdot \nefcn_j = \delta_{ij}$,  are given by
\begin{equation*}
\begin{gathered}
\{\moricn_i\} = \{(1, 0, 0, 0)\,, ~(0, 1, 0, 0)\,, ~(0, 0, 1, 0)\,, ~(0, 0, 0, 1)\}\,, \\
\{\nefcn_j\} = \{(0, 1, 1, 1)\,, ~(1, 2, 2, 2)\,, ~(1, 2, 1, 1)\,, ~(1, 2, 1, 0)\}\,.
\end{gathered}
\end{equation*}
The irreducible rigid divisors $\indiv_k$ are just the Mori cone generators.

\smlhdg{Data for $F_{9}$}

\noindent The ray diagram for $F_{9}$ has 6 rays, hence its Picard number is $4$. We can use 4 of the toric divisors as a basis for the Picard lattice, and the remaining toric divisors can be expressed in this basis. The choice we use is
\begin{longequation}
\begin{gathered}
\vspace{.3cm}
\begin{tabular} { C  C  C  C  C  C } 
\vec{v}_{1}=(-1, 0)&\vec{v}_{2}=(-1, 1)&\vec{v}_{3}=(0, 1)&\vec{v}_{4}=(1, 0)&\vec{v}_{5}=(0, -1)&\vec{v}_{6}=(-1, -1)\\ 
D_{1}=(1, 0, 0, 0)&D_{2}=(0, 1, 0, 0)&D_{3}=(0, 0, 1, 0)&D_{4}=(0, 0, 0, 1)&D_{5}=(1, 2, 1, -1)&D_{6}=(-1, -1, 0, 1)\\ 
\end{tabular}
\end{gathered}
\vspace{-12pt}
\end{longequation}
In this basis the intersection form and the anti-canonical divisor are given by
\begin{equation*}
G:=(D_i\cdot D_j)=
\begin{pmatrix}
-2 & 1 & 0 & 0 \\
1 & -1 & 1 & 0 \\
0 & 1 & -1 & 1 \\
0 & 0 & 1 & 0 \\
\end{pmatrix}
\\,\quad-K= \sum_i D_i= (1 , 2 , 2 , 1)\,.
\end{equation*}
The Mori cone generators $\moricn_i$ and the nef cone generators $\nefcn_j$ are given by
\vspace{-5pt}
\begin{equation*}
\begin{gathered}
\{\moricn_i\} = \{(1, 0, 0, 0)\,, ~(0, 1, 0, 0)\,, ~(0, 0, 1, 0)\,, ~(1, 2, 1, -1)\,, ~(-1, -1, 0, 1)\}\,, \\
\{\nefcn_j\} = \{(1, 2, 1, 0)\,, ~(0, 0, 0, 1)\,, ~(1, 2, 2, 0)\,, ~(0, 0, 1, 1)\,, ~(0, 1, 1, 0)\}\,.
\end{gathered}
\vspace{-5pt}
\end{equation*}
The irreducible rigid divisors $\indiv_k$ are just the Mori cone generators.

\smlhdg{Data for $F_{10}$}

\noindent The ray diagram for $F_{10}$ has 6 rays, hence its Picard number is $4$. We can use 4 of the toric divisors as a basis for the Picard lattice, and the remaining toric divisors can be expressed in this basis. The choice we use is
\begin{longequation}
\begin{gathered}
\vspace{.3cm}
\begin{tabular} { C  C  C  C  C  C } 
\vec{v}_{1}=(-1, 0)&\vec{v}_{2}=(0, 1)&\vec{v}_{3}=(1, 0)&\vec{v}_{4}=(-1, -1)&\vec{v}_{5}=(-3, -2)&\vec{v}_{6}=(-2, -1)\\ 
D_{1}=(1, 0, 0, 0)&D_{2}=(0, 1, 0, 0)&D_{3}=(0, 0, 1, 0)&D_{4}=(0, 0, 0, 1)&D_{5}=(1, 2, -1, -1)&D_{6}=(-2, -3, 2, 1)\\ 
\end{tabular}
\end{gathered}
\vspace{-12pt}
\end{longequation}
In this basis the intersection form and the anti-canonical divisor are given by
\begin{equation*}
G:=(D_i\cdot D_j)=
\begin{pmatrix}
-2 & 1 & 0 & 0 \\
1 & 0 & 1 & 0 \\
0 & 1 & 1 & 1 \\
0 & 0 & 1 & -2 \\
\end{pmatrix}
\\,\quad-K= \sum_i D_i= (0 , 0 , 2 , 1)\,.
\end{equation*}
The Mori cone generators $\moricn_i$ and the nef cone generators $\nefcn_j$, which satisfy $\moricn_i \cdot \nefcn_j = \delta_{ij}$,  are given by
\vspace{-5pt}
\begin{equation*}
\begin{gathered}
\{\moricn_i\} = \{(1, 0, 0, 0)\,, ~(0, 0, 0, 1)\,, ~(1, 2, -1, -1)\,, ~(-2, -3, 2, 1)\}\,, \\
\{\nefcn_j\} = \{(0, 1, 0, 0)\,, ~(0, 0, 1, 0)\,, ~(0, 0, 2, 1)\,, ~(1, 2, 0, 0)\}\,.
\end{gathered}
\vspace{-5pt}
\end{equation*}
The irreducible rigid divisors $\indiv_k$ are just the Mori cone generators.

\smlhdg{Data for $F_{11}$}

\noindent The ray diagram for $F_{11}$ has 7 rays, hence its Picard number is $5$. We can use 5 of the toric divisors as a basis for the Picard lattice, and the remaining toric divisors can be expressed in this basis. The choice we use is
\begin{longequation}
\begin{gathered}
\vspace{.3cm}
\begin{tabular} { C  C  C  C } 
\vec{v}_{1}=(-1, 0)&\vec{v}_{2}=(-1, 1)&\vec{v}_{3}=(-1, 2)&\vec{v}_{4}=(0, 1)\\ 
D_{1}=(1, 0, 0, 0, 0)&D_{2}=(0, 1, 0, 0, 0)&D_{3}=(0, 0, 1, 0, 0)&D_{4}=(0, 0, 0, 1, 0)\\ 
\end{tabular}\\ 
\vspace{.3cm}
\begin{tabular} { C  C  C } 
\vec{v}_{5}=(1, 0)&\vec{v}_{6}=(0, -1)&\vec{v}_{7}=(-1, -1)\\ 
D_{5}=(0, 0, 0, 0, 1)&D_{6}=(1, 2, 3, 1, -1)&D_{7}=(-1, -1, -1, 0, 1)\\ 
\end{tabular}\\ 
\end{gathered}
\end{longequation}
In this basis the intersection form and the anti-canonical divisor are given by
\begin{equation*}
G:=(D_i\cdot D_j)=
\begin{pmatrix}
-2 & 1 & 0 & 0 & 0 \\
1 & -2 & 1 & 0 & 0 \\
0 & 1 & -1 & 1 & 0 \\
0 & 0 & 1 & -2 & 1 \\
0 & 0 & 0 & 1 & 0 \\
\end{pmatrix}
\\,\quad-K= \sum_i D_i= (1 , 2 , 3 , 2 , 1)\,.
\end{equation*}
The Mori cone generators $\moricn_i$ and the nef cone generators $\nefcn_j$ are given by
\begin{longequation}
\begin{gathered}
\{\moricn_i\} = \{(1, 0, 0, 0, 0)\,, ~(0, 1, 0, 0, 0)\,, ~(0, 0, 1, 0, 0)\,, ~(0, 0, 0, 1, 0)\,, ~(1, 2, 3, 1, -1)\,, ~(-1, -1, -1, 0, 1)\}\,, \\
\begin{aligned}
\{\nefcn_j\} = \;&\{(1, 2, 3, 1, 0)\,, ~(0, 0, 0, 0, 1)\,, ~(2, 4, 6, 3, 0)\,, ~(0, 0, 0, 1, 2)\,, ~ (1, 2, 4, 2, 0)\,, ~(0, 0, 1, 1, 1)\,, \\ &~(0, 1, 2, 1, 0)\,, ~(0, 2, 4, 2, 0)\}\,.
\end{aligned}
\end{gathered}
\end{longequation}
The irreducible rigid divisors $\indiv_k$ are just the Mori cone generators.

\smlhdg{Data for $F_{12}$}

\noindent The ray diagram for $F_{12}$ has 7 rays, hence its Picard number is $5$. We can use 5 of the toric divisors as a basis for the Picard lattice, and the remaining toric divisors can be expressed in this basis. The choice we use is
\begin{longequation}
\begin{gathered}
\vspace{.3cm}
\begin{tabular} { C  C  C  C } 
\vec{v}_{1}=(-1, 0)&\vec{v}_{2}=(-1, 1)&\vec{v}_{3}=(0, 1)&\vec{v}_{4}=(1, 0)\\ 
D_{1}=(1, 0, 0, 0, 0)&D_{2}=(0, 1, 0, 0, 0)&D_{3}=(0, 0, 1, 0, 0)&D_{4}=(0, 0, 0, 1, 0)\\ 
\end{tabular}\\ 
\vspace{.3cm}
\begin{tabular} { C  C  C } 
\vec{v}_{5}=(1, -1)&\vec{v}_{6}=(0, -1)&\vec{v}_{7}=(-1, -1)\\ 
D_{5}=(0, 0, 0, 0, 1)&D_{6}=(1, 2, 1, -1, -2)&D_{7}=(-1, -1, 0, 1, 1)\\ 
\end{tabular}\\ 
\end{gathered}
\end{longequation}
In this basis the intersection form and the anti-canonical divisor are given by
\begin{equation*}
G:=(D_i\cdot D_j)=
\begin{pmatrix}
-2 & 1 & 0 & 0 & 0 \\
1 & -1 & 1 & 0 & 0 \\
0 & 1 & -1 & 1 & 0 \\
0 & 0 & 1 & -1 & 1 \\
0 & 0 & 0 & 1 & -1 \\
\end{pmatrix}
\\,\quad-K= \sum_i D_i= (1 , 2 , 2 , 1 , 0)\,.
\end{equation*}
The Mori cone generators $\moricn_i$ and the nef cone generators $\nefcn_j$ are given by
\begin{equation*}
\begin{gathered}
\begin{aligned}
\{\moricn_i\} &= \{(1, 0, 0, 0, 0)\,, ~(0, 1, 0, 0, 0)\,, ~(0, 0, 1, 0, 0)\,, ~(0, 0, 0, 1, 0)\,, ~(0, 0, 0, 0, 1)\,, ~ \\&~~~~ (1, 2, 1, -1, -2)\,, ~(-1, -1, 0, 1, 1)\}\,,
\end{aligned} \\
\begin{aligned}
\{\nefcn_j\} &= \{(0, 0, 0, 1, 1)\,, ~(1, 2, 1, 1, 0)\,, ~(1, 2, 1, 0, 0)\,, ~(0, 0, 2, 2, 0)\,, ~(0, 0, 1, 1, 0)\,, ~ \\&~~~~ (1, 2, 2, 0, 0)\,, ~(0, 0, 1, 1, 1)\,, ~(0, 1, 1, 0, 0)\,, ~(0, 1, 1, 1, 0)\}\,.
\end{aligned}
\end{gathered}
\end{equation*}
The irreducible rigid divisors $\indiv_k$ are just the Mori cone generators.

\smlhdg{Data for $F_{13}$}

\noindent The ray diagram for $F_{13}$ has 8 rays, hence its Picard number is $6$. We can use 6 of the toric divisors as a basis for the Picard lattice, and the remaining toric divisors can be expressed in this basis. The choice we use is
\begin{longequation}
\begin{gathered}
\vspace{.3cm}
\begin{tabular} { C  C  C  C } 
\vec{v}_{1}=(-1, 0)&\vec{v}_{2}=(-1, 1)&\vec{v}_{3}=(-1, 2)&\vec{v}_{4}=(0, 1)\\ 
D_{1}=(1, 0, 0, 0, 0, 0)&D_{2}=(0, 1, 0, 0, 0, 0)&D_{3}=(0, 0, 1, 0, 0, 0)&D_{4}=(0, 0, 0, 1, 0, 0)\\ 
\end{tabular}\\ 
\vspace{.3cm}
\begin{tabular} { C  C  C  C } 
\vec{v}_{5}=(1, 0)&\vec{v}_{6}=(0, -1)&\vec{v}_{7}=(-1, -2)&\vec{v}_{8}=(-1, -1)\\ 
D_{5}=(0, 0, 0, 0, 1, 0)&D_{6}=(0, 0, 0, 0, 0, 1)&D_{7}=(1, 2, 3, 1, -1, -1)&D_{8}=(-2, -3, -4, -1, 2, 1)\\ 
\end{tabular}\\ 
\end{gathered}
\end{longequation}
In this basis the intersection form and the anti-canonical divisor are given by
\begin{equation*}
G:=(D_i\cdot D_j)=
\begin{pmatrix}
-2 & 1 & 0 & 0 & 0 & 0 \\
1 & -2 & 1 & 0 & 0 & 0 \\
0 & 1 & -1 & 1 & 0 & 0 \\
0 & 0 & 1 & -2 & 1 & 0 \\
0 & 0 & 0 & 1 & 0 & 1 \\
0 & 0 & 0 & 0 & 1 & -2 \\
\end{pmatrix}
\\,\quad-K= \sum_i D_i= (0 , 0 , 0 , 1 , 2 , 1)\,.
\end{equation*}
The Mori cone generators $\moricn_i$ and the nef cone generators $\nefcn_j$ are given by
\begin{equation*}
\begin{gathered}
\begin{aligned}
\{\moricn_i\} &= \{(1, 0, 0, 0, 0, 0)\,, ~(0, 1, 0, 0, 0, 0)\,, ~(0, 0, 1, 0, 0, 0)\,, ~(0, 0, 0, 1, 0, 0)\,, ~ \\&~~~~ (0, 0, 0, 0, 0, 1)\,, ~(1, 2, 3, 1, -1, -1)\,, ~(-2, -3, -4, -1, 2, 1)\}\,,
\end{aligned}
\\
\begin{aligned}
\{\nefcn_j\} &= \{(1, 2, 3, 1, 0, 0)\,, ~(0, 0, 0, 0, 2, 1)\,, ~(0, 0, 0, 0, 1, 0)\,, ~(2, 4, 6, 3, 0, 0)\,, ~(0, 0, 0, 2, 4, 2)\,, ~(0, 0, 0, 1, 2, 0)\,, ~ \\&~~~~ (1, 2, 4, 2, 0, 0)\,, ~(0, 0, 2, 2, 2, 1)\,, ~(0, 0, 1, 1, 1, 0)\,, ~(0, 1, 2, 1, 0, 0)\,, ~(0, 2, 4, 2, 0, 0)\}\,.
\end{aligned}
\end{gathered}
\end{equation*}
The irreducible rigid divisors $\indiv_k$ are just the Mori cone generators.

\vspace{21pt}
\smlhdg{Data for $F_{14}$}

\noindent The ray diagram for $F_{14}$ has 8 rays, hence its Picard number is $6$. We can use 6 of the toric divisors as a basis for the Picard lattice, and the remaining toric divisors can be expressed in this basis. The choice we use is
\begin{longequation}
\begin{gathered}
\vspace{.3cm}
\begin{tabular} { C  C  C  C } 
\vec{v}_{1}=(-1, 0)&\vec{v}_{2}=(-1, 1)&\vec{v}_{3}=(-1, 2)&\vec{v}_{4}=(0, 1)\\ 
D_{1}=(1, 0, 0, 0, 0, 0)&D_{2}=(0, 1, 0, 0, 0, 0)&D_{3}=(0, 0, 1, 0, 0, 0)&D_{4}=(0, 0, 0, 1, 0, 0)\\ 
\end{tabular}\\ 
\vspace{.3cm}
\begin{tabular} { C  C  C  C } 
\vec{v}_{5}=(1, 0)&\vec{v}_{6}=(2, -1)&\vec{v}_{7}=(1, -1)&\vec{v}_{8}=(0, -1)\\ 
D_{5}=(0, 0, 0, 0, 1, 0)&D_{6}=(0, 0, 0, 0, 0, 1)&D_{7}=(1, 1, 1, 0, -1, -2)&D_{8}=(-1, 0, 1, 1, 1, 1)\\ 
\end{tabular}\\ 
\end{gathered}
\end{longequation}
In this basis the intersection form and the anti-canonical divisor are given by
\begin{equation*}
G:=(D_i\cdot D_j)=
\begin{pmatrix}
-1 & 1 & 0 & 0 & 0 & 0 \\
1 & -2 & 1 & 0 & 0 & 0 \\
0 & 1 & -1 & 1 & 0 & 0 \\
0 & 0 & 1 & -2 & 1 & 0 \\
0 & 0 & 0 & 1 & -2 & 1 \\
0 & 0 & 0 & 0 & 1 & -1 \\
\end{pmatrix}
\\,\quad-K= \sum_i D_i= (1 , 2 , 3 , 2 , 1 , 0)\,.
\end{equation*}
The Mori cone generators $\moricn_i$ and the nef cone generators $\nefcn_j$ are given by
\begin{equation*}
\begin{gathered}
\begin{aligned}
\{\moricn_i\} &= \{(1, 0, 0, 0, 0, 0)\,, ~(0, 1, 0, 0, 0, 0)\,, ~(0, 0, 1, 0, 0, 0)\,, ~(0, 0, 0, 1, 0, 0)\,, ~ \\&~~~~ (0, 0, 0, 0, 1, 0)\,, ~(0, 0, 0, 0, 0, 1)\,, ~(1, 1, 1, 0, -1, -2)\,, ~(-1, 0, 1, 1, 1, 1)\}\,,
\end{aligned}
\\
\begin{aligned}
\{\nefcn_j\} &= \{(1, 1, 1, 0, 0, 0)\,, ~(2, 2, 2, 0, 0, 0)\,, ~(1, 1, 1, 1, 1, 1)\,, ~(3, 3, 3, 2, 1, 0)\,, ~(2, 2, 2, 1, 0, 0)\,, ~ \\&~~~~ (0, 0, 1, 1, 1, 1)\,, ~(1, 1, 3, 2, 1, 0)\,, ~(1, 1, 2, 1, 0, 0)\,, ~(0, 1, 2, 1, 0, 0)\,, ~(0, 1, 2, 1, 1, 1)\,, ~ \\&~~~~ (0, 2, 4, 2, 1, 0)\,, ~(0, 1, 2, 2, 2, 2)\,, ~(0, 3, 6, 4, 2, 0)\,, ~(0, 2, 4, 2, 0, 0)\,, ~(0, 1, 3, 2, 1, 0)\}\,.
\end{aligned}
\end{gathered}
\end{equation*}
The irreducible rigid divisors $\indiv_k$ are just the Mori cone generators.

\smlhdg{Data for $F_{15}$}

\noindent The ray diagram for $F_{15}$ has 8 rays, hence its Picard number is $6$. We can use 6 of the toric divisors as a basis for the Picard lattice, and the remaining toric divisors can be expressed in this basis. The choice we use is
\begin{longequation}
\begin{gathered}
\vspace{.3cm}
\begin{tabular} { C  C  C  C } 
\vec{v}_{1}=(-1, 0)&\vec{v}_{2}=(-1, 1)&\vec{v}_{3}=(0, 1)&\vec{v}_{4}=(1, 1)\\ 
D_{1}=(1, 0, 0, 0, 0, 0)&D_{2}=(0, 1, 0, 0, 0, 0)&D_{3}=(0, 0, 1, 0, 0, 0)&D_{4}=(0, 0, 0, 1, 0, 0)\\ 
\end{tabular}\\ 
\vspace{.3cm}
\begin{tabular} { C  C  C  C } 
\vec{v}_{5}=(1, 0)&\vec{v}_{6}=(1, -1)&\vec{v}_{7}=(0, -1)&\vec{v}_{8}=(-1, -1)\\ 
D_{5}=(0, 0, 0, 0, 1, 0)&D_{6}=(0, 0, 0, 0, 0, 1)&D_{7}=(1, 2, 1, 0, -1, -2)&D_{8}=(-1, -1, 0, 1, 1, 1)\\ 
\end{tabular}\\ 
\end{gathered}
\end{longequation}
In this basis the intersection form and the anti-canonical divisor are given by
\begin{equation*}
G:=(D_i\cdot D_j)=
\begin{pmatrix}
-2 & 1 & 0 & 0 & 0 & 0 \\
1 & -1 & 1 & 0 & 0 & 0 \\
0 & 1 & -2 & 1 & 0 & 0 \\
0 & 0 & 1 & -1 & 1 & 0 \\
0 & 0 & 0 & 1 & -2 & 1 \\
0 & 0 & 0 & 0 & 1 & -1 \\
\end{pmatrix}
\\,\quad-K= \sum_i D_i= (1 , 2 , 2 , 2 , 1 , 0)\,.
\end{equation*}
The Mori cone generators $\moricn_i$ and the nef cone generators $\nefcn_j$ are given by
\begin{equation*}
\begin{gathered}
\begin{aligned}
\{\moricn_i\} &= \{(1, 0, 0, 0, 0, 0)\,, ~(0, 1, 0, 0, 0, 0)\,, ~(0, 0, 1, 0, 0, 0)\,, ~(0, 0, 0, 1, 0, 0)\,, ~ \\&~~~~ (0, 0, 0, 0, 1, 0)\,, ~(0, 0, 0, 0, 0, 1)\,, ~(1, 2, 1, 0, -1, -2)\,, ~(-1, -1, 0, 1, 1, 1)\}\,,
\end{aligned}
\\
\begin{aligned}
\{\nefcn_j\} &= \{(1, 2, 1, 0, 0, 0)\,, ~(2, 4, 2, 0, 0, 0)\,, ~(0, 0, 0, 1, 1, 1)\,, ~(1, 2, 1, 2, 1, 0)\,, ~(1, 2, 1, 1, 0, 0)\,, ~ \\&~~~~ (0, 0, 2, 4, 2, 0)\,, ~(0, 0, 1, 2, 1, 0)\,, ~(1, 2, 2, 2, 0, 0)\,, ~(0, 0, 1, 2, 1, 1)\,, ~(0, 0, 1, 2, 2, 2)\,, ~ \\&~~~~ (0, 1, 1, 1, 0, 0)\,, ~(0, 1, 1, 1, 1, 1)\,, ~(0, 2, 2, 2, 1, 0)\,, ~(0, 1, 1, 2, 1, 0)\}\,.
\end{aligned}
\end{gathered}
\end{equation*}
The irreducible rigid divisors $\indiv_k$ are just the Mori cone generators.

\smlhdg{Data for $F_{16}$}

\noindent The ray diagram for $F_{16}$ has 9 rays, hence its Picard number is $7$. We can use 7 of the toric divisors as a basis for the Picard lattice, and the remaining toric divisors can be expressed in this basis. The choice we use is
\begin{longequation}
\begin{gathered}
\vspace{.3cm}
\begin{tabular} { C  C  C  C  C } 
\vec{v}_{1}=(-1, 0)&\vec{v}_{2}=(-1, 1)&\vec{v}_{3}=(-1, 2)&\vec{v}_{4}=(0, 1)&\vec{v}_{5}=(1, 0)\\ 
D_{1}=(1, 0, 0, 0, 0, 0, 0)&D_{2}=(0, 1, 0, 0, 0, 0, 0)&D_{3}=(0, 0, 1, 0, 0, 0, 0)&D_{4}=(0, 0, 0, 1, 0, 0, 0)&D_{5}=(0, 0, 0, 0, 1, 0, 0)\\ 
\end{tabular}\\ 
\vspace{.3cm}
\begin{tabular} { C  C  C  C  C } 
\vec{v}_{6}=(2, -1)&\vec{v}_{7}=(1, -1)&\vec{v}_{8}=(0, -1)&\vec{v}_{9}=(-1, -1)\\ 
D_{6}=(0, 0, 0, 0, 0, 1, 0)&D_{7}=(0, 0, 0, 0, 0, 0, 1)&D_{8}=(1, 2, 3, 1, -1, -3, -2)&D_{9}=(-1, -1, -1, 0, 1, 2, 1)\\ 
\end{tabular}
\end{gathered}
\end{longequation}
In this basis the intersection form and the anti-canonical divisor are given by
\begin{equation*}
G:=(D_i\cdot D_j)=
\begin{pmatrix}
-2 & 1 & 0 & 0 & 0 & 0 & 0 \\
1 & -2 & 1 & 0 & 0 & 0 & 0 \\
0 & 1 & -1 & 1 & 0 & 0 & 0 \\
0 & 0 & 1 & -2 & 1 & 0 & 0 \\
0 & 0 & 0 & 1 & -2 & 1 & 0 \\
0 & 0 & 0 & 0 & 1 & -1 & 1 \\
0 & 0 & 0 & 0 & 0 & 1 & -2 \\
\end{pmatrix}
\\,\quad-K= \sum_i D_i= (1 , 2 , 3 , 2 , 1 , 0 , 0)\,.
\end{equation*}
The Mori cone generators $\moricn_i$ and the nef cone generators $\nefcn_j$ are given by
\begin{equation*}
\begin{gathered}
\begin{aligned}
\{\moricn_i\} &= \{(1, 0, 0, 0, 0, 0, 0)\,, ~(0, 1, 0, 0, 0, 0, 0)\,, ~(0, 0, 1, 0, 0, 0, 0)\,, ~(0, 0, 0, 1, 0, 0, 0)\,, ~(0, 0, 0, 0, 1, 0, 0)\,, ~ \\&~~~~(0, 0, 0, 0, 0, 1, 0)\,, ~(0, 0, 0, 0, 0, 0, 1)\,, ~(1, 2, 3, 1, -1, -3, -2)\,, ~(-1, -1, -1, 0, 1, 2, 1)\}\,,
\end{aligned}
\\
\begin{aligned}
\{\nefcn_j\} &= \{(0, 0, 0, 0, 1, 2, 1)\,, ~(1, 2, 3, 1, 1, 1, 0)\,, ~(2, 4, 6, 2, 1, 0, 0)\,, ~(0, 0, 0, 0, 2, 4, 2)\,, ~(1, 2, 3, 1, 0, 0, 0)\,, ~ \\&~~~~ (1, 2, 3, 3, 3, 3, 0)\,, ~(0, 0, 0, 1, 2, 3, 1)\,, ~(3, 6, 9, 6, 3, 0, 0)\,, ~(0, 0, 0, 2, 4, 6, 3)\,, ~(2, 4, 6, 3, 0, 0, 0)\,, ~ \\&~~~~ (0, 0, 0, 1, 2, 4, 2)\,, ~(0, 0, 2, 2, 2, 2, 0)\,, ~(0, 0, 1, 1, 1, 1, 0)\,, ~(1, 2, 6, 4, 2, 0, 0)\,, ~(0, 0, 2, 2, 2, 2, 1)\,, ~ \\&~~~~ (1, 2, 4, 2, 0, 0, 0)\,, ~(0, 0, 1, 1, 1, 2, 1)\,, ~(0, 1, 2, 1, 0, 0, 0)\,, ~(0, 1, 2, 1, 1, 1, 0)\,, ~(0, 2, 4, 2, 1, 0, 0)\,, ~ \\&~~~~ (0, 1, 2, 2, 2, 2, 0)\,, ~(0, 3, 6, 4, 2, 0, 0)\,, ~(0, 2, 4, 2, 0, 0, 0)\,, ~(0, 1, 3, 2, 1, 0, 0)\}\,.
\end{aligned}
\end{gathered}
\end{equation*}
The irreducible rigid divisors $\indiv_k$ are just the Mori cone generators.

\newpage

\providecommand{\href}[2]{#2}\begingroup\raggedright\endgroup

\end{document}